\newcounter{theorem}
\renewcommand\thetheorem{\arabic{section}.\arabic{theorem}}
\newenvironment{lemma}{\par\medskip\noindent\begingroup{\bf Lemma
             \stepcounter{theorem}\thetheorem.}\ \itshape
             \def\@currentlabel{\thetheorem}}{\endgroup\par\medskip}
\newenvironment{theorem}{\par\medskip\noindent\begingroup{\bf Theorem
             \stepcounter{theorem}\thetheorem.}\ \itshape
             \def\@currentlabel{\thetheorem}}{\endgroup\par\medskip}
\newenvironment{remark}{\par\medskip\noindent\begingroup{\bf Remark
             \stepcounter{theorem}\thetheorem.}\
             \def\@currentlabel{\thetheorem}}{\endgroup\par\medskip}
\newenvironment{proof}{\par\noindent{\bf Proof.}
}{\proofbox\par\medskip}
\newenvironment{example}{\par\noindent{\bf Example.} }{\proofbox\par\medskip}
\def\proofbox{\hfill{\ensuremath\Box}}
\newdimen\LENB \newdimen\LENW \newdimen\THI
\newdimen\LENWH \newdimen\LENTOT \newcount\N
\def\vbrknlnele#1#2#3{
  \LENB=#1pt \LENW=#2pt \THI=#3pt
  \LENWH=\LENW \divide\LENWH by 2
  \LENTOT=\LENB \advance\LENTOT by \LENW
  \vbox to \LENTOT{
    \vbox to \LENWH{}
    \nointerlineskip
    \vbox to \LENB{\hbox to \THI{\vrule width \THI height \LENB}}
    \nointerlineskip
    \vbox to \LENWH{}
  }}
\def\vbrknln#1{
  \N=#1
  \vcenter{
    \vbox{
      \loop\ifnum\N>0
        \vbox to 4pt{\vbrknlnele{2}{2}{0.1}}
        \nointerlineskip
        \advance\N by -1
      \repeat
  }}}
\def\hbrknlnele#1#2#3{
  \LENB=#1pt \LENW=#2pt \THI=#3pt
  \LENTOT=\LENB \advance\LENTOT by \LENW
  \vcenter{
    \vbox to \THI{
      \hbox to \LENTOT{
        \hfil
        \vrule width \LENB height \THI
        \hfil}
  }}}
\def\journal#1&#2,{\begingroup \let\journal=\dummyjournal
               \it #1\unskip~\bf\ignorespaces #2\rm,\endgroup}
\def\dummyjournal{\errmessage{Reference foul up: nested \journal macros}}
\def\eqref#1{(\ref{#1})}
\begin{document}
\title[On the $\tau$-functions of 
the Degasperis-Procesi equation]
  {On the $\tau$-functions of 
the Degasperis-Procesi equation}
\author{Bao-Feng Feng$^1$, Ken-ichi Maruno$^1$ and 
Yasuhiro Ohta$^{2}$ 
}
\address{$^1$~Department of Mathematics,
The University of Texas-Pan American,
Edinburg, TX 78539-2999
}
\address{$^2$~Department of Mathematics,
Kobe University, Rokko, Kobe 657-8501, Japan
}
\eads{\mailto{feng@utpa.edu}, \mailto{kmaruno@utpa.edu} and 
\mailto{ohta@math.kobe-u.ac.jp}} 

\date{\today}
\def\submitto#1{\vspace{28pt plus 10pt minus 18pt}
     \noindent{\small\rm To be submitted to : {\it #1}\par}}

\begin{abstract}
The DP equation is investigated from the point of view of 
determinant-pfaffian identities. 
The reciprocal link 
between the Degasperis-Procesi (DP) equation and 
the pseudo 3-reduction of the $C_{\infty}$ two-dimensional Toda system 
is used to construct 
the $N$-soliton solution of the DP equation. 
The $N$-soliton solution of the DP 
equation is presented in the form of pfaffian through a hodograph 
(reciprocal) transformation.   
The bilinear equations, the identities between determinants and pfaffians, and 
the $\tau$-functions of the DP equation are
 obtained from the pseudo 3-reduction of the $C_{\infty}$ 
two-dimensional Toda system. 
\par
\kern\bigskipamount\noindent
\today
\end{abstract}

\kern-\bigskipamount
\pacs{02.30.Ik, 05.45.Yv}

\submitto{\JPA}

\section{Introduction}
In this article, we investigate the $N$-soliton 
solution of the Degasperis-Procesi (DP) equation~\cite{DP}
\begin{equation}
u_t+3\kappa^3u_x- u_{txx}+4uu_x=3u_xu_{xx}+uu_{xxx}\,,\label{DP-eq}
\end{equation}
which has received much attention in recent years. 
Since its discovery in 1999, 
many interesting mathematical properties of the DP equation have been
found~\cite{DHH,Hone-Wang,Lundmark,Lundmark2,Matsuno-DP1,
Matsuno-DP2,Constantin}. 
Matsuno derived the $N$-soliton solution of the DP equation 
when $\kappa\neq 0$ through the $\tau$-function of the CKP 
hierarchy~
\cite{Matsuno-DP1,Matsuno-DP2}.

In recent years, 
we proposed integrable discretizations of 
various integrable systems such as the Camassa-Holm (CH) equation
~\cite{dCH,dCH2}, 
the short wave limit of the CH equation~\cite{sCH}, 
the short pulse equation~\cite{SP}, 
the WKI elastic beam equation and the Dym equation~\cite{WKI}. 
Since these equations, as well as the DP equation, are
transformed to some integrable systems through hodograph (reciprocal) 
transformations, constructing integrable discretizations is not 
an easy task. 
In our past studies, we constructed the integrable discretizations 
of these equations by using the Hirota's bilinear approach and 
an approach based on discrete differential geometry. 
In our studies, we found the key of 
discretizations for these equations is discretizations of 
hodograph (reciprocal) transformations. 
Recently we found that integrable 
discretizations of some of these systems have a geometric formulation 
which verifies the discrete differential geometric 
meaning of discrete hodograph (reciprocal) 
transformations. 

The objective of this paper is to establish the useful formulation 
which can 
be used in the integrable discretization of the DP equation.   
Although the DP equation is very similar to the CH equation, 
there is an obstacle for constructing a discrete analogue of the DP
equation. Matsuno derived the $N$-soliton solutions of the DP equation 
in the form of pfaffian through the CKP equation, 
but bilinear equations (or pfaffian
identities) of the DP equation is still unclear. To construct 
a discrete analogue of the DP equation through the Hirota's bilinear
approach, we need to understand bilinear identities of pfaffians which 
consist of the DP equation. Moreover, we need to understand clearly 
how to
obtain the DP's $\tau$-functions from the ones of 
the KP hierarchy and the
two-dimensional Toda system. 
Thus we investigate the DP equation from the point of view of 
determinant-pfaffian identities. 
In the same motivation, we recently investigated the reduced Ostrovsky
equation~\cite{Ost}. It is known that the reduced Ostrovsky equation can be
obtained as a short wave limit of the DP equation~\cite{MatsunoPLA}. 

In this paper, we establish 
the reciprocal link between 
the DP equation and the pseudo 3-reduction of the  
$C_{\infty}$ two-dimensional Toda system and investigate 
the relations of $\tau$-functions. 
Using this reciprocal link and the relations of $\tau$-functions, 
we construct the $N$-soliton solution of the DP 
equation in the form of pfaffian.   
The bilinear equations, the identities between determinants 
and pfaffians, and 
the $\tau$-functions of the DP equation are
easily obtained from the pseudo 3-reduction of the 
$C_{\infty}$ two-dimensional Toda system. 

Note that 
Matsuno obtained 
the $N$-soliton solution through the $N$-soliton 
solution of the CKP hierarchy. 
Since the CKP hierarchy and the $C_{\infty}$ 
2D Toda system share the same $\tau$-function, it is natural to 
obtain the same soliton solution by both methods. 
However, by using $C_{\infty}$ 2D Toda system we can 
include a negative time variable in the $\tau$-functions, 
so there is an advantage for  
obtaining the bilinear equations of determinants and 
the relations of determinants and pfaffians. 

\section{The DP equation and the pseudo 3-reduction of the
 $C_{\infty}$ 2D-Toda system}

The two-dimensional Toda (2D-Toda) system of $A_{\infty}$-type, 
which is also called the Toda field equation  or the two-dimensional
Toda lattice, 
is given as follows~\cite{Leznov-Saveliev,Mikhailov1,Mikhailov2,Mikhailov3}:
\begin{equation}
\frac{\partial^2\theta_n}{\partial x_1 \partial x_{-1}}
=-\sum_{m\in \mathbb{Z}}a_{n,m}e^{-\theta_m}\,, 
\quad n\in \mathbb{Z}\label{2dToda}
\end{equation}
where the matrix $A=(a_{n,m})$ is the transpose of the Cartan matrix for
the infinite dimensional Lie algebra $A_{\infty}$, 
which is the infinite tridiagonal matrix
\begin{equation}
A=
\left[
\begin{array}{cccccccc}
\ddots & \ddots & \ddots & & & & & \\
       &  -1 & 2 & -1 & & & & \\
       &     & -1 & 2 & -1 & & & \\
       &     &    &  \ddots & \ddots & \ddots & & \\
       &     &    &         &  -1    &  2    & -1 & \\
      &      &    &         &        &  \ddots & \ddots & \ddots     
\end{array}
\right]\,.\label{cartan-A}
\end{equation}

The $A_\infty$ 2D-Toda system (\ref{2dToda}) with (\ref{cartan-A})
may be written as 
\begin{equation}
\frac{\partial^2\theta_n}{\partial x_1 \partial x_{-1}}
=e^{-\theta_{n-1}}-2e^{-\theta_{n}}+e^{-\theta_{n+1}}\,.  
\end{equation}
The $A_\infty$ 
2D-Toda system (\ref{2dToda}) with (\ref{cartan-A}) is transformed
into 
the bilinear equation
\begin{equation}
-\left(\frac{1}{2}D_{x_1}D_{x_{-1}}-1\right)\tau_n\cdot \tau_n=\tau_{n-1}\tau_{n+1}\,,\label{2dtl-bilinear}
\end{equation}
through the dependent variable transformation
\begin{equation}
\theta_n=-\ln \frac{\tau_{n+1}\tau_{n-1}}{\tau_n^2}\,.
\end{equation}
Here $D_{x}$ is the Hirota $D$-operator which is defined 
as
\begin{equation}
D_x^na(x)\cdot b(x)=\left(\partial_x-\partial_{x'}\right)^na(x)b(x')|_{x=x'}\,.
\end{equation}

\begin{lemma}[Ueno-Takasaki\cite{Ueno-Takasaki},
Babich-Matveev-Sall\cite{Matveev},
Hirota\cite{HirotaBook},Nimmo-Willox\cite{Nimmo-Willox}]
The bilinear equation (\ref{2dtl-bilinear}) 
and other bilinear equations of the members of 
the 2D-Toda lattice hierarchy have the 
following Gram-type determinant solution: 
\begin{equation*}
\tau_n={\rm det}\left(
\psi_{i,j}^{(n)} \right)_{1\leq i,j \leq M}\,,
\end{equation*}
where
\begin{equation*}
\psi_{i,j}^{(n)}=c_{i,j}+(-1)^n\int_{-\infty}^{x_1}\varphi_i^{(n)}
\hat{\varphi}_j^{(-n)}dx_1\,.
\end{equation*}
Here $c_{i,j}$ are constants,
$\varphi_{i,j}^{(n)}$ and $\hat{\varphi}_{i,j}^{(n)}$ must satisfy 
$\frac{\partial \varphi_i^{(n)}}{\partial x_k}=\varphi_i^{(n+k)}$
and 
$\frac{\partial \hat{\varphi}_i^{(n)}}{\partial x_k}
=(-1)^{k-1}\hat{\varphi}_i^{(n+k)}$ 
for $k=\pm 1, \pm 2, \pm 3, \cdots$.

For example, the following linear independent set of functions 
$\{\varphi_i^{(n)},\hat{\varphi}_j^{(n)}\}$ for $i,j=1,2,\cdots, M$ 
gives $M$-soliton solution of the $A_{\infty}$ 2D-Toda system: 
$$
\varphi_i^{(n)}= p_i^ne^{\xi_i}\,,\quad 
\hat{\varphi}_j^{(n)}=q_j^ne^{\eta_j}\,,
$$
where $\xi_i=p_ix_1+\frac{1}{p_i}x_{-1}+p_i^2x_2+\frac{1}{p_i^2}x_{-2}+p_i^3x_3
+\frac{1}{p_i^3}x_{-3}\cdots +\xi_{i0}$ and 
$\eta_i=q_ix_1+\frac{1}{q_i}x_{-1}
-q_i^2x_2-\frac{1}{q_i^2}x_{-2}+q_i^3x_3
+\frac{1}{q_i^3}x_{-3}\cdots +\eta_{i0}$.
\end{lemma}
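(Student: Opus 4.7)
\medskip
\noindent\textbf{Proof proposal.} The plan is to verify the bilinear equation \eqref{2dtl-bilinear} directly from the Gram representation of $\tau_n$, with the Jacobi (Sylvester) determinant identity as the key ingredient. The underlying structural observation is that each shift $\tau_{n\pm1}$ and each derivative $\partial_{x_1}\tau_n$, $\partial_{x_{-1}}\tau_n$, $\partial_{x_1}\partial_{x_{-1}}\tau_n$ can be written as a bordered determinant built on the common base matrix $(\psi_{i,j}^{(n)})$; the bilinear relation then arises as a $2\times2$ cofactor identity on a single $(M{+}2)\times(M{+}2)$ doubly-bordered matrix.

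First I would compute the action of $\partial_{x_{\pm1}}$ on an entry. Differentiating the upper limit of the integral gives $\partial_{x_1}\psi_{i,j}^{(n)}=(-1)^n\varphi_i^{(n)}\hat{\varphi}_j^{(-n)}$. For the $x_{-1}$-derivative I would differentiate under the integral sign, use the recurrences $\partial_{x_{-1}}\varphi_i^{(n)}=\varphi_i^{(n-1)}$ and $\partial_{x_{-1}}\hat{\varphi}_j^{(-n)}=\hat{\varphi}_j^{(-n-1)}$, recognise the resulting integrand as the total $x_1$-derivative $\partial_{x_1}[\varphi_i^{(n-1)}\hat{\varphi}_j^{(-n-1)}]$, and integrate using decay at $-\infty$ to obtain $\partial_{x_{-1}}\psi_{i,j}^{(n)}=(-1)^n\varphi_i^{(n-1)}\hat{\varphi}_j^{(-n-1)}$. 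A parallel identity expresses $\psi_{i,j}^{(n\pm1)}$ in a form amenable to bordering by $\varphi_i^{(n\pm1)}$ and $\hat{\varphi}_j^{(-n\mp1)}$.

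With these formulas in hand, I would rewrite $\tau_n$, $\tau_{n\pm1}$ and each derivative as bordered determinants: single bordering of $(\psi_{i,j}^{(n)})$ by one $\varphi$-column and one $\hat{\varphi}$-row yields the first derivatives; double bordering yields both the mixed second derivative and the shifted $\tau_{n\pm1}$'s (with different shift choices on the bordering vectors). Applying the Jacobi identity to the common $(M{+}2)\times(M{+}2)$ matrix then yields a relation of the shape $\tau_n\,\partial_{x_1}\partial_{x_{-1}}\tau_n-\partial_{x_1}\tau_n\,\partial_{x_{-1}}\tau_n=\tau_{n+1}\tau_{n-1}-\tau_n^2$, which on rearrangement is exactly $-(\tfrac12 D_{x_1}D_{x_{-1}}-1)\tau_n\cdot\tau_n=\tau_{n-1}\tau_{n+1}$. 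The extension to the remaining bilinear equations of the hierarchy follows from the same Plücker/Jacobi mechanism applied with bordering vectors associated to the other flow variables $x_{\pm k}$.

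The hardest part is the sign bookkeeping: the $(-1)^n$ prefactors in $\psi_{i,j}^{(n)}$ and the alternating $(-1)^{k-1}$ in the $\hat{\varphi}$-recurrence must cancel consistently across the four bordered determinants so that the Jacobi identity reproduces the precise signs of \eqref{2dtl-bilinear}. Once the sign tracking is nailed down, the rest of the argument is purely determinantal algebra. Verification of the soliton example reduces to checking that the exponential ansatz $\varphi_i^{(n)}=p_i^n e^{\xi_i}$ and $\hat{\varphi}_j^{(n)}=q_j^n e^{\eta_j}$ satisfies the required recurrences, which follows termwise from the prescribed expansions of $\xi_i$ and $\eta_i$ (the sign alternation between even and odd time variables in $\eta_i$ is exactly what accounts for the factor $(-1)^{k-1}$).
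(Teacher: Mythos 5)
The paper offers no proof of this lemma at all --- its ``proof'' is the single line ``See \cite{HirotaBook}'' --- so there is nothing internal to match your argument against. What you outline is precisely the standard Gram-determinant argument that the cited sources carry out: the differentiation formulas $\partial_{x_1}\psi_{i,j}^{(n)}=(-1)^n\varphi_i^{(n)}\hat{\varphi}_j^{(-n)}$ and $\partial_{x_{-1}}\psi_{i,j}^{(n)}=(-1)^n\varphi_i^{(n-1)}\hat{\varphi}_j^{(-n-1)}$ (both of which I have checked against the stated recurrences, including the $(-1)^{k-1}$ factor for $k=-1$), the observation that $\psi_{i,j}^{(n\pm1)}$ is a rank-one update of $\psi_{i,j}^{(n)}$ obtained by integrating by parts, the resulting bordered-determinant representations of $\tau_{n\pm1}$ and of the derivatives of $\tau_n$, and finally the Jacobi identity on a doubly bordered $(M{+}2)\times(M{+}2)$ matrix. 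Your verification of the exponential example is also correct. Filled in, this would be an honest proof where the paper gives only a pointer; in that sense your route is ``different'' only because the paper takes no route at all.

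One concrete slip, and it sits exactly at the step you yourself flag as delicate: the intermediate identity you display has the wrong sign. Expanding the Hirota operator, equation \eqref{2dtl-bilinear} is equivalent to
\begin{equation*}
\tau_n\,\partial_{x_1}\partial_{x_{-1}}\tau_n-\partial_{x_1}\tau_n\,\partial_{x_{-1}}\tau_n=\tau_n^2-\tau_{n+1}\tau_{n-1}\,,
\end{equation*}
whereas you wrote the right-hand side as $\tau_{n+1}\tau_{n-1}-\tau_n^2$. As written, your identity rearranges to $-\bigl(\tfrac12 D_{x_1}D_{x_{-1}}-1\bigr)\tau_n\cdot\tau_n=2\tau_n^2-\tau_{n+1}\tau_{n-1}$, which is not the target equation. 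This is the kind of error the $(-1)^n$ prefactors are prone to produce in the Jacobi-identity step, so when you ``nail down'' the sign bookkeeping you should land on the identity above, not the one you stated. The rest of the outline is sound.
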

\begin{proof}
See \cite{HirotaBook}.
\end{proof}

We impose the $C_{\infty}$-reduction $\theta_{n}=\theta_{-n}$ ($n\geq
0$) to the $A_{\infty}$ 2D-Toda system, {\it i.e.}, 
fold the infinite sequence $\{\dots,
 \theta_{-2},\theta_{-1},\theta_0,\theta_1,\theta_2,\dots\}$ 
in 
$\theta_{0}$~\cite{Jimbo-Miwa,Ueno-Takasaki,Nimmo-Willox,Willox,DJKM2,DJKM3}. 
Then we have $\theta_{-1}=\theta_1$,
$\theta_{-2}=\theta_2$, $\theta_{-3}=\theta_3$, \dots . 

For $n=0$, 
\begin{eqnarray*}
\frac{\partial^2\theta_{0}}{\partial x_1 \partial x_{-1}}
&=&e^{-\theta_{-1}}-2e^{-\theta_{0}}+e^{-\theta_{1}}\\
&=&  -2e^{-\theta_{0}}+ 2e^{-\theta_{1}}\,.
\end{eqnarray*}
For $n=1$, 
\begin{equation*}
\frac{\partial^2\theta_1}{\partial x_1 \partial x_{-1}}
=e^{-\theta_{0}}-2e^{-\theta_{1}}+e^{-\theta_{2}}\,.
\end{equation*}
Thus we obtain
the $C_{\infty}$ 2D-Toda system~\cite{Leznov-Saveliev,Mikhailov3}:
\begin{equation}
\frac{\partial^2\theta_n}{\partial x_1 \partial x_{-1}}
=-\sum_{m\in \mathbf{Z}_{\geq 0}}a_{n+1\,,\,m+1}e^{-\theta_m}\,, 
\quad n\in \mathbf{Z}_{\geq 0}\,,\label{2dToda-C}
\end{equation}
where the matrix $A=(a_{n,m})$ is the transpose of the Cartan matrix for
the infinite dimensional Lie algebra $C_{\infty}$, 
which is the semi-infinite tridiagonal matrix
\begin{equation}
A=
\left[
\begin{array}{cccccccc}
2 & -2 &    & & & &  \\
-1 & 2 & -1 & & & &  \\
   & -1 & 2 & -1 & & &  \\
   &    & \ddots & \ddots & \ddots &  & \\
       &     &    &        -1    &  2    & -1 & \\
      &      &    &         &        \ddots & \ddots & \ddots     
\end{array}
\right]\,.\label{cartan-C}
\end{equation}

The $C_{\infty}$ 2D-Toda system (\ref{2dToda-C}) with 
(\ref{cartan-C})
may be written as 
\begin{eqnarray}
\frac{\partial^2\theta_{0}}{\partial x_1 \partial x_{-1}}
&=&-2e^{-\theta_{0}}+2e^{-\theta_{1}}\,,
\\
\frac{\partial^2\theta_n}{\partial x_1 \partial x_{-1}}
&=&e^{-\theta_{n-1}}-2e^{-\theta_{n}}+e^{-\theta_{n+1}}\,, \qquad n\geq 1\,.   
\end{eqnarray}
The $C_{\infty}$ 2D-Toda system (\ref{2dToda-C}) with 
(\ref{cartan-C}) is transformed into the bilinear equations 
\begin{eqnarray}
&&-\left(\frac{1}{2}D_{x_1}D_{x_{-1}}-1\right)\tau_{0}\cdot
 \tau_{0}=\tau_{1}^2\,,\label{CToda-bilinear1}\\
&&-\left(\frac{1}{2}D_{x_1}D_{x_{-1}}-1\right)\tau_n\cdot
 \tau_n=\tau_{n-1}\tau_{n+1}
\,, \quad {\rm for} \quad n\geq 1\,,\label{CToda-bilinear2}
\end{eqnarray}
through the dependent variable transformation
\begin{equation*}
\theta_{0}=-\ln \frac{\tau_1^2}{\tau_{0}^2}\,,\qquad {\rm and} 
\qquad 
\theta_n=-\ln \frac{\tau_{n+1}\tau_{n-1}}{\tau_n^2}\quad {\rm
for}\,\,\,\,n\geq 1.
\end{equation*}

\begin{lemma}\label{lemma:C-tau}
The bilinear equations of the $C_{\infty}$ 2D-Toda system
(\ref{CToda-bilinear1}) and (\ref{CToda-bilinear2})
have the $N$-soliton solution which is expressed as 
\begin{equation*}
\tau_n={\rm det}\left(\psi_{i,j}^{(n)}\right)_{1\leq i,j\leq 2N}\,,
\end{equation*}
where
\begin{eqnarray*}
&&\psi_{i,j}^{(n)}=c_{i,j}+(-1)^n\int_{-\infty}^{x_1}\varphi_i^{(n)}
\varphi_j^{(-n)}dx_1\,,\\
&&\varphi_i^{(n)}=p_i^ne^{\xi_i}\,,
\qquad \xi_i=p_ix_1+\frac{1}{p_i}x_{-1}+p_i^3x_3
+\frac{1}{p_i^3}x_{-3}+\cdots +\xi_{i0}\,,
\end{eqnarray*}
and $c_{i,j}=c_{j,i}$. 
\end{lemma}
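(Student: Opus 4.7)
The plan is to derive this lemma as a $C_{\infty}$ specialization of the Gram-type $\tau$-function for the $A_{\infty}$ 2D-Toda hierarchy furnished by the previous lemma. I would take that determinantal solution with $M=2N$ and soliton data $(\varphi_i^{(n)},\hat\varphi_j^{(n)}) = (p_i^{n}e^{\xi_i},\, q_j^{n}e^{\eta_j})$, and impose the $C_{\infty}$-reduction by choosing (i) $q_j = p_j$, (ii) vanishing even flow times, $x_{\pm 2k}=0$ for every $k\ge 1$, and (iii) matched phase constants $\eta_{j0} = \xi_{j0}$. These choices collapse $\eta_j$ to $\xi_j$, hence $\hat\varphi_j^{(n)} = \varphi_j^{(n)}$, and the integrand of the preceding lemma becomes exactly the one claimed here. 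I would additionally require the symmetry $c_{i,j}=c_{j,i}$ of the integration constants.

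The heart of the argument is then to establish the identity $\tau_{-n}=\tau_n$, which realises the $C_{\infty}$ fold $\theta_{-n}=\theta_n$ at the level of $\tau$-functions. A direct manipulation,
\[
\psi_{i,j}^{(-n)}
= c_{i,j}+(-1)^{n}\!\int_{-\infty}^{x_1}\!\varphi_i^{(-n)}\varphi_j^{(n)}\,dx_1
= c_{j,i}+(-1)^{n}\!\int_{-\infty}^{x_1}\!\varphi_j^{(n)}\varphi_i^{(-n)}\,dx_1
= \psi_{j,i}^{(n)},
\]
shows that the matrix $(\psi_{i,j}^{(-n)})$ is the transpose of $(\psi_{i,j}^{(n)})$, whence $\tau_{-n}=\tau_n$.

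The reduction of the bilinear equations is then immediate. Because the $A_{\infty}$ equation \eqref{2dtl-bilinear} holds for every $n\in\mathbb Z$ by the previous lemma and involves only $D_{x_1}D_{x_{-1}}$, it is unaffected by setting the even flow times to zero. For $n\ge 1$ it is literally \eqref{CToda-bilinear2}; for $n=0$ its right-hand side $\tau_{-1}\tau_1$ collapses to $\tau_1^{2}$ via the symmetry $\tau_{-1}=\tau_1$, reproducing \eqref{CToda-bilinear1}.

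The step I expect to require most care is (ii): verifying that restricting the even flow times is consistent with the reduction, i.e.\ that the identification $\hat\varphi_j^{(n)} = \varphi_j^{(n)}$ is preserved under every surviving (odd-indexed) flow. Concretely, the dispersion relations $\partial_{x_k}\hat\varphi_j^{(n)} = (-1)^{k-1}\hat\varphi_j^{(n+k)}$ of the previous lemma agree with $\partial_{x_k}\varphi_j^{(n)} = \varphi_j^{(n+k)}$ only when $k$ is odd, which forces us to switch off the even times; once this is observed, the argument reduces to combining the previous lemma with the elementary symmetry $\tau_{-n}=\tau_n$ proved above. The usual convergence condition $\mathrm{Re}(p_i+p_j)>0$ at $x_1\to-\infty$ is the remaining technical hypothesis needed to make the integral representation well-defined.
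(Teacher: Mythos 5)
Your proposal is correct and follows essentially the same route as the paper: impose $\hat\varphi_j^{(n)}=\varphi_j^{(n)}$, $c_{i,j}=c_{j,i}$, $M=2N$ and $x_{2k}\equiv 0$ on the Gram-type $A_\infty$ solution, verify $\psi_{i,j}^{(\pm n)}=\psi_{j,i}^{(\mp n)}$ so that $\tau_{-n}=\tau_n$, and fold the $A_\infty$ bilinear equations accordingly. Your added remarks on why the even flow times must be switched off (the mismatch of the dispersion relations for even $k$) and on the convergence condition are correct refinements that the paper leaves implicit.
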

\begin{proof}
Imposing the $C_{\infty}$ reduction $\tau_n=\tau_{-n}$, {\it i.e.}, 
folding the 
sequence of the $\tau$-functions $\{\dots,
 \tau_{-2},\tau_{-1},\tau_0,\tau_1,\tau_2,\dots\}$ 
in 
$\tau_{0}$, we have 
$\tau_{-1}=\tau_1$, $\tau_{-2}=\tau_2$, $\tau_{-3}=\tau_3$, ....
~\cite{Jimbo-Miwa,Ueno-Takasaki,Nimmo-Willox,Willox,DJKM2,DJKM3}.
Thus we obtain the bilinear equations (\ref{CToda-bilinear1}) 
and (\ref{CToda-bilinear2}) from the 2D-Toda bilinear equation
 (\ref{2dtl-bilinear}).  

To impose the $C_{\infty}$ reduction to the Gram-type determinant
 solution of the $A_{\infty}$ 2D-Toda system, 
we impose the constraint 
$\hat{\varphi}_j^{(n)}=\varphi_j^{(n)}$, $c_{i,j}=c_{j,i}$, 
$M=2N$ and $x_{2k}\equiv 0$ for every interger $k$. 
With this constraint, 
each element of the Gram-type determinant has the
following property: 
\begin{eqnarray*}
\psi_{i,j}^{(n)}&=&c_{i,j}+(-1)^n\int_{-\infty}^{x_1}\varphi_i^{(n)}
\varphi_j^{(-n)}dx_1\\
&=&c_{j,i}+(-1)^{-n}\int_{-\infty}^{x_1}\varphi_j^{(-n)}
\varphi_i^{(n)}dx_1\\
&=&\psi_{j,i}^{(-n)}\,.
\end{eqnarray*}
Then the $\tau$-function satisfies
$\tau_n=\tau_{-n}$.
Therefore the $N$-soliton solution of the $C_{\infty}$ 2D-Toda system 
is expressed by the above Gram-type determinant. 
\end{proof}

\begin{lemma}
The bilinear equations
\begin{eqnarray}
&&-\left(\frac{1}{2}D_{x_1}D_{x_{-1}}-1\right)\tau_{0}\cdot
 \tau_{0}=\tau_1^2\,,\label{3CToda-bilinear1}\\
&&-\left(\frac{1}{2}D_{x_1}D_{x_{-1}}-1\right)\tau_{1} \cdot
 \tau_{1}=\tau_{0}\tau_{2}\,,
\label{3CToda-bilinear2}
\end{eqnarray}
are satisfied by the $\tau$-functions which are obtained by the 
pseudo 3-reduction of the $C_{\infty}$ 2D-Toda system. 
The $\tau$-functions are given by 
\begin{equation*}
\tau_n={\rm det}\left(
\psi_{i,j}^{(n)}\right)_{1\leq i,j \leq 2N}\,,
\end{equation*}
where
\begin{eqnarray*}
&&\psi_{i,j}^{(n)}=c_{i,j}+(-1)^n\int_{-\infty}^{x_1}\varphi_i^{(n)}
\varphi_j^{(-n)}dx_1\,,\\
&&\varphi_i^{(n)}=p_i^ne^{\xi_i}\,,
\quad \xi_i=p_ix_1+\frac{1}{p_i}x_{-1}+p_i^3x_3
+\frac{1}{p_i^3}x_{-3}+\cdots +\xi_{i0}\,,
\end{eqnarray*}
and $c_{i,j}=\delta_{j,2N+1-i}\alpha_i$, $\alpha_i=\alpha_{2N+1-i}$, 
 $p_i^2-p_ip_{2N+1-i}+p_{2N+1-i}^2=1$.
\end{lemma}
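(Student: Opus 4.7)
The plan is to reduce this lemma to Lemma \ref{lemma:C-tau} by showing that the $\tau$-functions specified here are a particular admissible instance of the family of Gram-type determinants described there. First I would observe that the bilinear equations (\ref{3CToda-bilinear1}) and (\ref{3CToda-bilinear2}) are nothing more than (\ref{CToda-bilinear1}) and the $n=1$ case of (\ref{CToda-bilinear2}); they do not involve $\tau_n$ for $n\geq 3$ nor any explicit dependence on $x_3$ or $x_{-3}$. Hence any $\tau$-function satisfying the hypotheses of Lemma \ref{lemma:C-tau} automatically satisfies both equations, and our task collapses to verifying that the data in the present statement fits those hypotheses.

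The second step is to check the only nontrivial compatibility: the symmetry $c_{i,j}=c_{j,i}$ required by Lemma \ref{lemma:C-tau}. With $c_{i,j}=\delta_{j,2N+1-i}\alpha_i$ the matrix $(c_{i,j})$ is supported on the anti-diagonal; on that anti-diagonal $c_{i,j}=\alpha_i$ and $c_{j,i}=\alpha_{2N+1-i}$, which coincide by the assumption $\alpha_i=\alpha_{2N+1-i}$. The $\varphi_i^{(n)}$ and exponents $\xi_i$ are identical in form to those in Lemma \ref{lemma:C-tau} (the absence of even-index times $x_{2k}$ is already built in), so the determinant $\tau_n=\det(\psi_{i,j}^{(n)})_{1\leq i,j\leq 2N}$ is indeed a $C_\infty$ 2D-Toda $\tau$-function and the bilinear equations follow.

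Finally I would explain the role of the parameter constraint $p_i^2-p_ip_{2N+1-i}+p_{2N+1-i}^2=1$, which is equivalent to $p_i^3+p_{2N+1-i}^3=p_i+p_{2N+1-i}$ via the standard factorization. This condition is the pseudo 3-reduction, imposed on the paired indices $(i,2N+1-i)$ that are linked by the anti-diagonal constant matrix $c_{i,j}$; it will be needed later to close the hierarchy onto the DP equation through the hodograph transformation, but it does not enter the verification of the two bilinear equations at hand, since these only involve the $x_1,x_{-1}$ flows. The main conceptual point -- and the only place where one must be careful -- is to check that the anti-diagonal pairing $i\leftrightarrow 2N+1-i$ built into $c_{i,j}$ is the same pairing that appears in the constraint on the $p_i$; this alignment is enforced by the statement itself, so the proof is essentially a bookkeeping verification once Lemma \ref{lemma:C-tau} is in hand.
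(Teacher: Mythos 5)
Your proposal is correct and follows essentially the same route as the paper, which likewise just imposes the constraints $c_{i,j}=\delta_{j,2N+1-i}\alpha_i$, $\alpha_i=\alpha_{2N+1-i}$ and $p_i^3+p_{2N+1-i}^3=p_i+p_{2N+1-i}$ (equivalently $p_i^2-p_ip_{2N+1-i}+p_{2N+1-i}^2=1$ when $p_i\neq -p_{2N+1-i}$) on the Gram-type $\tau$-function of Lemma \ref{lemma:C-tau} and concludes. Your explicit check that the anti-diagonal matrix $(c_{i,j})$ is symmetric precisely because $\alpha_i=\alpha_{2N+1-i}$, and your remark that the two bilinear equations involve only the $x_{\pm 1}$ flows so the 3-reduction constraint plays no role in their verification, are both correct and in fact spell out details the paper leaves implicit.
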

\begin{proof}
To impose the pseudo 3-reduction to the $\tau$-function in Lemma \ref{lemma:C-tau}, 
we add a constraint $p_i^3+p_{2N+1-i}^3=p_i+p_{2N+1-i}$, $p_i\neq
 -p_{2N+1-i}$, i.e. $p_i^2-p_ip_{2N+1-i}+p_{2N+1-i}^2=1$,  
and $c_{i,j}=\delta_{j,2N+1-i}\alpha_i$, $\alpha_i=\alpha_{2N+1-i}$~\cite{Hirota-reduction}. 
\end{proof}

\begin{lemma}\label{lemma:pfaffian-C}
The $\tau$-function 
$\tau_1$ of the bilinear equations (\ref{3CToda-bilinear1}) and 
(\ref{3CToda-bilinear2}) with the pseudo 3-reduction constraint 
satisfies the following relation:
\begin{equation}
\tau_1=\frac{1}{c}
\,g_1g_2\,,\label{tau1-g1-g2}
\end{equation}
with
\begin{eqnarray}
&&g_1= {\rm pf}\left(2\alpha_i\frac{(p_i-1)(p_j-1)}{p_i-p_j}\delta_{j,2N+1-i}
+\frac{p_i-p_j}{p_i+p_j}e^{\xi_i+\xi_j}\right)_{1\leq i,j\leq
2N}\,,\label{g1}\\
&&g_2={\rm pf}\left(2\alpha_i\frac{(p_i+1)(p_j+1)}{p_i-p_j}\delta_{j,2N+1-i}
+\frac{p_i-p_j}{p_i+p_j}e^{\xi_i+\xi_j}\right)_{1\leq
i,j\leq 2N}\,,\label{g2}
\end{eqnarray}
 where $\xi_i=p_i^{-1}x_{-1}+p_ix_1+\xi_i^0$, 
$p_i^2-p_ip_{2N+1-i}+p_{2N+1-i}^2=1$\,, $\alpha_i=\alpha_{2N+1-i}$,
 $c=2^{2N}\prod_{k=1}^{2N}p_k$.   
\end{lemma}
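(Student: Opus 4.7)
The plan is to derive the factorization $\tau_1=\frac{1}{c}g_1g_2$ by exploiting the ``pfaffian doubling'' of a Gram determinant, which is available because of the $C_\infty$ reduction symmetry $\psi_{i,j}^{(n)}=\psi_{j,i}^{(-n)}$, combined with the algebraic simplifications coming from the pseudo 3-reduction constraint $p_i^2-p_ip_{2N+1-i}+p_{2N+1-i}^2=1$.

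First, I would substitute $\varphi_i^{(n)}=p_i^n e^{\xi_i}$ and evaluate the integral $\int_{-\infty}^{x_1}e^{\xi_i+\xi_j}dx_1=\frac{e^{\xi_i+\xi_j}}{p_i+p_j}$ to obtain the explicit entries
\[
\psi_{i,j}^{(1)}=\alpha_i\,\delta_{j,2N+1-i}-\frac{p_i}{p_j(p_i+p_j)}\,e^{\xi_i+\xi_j}.
\]
Rescaling row $i$ by $1/p_i$ and column $j$ by $p_j$ leaves $\tau_1$ unchanged and makes the exponential part $-e^{\xi_i+\xi_j}/(p_i+p_j)$ fully symmetric in $(i,j)$, concentrating the asymmetry into the antidiagonal entries; further scalings by factors of $p_i$ and by $2$ produce the global constant $c=2^{2N}\prod_k p_k$.

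Next, I would decompose the rescaled matrix into its symmetric and antisymmetric parts and apply a pfaffian-doubling identity of the type used in CKP/$C_\infty$ Toda reductions (cf.~\cite{HirotaBook,Hirota-reduction}), which expresses the determinant of a ``symmetric Cauchy kernel plus antidiagonal antisymmetric correction'' as a product of two pfaffians whose antidiagonal entries encode the square-roots of the original antidiagonal. The critical algebraic input is the consequence $(p_i^2-1)(p_{2N+1-i}^2-1)=-p_ip_{2N+1-i}(p_i-p_{2N+1-i})^2$ of the 3-reduction constraint, which forces the two pfaffian factors to take the stated forms with $(p_i\pm 1)(p_{2N+1-i}\pm 1)$ numerators. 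The common off-antidiagonal entries $\frac{p_i-p_j}{p_i+p_j}e^{\xi_i+\xi_j}$ are then identified via the classical Schur pfaffian.

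As a sanity check, at the vacuum $e^{\xi_i}\to 0$ both sides reduce to $(-1)^N\prod_{i=1}^N\alpha_i^2$ using the above factorization of $(p_i^2-1)(p_{2N+1-i}^2-1)$, confirming the constant $c$. The main obstacle is to promote this to the identity for all values of the $\xi_i$. My preferred route is induction on $N$: pfaffian row expansion of $g_1$ and $g_2$ combined with the Desnanot--Jacobi identity for the determinant side reduces the $2N\times 2N$ claim to the $(2N-2)\times(2N-2)$ case, with the base $N=1$ a direct $2\times 2$ computation. A structural alternative is to verify that both sides satisfy a common first-order evolution in $x_1$ (obtained by differentiating the Grammian and each pfaffian) and agree at $x_1\to-\infty$, which is the standard uniqueness argument for soliton $\tau$-functions.
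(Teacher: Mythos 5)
Your setup is right (the explicit Gram entries, the row/column rescalings producing $c=2^{2N}\prod_k p_k$, the consequence $p_i^2-1=p_{2N+1-i}(p_i-p_{2N+1-i})$ of the 3-reduction constraint, and the vacuum check of the constant all match the paper), but the pivotal step is missing. You invoke an unspecified ``pfaffian-doubling identity'' for a ``symmetric Cauchy kernel plus antidiagonal antisymmetric correction,'' decomposed into symmetric and antisymmetric parts. There is no general identity factoring $\det(\mathrm{sym}+\mathrm{antisym})$ into two pfaffians; the identity that actually exists, and that the paper uses, is for a \emph{skew-symmetric matrix plus a rank-one term}: $\det\left(S_{ij}-a_ib_j\right)=\mathrm{pf}(S)\,\mathrm{pf}\left(S_{ij}-a_ib_j+a_jb_i\right)$, which follows from the bordered-determinant formula \eqref{formula1} and the Jacobi identity for skew-symmetric determinants (formula \eqref{formula3}, derived from \eqref{det-paff-formula}). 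So as stated your decomposition does not put the matrix into a form to which any quotable factorization applies --- the ``doubling identity'' you appeal to is essentially the content of the lemma itself.

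The idea you are missing is the non-obvious kernel rewriting that produces the skew-plus-rank-one structure: after normalizing, the paper uses the partial-fraction identity $-\frac{2p_i}{p_i+p_j}=\frac{(p_i-p_j)(p_i+1)}{(p_i+p_j)(p_j-1)}-\frac{p_i-1}{p_j-1}$ and extracts the factor $\prod_k\frac{p_k+1}{p_k-1}$ from rows and columns, so that the matrix becomes $\Psi_{ij}-a_ib_j$ with $\Psi_{ij}=c_{ij}+\frac{p_i-p_j}{p_i+p_j}e^{\xi_i+\xi_j}$ genuinely skew-symmetric (this is where the constraint forces $c_{ij}=-c_{ji}$ on the antidiagonal) and $a_i=\frac{p_i-1}{p_i+1}e^{\xi_i}$, $b_j=e^{\xi_j}$. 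The factorization then drops out, with $g_1=\mathrm{pf}(\Psi)$ and $g_2$ the second pfaffian after the prefactor is reabsorbed. Your two fallback strategies (induction via Desnanot--Jacobi and pfaffian expansion, or a first-order-flow-plus-boundary-condition uniqueness argument) are not carried out, and the inductive matching of a double sum of sub-pfaffian products against determinant minors is a substantial combinatorial task in its own right; neither can be accepted as filling the gap without being executed.
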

\begin{proof}
Suppose that $\alpha_i=\alpha_{2N+1-i}$ and
$p_i^2-p_ip_{2N+1-i}+p_{2N+1-i}^2=1$ are satisfied. 
Then we can rewrite $\tau_1$ as follows:
\begin{eqnarray*}
\fl \tau_1&=&
{\rm det}\left(\psi_{i,j}^{(1)}\right)_{1\leq i,j\leq 2N}
={\rm det}\left(\delta_{j,2N+1-i}\alpha_i+\frac{1}{p_i+p_j}\left(
-\frac{p_i}{p_j}\right)e^{\xi_i+\xi_j}\right)_{1\leq i,j\leq 2N}\\
\fl &=&{\rm
 det}\left(\delta_{j,2N+1-i}\,\alpha_i\frac{p_ip_{2N+1-i}-p_{2N+1-i}^2}{p_j(p_i-p_j)}
-\frac{1}{p_i+p_j}
\frac{p_i}{p_j}e^{\xi_i+\xi_j}\right)_{1\leq i,j\leq 2N}\\
\fl &=&{\rm
 det}\left(\delta_{j,2N+1-i}\,\alpha_i\frac{p_i^2-1}{p_j(p_i-p_j)}
-\frac{1}{p_i+p_j}
\frac{p_i}{p_j}e^{\xi_i+\xi_j}\right)_{1\leq i,j\leq 2N}\\
\fl &=&\frac{1}{c}\,{\rm
 det}\left(2\delta_{j,2N+1-i}\,\alpha_i\frac{p_i^2-1}{p_i-p_j}
-\frac{2p_i}{p_i+p_j}e^{\xi_i+\xi_j}\right)_{1\leq
 i,j\leq 2N}
\\
\fl &=&\frac{1}{c}\,{\rm
 det}\left(2\delta_{j,2N+1-i}\,\alpha_i\frac{p_i^2-1}{p_i-p_j}
+\left(\frac{(p_i-p_j)(p_i+1)}{(p_i+p_j)(p_j-1)}-\frac{p_i-1}{p_j-1}\right)
e^{\xi_i+\xi_j}\right)_{1\leq
 i,j\leq 2N}
\\
\fl &=&\frac{1}{c}\,{\rm
 det}\left(2\delta_{j,2N+1-i}\,\alpha_i\frac{p_i^2-1}{p_i-p_j}
+\frac{(p_i-p_j)(p_i+1)}{(p_i+p_j)(p_j-1)}e^{\xi_i+\xi_j}-\frac{p_i-1}{p_j-1}e^{\xi_i+\xi_j}
\right)_{1\leq
 i,j\leq 2N}
\\
\fl &=&\frac{\prod_{k=1}^{2N}\frac{p_k+1}{p_k-1}}
{c}\,{\rm
 det}\left(c_{i,j}
+\frac{p_i-p_j}{p_i+p_j}e^{\xi_i+\xi_j}-\frac{p_i-1}{p_i+1}e^{\xi_i+\xi_j}
\right)_{1\leq
 i,j\leq 2N}\,,
\end{eqnarray*}
where
 $c_{i,j}=2\delta_{j,2N+1-i}\,\alpha_i\frac{(p_i-1)(p_j-1)}{p_i-p_j}$
 and $c=2^{2N}\prod_{k=1}^{2N}p_k$. 
Using the formula (\ref{formula1}), we obtain
\begin{equation*}
 \tau_1=\frac{\prod_{k=1}^{2N}\frac{p_k+1}{p_k-1}}
{c}\,
\left|
\begin{array}{ccccc}
\Psi_{1,1} & \Psi_{1,2} & \cdots  & \Psi_{1,2N} & \frac{p_1-1}{p_1+1}e^{\xi_1} \\
\vdots & \vdots  & \ddots  & \vdots &\vdots  \\
\Psi_{2N,1} & \Psi_{2N,2}  & \cdots  & \Psi_{2N,2N} & \frac{p_{2N}-1}{p_{2N}+1}e^{\xi_{2N}}  \\
e^{\xi_1} & e^{\xi_2} & \cdots  & e^{\xi_{2N}} & 1 
\end{array}
\right|\,,\quad 
\end{equation*}
where $\Psi_{i,j}=c_{i,j}
+\frac{p_i-p_j}{p_i+p_j}e^{\xi_i+\xi_j}$. 
Then we note
\[
\fl 
c_{i,j}=2\delta_{j,2N+1-i}\,\alpha_i\frac{(p_i-1)(p_j-1)}{p_i-p_j}=
-2\delta_{i,2N+1-j}\,\alpha_{2N+1-i}\frac{(p_i-1)(p_j-1)}{p_j-p_i}
=-c_{j,i}\,. 
\]
Introducing $c_i=2\alpha_i\frac{(p_i-1)(p_j-1)}{p_i-p_j}$, 
we can write as $c_{i,j}=\delta_{j,2N+1-i}c_i$ and $c_i=-c_{2N+1-i}$.
Note that the $2N \times 2N$ matrix $(\Psi_{i,j})_{1\leq i,j\leq 2N}$ is skew-symmetric. 
Thus we can use the formula (\ref{formula3}): 
\begin{eqnarray*}
\fl \tau_1&=&\frac{\prod_{k=1}^{2N}\frac{p_k+1}{p_k-1}}
{c}\,
{\rm pf}(\Psi_{i,j})_{1\leq i,j\leq 2N}\,
{\rm pf}\left(\Psi_{ij}-\frac{p_i-1}{p_i+1}e^{\xi_i+\xi_j}+
\frac{p_j-1}{p_j+1}e^{\xi_i+\xi_j}\right)_{1\leq i,j\leq 2N}
\nonumber\\
\fl &=&\frac{\prod_{k=1}^{2N}\frac{p_k+1}{p_k-1}}
{c}\,
{\rm pf}\left(2\alpha_i\frac{(p_i-1)(p_j-1)}{p_i-p_j}\delta_{j,2N+1-i}
+\frac{p_i-p_j}{p_i+p_j}e^{\xi_i+\xi_j}\right)_{1\leq i,j\leq
2N}\,\nonumber\\
\fl & \quad& \times
{\rm pf}\left(2\alpha_i\frac{(p_i-1)(p_j-1)}{p_i-p_j}\delta_{j,2N+1-i}
+\frac{(p_i-p_j)(p_i-1)(p_j-1)}{(p_i+p_j)(p_i+1)(p_j+1)}e^{\xi_i+\xi_j}\right)_{1\leq
i,j\leq 2N}\nonumber\\
\fl &=&\frac{1}
{c}\,
{\rm pf}\left(2\alpha_i\frac{(p_i-1)(p_j-1)}{p_i-p_j}\delta_{j,2N+1-i}
+\frac{p_i-p_j}{p_i+p_j}e^{\xi_i+\xi_j}\right)_{1\leq i,j\leq
2N}\,\nonumber\\
\fl & \quad &  \times
{\rm pf}\left(2\alpha_i\frac{(p_i+1)(p_j+1)}{p_i-p_j}\delta_{j,2N+1-i}
+\frac{p_i-p_j}{p_i+p_j}e^{\xi_i+\xi_j}\right)_{1\leq
i,j\leq 2N}\nonumber\\
\fl & =& \frac{1}{c}g_1g_2\,,
\end{eqnarray*}
where $c=2^{2N}\prod_{k=1}^{2N}p_k$. 
\end{proof}

\begin{lemma}\label{lemma:pfaffian-C-2}
The $\tau$-function 
$\tau_0$ of the bilinear equations (\ref{3CToda-bilinear1}) and 
(\ref{3CToda-bilinear2}) with the pseudo 3-reduction constraint 
satisfies the relation
\begin{equation}
\tau_0=\frac{1}{c}
\,(g_1g_2-D_{x_1}g_1\cdot g_2)\,,\label{tau0-g1-g2}
\end{equation}
with
\begin{eqnarray*}
&&g_1= {\rm pf}\left(2\alpha_i\frac{(p_i-1)(p_j-1)}{p_i-p_j}\delta_{j,2N+1-i}
+\frac{p_i-p_j}{p_i+p_j}e^{\xi_i+\xi_j}\right)_{1\leq i,j\leq
2N}\,,\\
&&g_2={\rm pf}\left(2\alpha_i\frac{(p_i+1)(p_j+1)}{p_i-p_j}\delta_{j,2N+1-i}
+\frac{p_i-p_j}{p_i+p_j}e^{\xi_i+\xi_j}\right)_{1\leq
i,j\leq 2N}\,,
\end{eqnarray*}
 where $\xi_i=p_i^{-1}x_{-1}+p_ix_1+\xi_i^0$, 
$p_i^2-p_ip_{2N+1-i}+p_{2N+1-i}^2=1$\,, $\alpha_i=\alpha_{2N+1-i}$,
 $c=2^{2N}\prod_{k=1}^{2N}p_k$.   
\end{lemma}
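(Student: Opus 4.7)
I would parallel the proof of Lemma \ref{lemma:pfaffian-C}, but exploit a direct algebraic relation between the Gram entries of $\tau_0$ and $\tau_1$. A short computation gives
$$\psi^{(1)}_{i,j} - \psi^{(0)}_{i,j} = -\frac{p_i}{p_j(p_i+p_j)}e^{\xi_i+\xi_j} - \frac{1}{p_i+p_j}e^{\xi_i+\xi_j} = -\frac{1}{p_j}e^{\xi_i+\xi_j},$$
so $\psi^{(1)}$ is a rank-one perturbation of $\psi^{(0)}$. By the matrix-determinant lemma this yields $\tau_1 - \tau_0 = v^{T}\mathrm{adj}(\psi^{(0)})u$ with $u_i = -e^{\xi_i}$ and $v_j = 1/p_j$, equivalently a specific $(2N{+}1)\times(2N{+}1)$ bordered determinant. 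Using the identity $c\tau_1 = g_1g_2$ established in Lemma \ref{lemma:pfaffian-C}, the assertion \eqref{tau0-g1-g2} becomes
$$c\,v^{T}\mathrm{adj}(\psi^{(0)})\,u = D_{x_1}g_1\cdot g_2,$$
so the task reduces to evaluating this bilinear form in the cofactors of $\psi^{(0)}$ and identifying it as a Hirota derivative of pfaffians.

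To evaluate it, I would apply to the bordered determinant exactly the same sequence of row/column operations used in the preceding proof: multiplying column $j$ by $2p_j$ to extract the overall factor $c = 2^{2N}\prod_k p_k$, invoking the pseudo-3-reduction relation $p_{2N+1-i}(p_i-p_{2N+1-i}) = p_i^2 - 1$ together with the elementary identity $\frac{2p_j}{p_i+p_j} = 1 - \frac{p_i-p_j}{p_i+p_j}$, and then scaling rows and columns by $(p_k\pm 1)$ so that the interior $2N\times 2N$ block becomes the antisymmetric matrix $\Psi_{i,j} = c_{i,j} + \frac{p_i-p_j}{p_i+p_j}e^{\xi_i+\xi_j}$ familiar from Lemma \ref{lemma:pfaffian-C}. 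The bordering vectors then split into two independent rank-one corrections: one producing a pfaffian of the $g_1$-type (diagonal factors $(p_i-1)(p_j-1)$), the other producing one of the $g_2$-type (diagonal factors $(p_i+1)(p_j+1)$). A pfaffian bordering identity analogous to formula~(3) combines them into a bilinear expression in two pfaffians; the key point is that since $\partial_{x_1}\xi_i = p_i$, one has $\partial_{x_1}\bigl(\frac{p_i-p_j}{p_i+p_j}e^{\xi_i+\xi_j}\bigr) = (p_i-p_j)e^{\xi_i+\xi_j}$, so that the cross terms reorganise into $\partial_{x_1}g_1\cdot g_2 - g_1\cdot\partial_{x_1}g_2 = D_{x_1}g_1\cdot g_2$.

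\textbf{Main obstacle.} The hard part will be the combinatorial bookkeeping needed to (i) match the two pfaffians produced by the bordered-determinant identity with $g_1$ and $g_2$ exactly, including the asymmetric appearance of $(p_k-1)(p_l-1)$ in $g_1$ versus $(p_k+1)(p_l+1)$ in $g_2$, and (ii) verify that the antisymmetric combination of cross terms is a Hirota $D_{x_1}$-operator rather than an ordinary derivative. This rests on the shift symmetry $\xi_i\mapsto\xi_i+p_i\Delta$ of the Gram determinant, and the prefactors $\prod_k\frac{p_k\pm 1}{p_k\mp 1}$ that arose and cancelled in the $\tau_1$ proof will reappear here and must cancel in the same way, leaving exactly $1/c$ in front of $g_1g_2 - D_{x_1}g_1\cdot g_2$.
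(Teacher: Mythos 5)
Your entry point is genuinely different from the paper's: you exploit the rank-one relation $\psi^{(1)}_{i,j}-\psi^{(0)}_{i,j}=-\tfrac{1}{p_j}e^{\xi_i+\xi_j}$ and the matrix-determinant lemma to reduce \eqref{tau0-g1-g2} to the single identity $c(\tau_1-\tau_0)=D_{x_1}g_1\cdot g_2$, leaning on $c\tau_1=g_1g_2$ from Lemma \ref{lemma:pfaffian-C}. The paper instead computes $\tau_0$ from scratch: it applies the pseudo-3-reduction identity and the decomposition $\tfrac{2p_j}{p_i+p_j}=\tfrac{(p_i-p_j)(p_i+1)}{(p_i+p_j)(p_j-1)}-\tfrac{p_i+1}{p_j-1}+\tfrac{2p_j}{p_j-1}$, peels off \emph{two} rank-one pieces via (\ref{formula1}) to obtain a $(2N+2)\times(2N+2)$ determinant whose interior block is the skew-symmetric $\Psi$, and then invokes the doubly-bordered Jacobi identity (\ref{formula4}) to land directly on $g_1\cdot(\partial_{x_1}+1)g_2-\partial_{x_1}g_1\cdot g_2=g_1g_2-D_{x_1}g_1\cdot g_2$. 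Your observations that $\partial_{x_1}\xi_i=p_i$ turns the border columns into $x_1$-derivatives of the pfaffians, and that the $\prod_k\tfrac{p_k\pm1}{p_k\mp1}$ prefactors must cancel, are exactly the mechanisms the paper uses. (Minor slip: the rank-one factor is $v_j=\tfrac{1}{p_j}e^{\xi_j}$, not $\tfrac{1}{p_j}$, and your identity $\tfrac{2p_j}{p_i+p_j}=1-\tfrac{p_i-p_j}{p_i+p_j}$ alone does not make the diagonal term $2\delta_{j,2N+1-i}\alpha_i\tfrac{p_i^2-1}{p_i-p_j}$ antisymmetric; the $(p_i\pm1)(p_j\pm1)$ regrouping is what achieves that.)

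There is, however, one concrete obstruction you have not addressed. Your bilinear form $v^{T}\mathrm{adj}(\psi^{(0)})u$ is a \emph{singly}-bordered determinant of $\psi^{(0)}$, and skew-symmetrizing the $\psi^{(0)}$ core requires removing two further rank-one pieces, i.e.\ two more borders. You therefore end up with a $(2N+3)\times(2N+3)$ determinant with skew core and three independent borders on each side, and neither (\ref{formula3}) nor (\ref{formula4}) covers that case: you would have to derive a triple-border analogue of the Jacobi expansion (which produces a $3\times3$ array of bordered pfaffians divided by ${\rm pf}(\Psi)$, so an extra cancellation against $g_1$ must be checked). The clean repair inside your own framework is to use $v^{T}\mathrm{adj}(\psi^{(0)})u=v^{T}\mathrm{adj}(\psi^{(1)})u$ and border $\psi^{(1)}$ instead: since $\psi^{(1)}$ needs only one rank-one correction to reach $\Psi$ (as in the $\tau_1$ proof), you recover a doubly-bordered determinant and (\ref{formula4}) applies. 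With that adjustment your route closes; without it, the plan stalls at a pfaffian identity the paper does not supply.
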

\begin{proof}
Suppose that $\alpha_i=\alpha_{2N+1-i}$ and
$p_i^2-p_ip_{2N+1-i}+p_{2N+1-i}^2=1$ are satisfied. 
Then we can rewrite $\tau_0$ as follows:
\begin{eqnarray*}
\fl  \tau_0&=&
{\rm det}\left(\psi_{i,j}^{(0)}\right)_{1\leq i,j\leq 2N}\
={\rm det}\left(\delta_{j,2N+1-i}\alpha_i+\frac{1}{p_i+p_j}
e^{\xi_i+\xi_j}\right)_{1\leq i,j\leq 2N}\\
\fl  &=&{\rm
 det}\left(\delta_{j,2N+1-i}\,\alpha_i\frac{p_ip_{2N+1-i}-p_{2N+1-i}^2}{p_j(p_i-p_j)}
+\frac{1}{p_i+p_j}
e^{\xi_i+\xi_j}\right)_{1\leq i,j\leq 2N}\\
\fl  &=&{\rm
 det}\left(\delta_{j,2N+1-i}\,\alpha_i\frac{p_i^2-1}{p_j(p_i-p_j)}
+\frac{1}{p_i+p_j}
e^{\xi_i+\xi_j}\right)_{1\leq i,j\leq 2N}\\
\fl  &=&\frac{1}{c}\,{\rm
 det}\left(2\delta_{j,2N+1-i}\,\alpha_i\frac{p_i^2-1}{p_i-p_j}
+\frac{2p_j}{p_i+p_j}e^{\xi_i+\xi_j}\right)_{1\leq
 i,j\leq 2N}
\\
\fl &=&\frac{1}{c}\,{\rm
 det}\left(2\delta_{j,2N+1-i}\,\alpha_i\frac{p_i^2-1}{p_i-p_j}
+\left(\frac{(p_i-p_j)(p_i+1)}{(p_i+p_j)(p_j-1)}-\frac{p_i+1}{p_j-1}+\frac{2p_j}{p_j-1}\right)
e^{\xi_i+\xi_j}\right)_{1\leq
 i,j\leq 2N}
\,,
\end{eqnarray*}
where $c=2^{2N}\prod_{k=1}^{2N}p_k$. 
Using the formula (\ref{formula1}), we can rewrite $\tau_0$ as follows:
\begin{eqnarray*}
\tau_0&=&\frac{1}
{c}\,
\left|
\begin{array}{cccccc}
\Phi_{1,1} & \Phi_{1,2} & \cdots & \Phi_{1,2N} &
-2e^{\xi_1} & (p_1+1)e^{\xi_1} \\
\vdots & \vdots  & \ddots & \vdots &\vdots &\vdots  \\
\Phi_{2N,1} & \Phi_{2N,2}  & \cdots & \Phi_{2N,2N} & -2e^{\xi_{2N}} 
& (p_{2N}+1)e^{\xi_{2N}}  \\
\frac{p_1}{p_{1}-1}e^{\xi_1} & \frac{p_2}{p_{2}-1}e^{\xi_2} & \cdots 
& \frac{p_{2N}}{p_{2N}-1}e^{\xi_{2N}} & 1 &0\\
\frac{1}{p_{1}-1}e^{\xi_1} & \frac{1}{p_{2}-1}e^{\xi_2} & \cdots  & 
\frac{1}{p_{2N}-1}e^{\xi_{2N}} & 0 &1
\end{array}
\right|\\
 &=&
\frac{1}
{c}\,
\left|
\begin{array}{cccccc}
\Phi_{1,1} & \Phi_{1,2} & \cdots & \Phi_{1,2N} &
(p_1-1)e^{\xi_1} & (p_1+1)e^{\xi_1} \\
\vdots & \vdots  & \ddots & \vdots  & \vdots  & \vdots \\
\Phi_{2N,1} & \Phi_{2N,2}  & \cdots  & \Phi_{2N,2N} & (p_{2N}-1)e^{\xi_{2N}} 
& (p_{2N}+1)e^{\xi_{2N}}  \\
\frac{p_1}{p_{1}-1}e^{\xi_1} & \frac{p_2}{p_{2}-1}e^{\xi_2} & \cdots
& \frac{p_{2N}}{p_{2N}-1}e^{\xi_{2N}} & 1 &0\\
\frac{1}{p_{1}-1}e^{\xi_1} & \frac{1}{p_{2}-1}e^{\xi_2} & \cdots  &
\frac{1}{p_{2N}-1}e^{\xi_{2N}} & 1 &1
\end{array}
\right|\,,
\end{eqnarray*}
where $\Phi_{i,j}=2\delta_{j,2N+1-i}\,\alpha_i\frac{p_i^2-1}{p_i-p_j}
+\frac{(p_i-p_j)(p_i+1)}{(p_i+p_j)(p_j-1)}e^{\xi_i+\xi_j}$. 
Then we can further simplify as follows:
\begin{equation*}
\tau_0 =
\frac{\prod_{k=1}^{2N}\frac{p_k+1}{p_k-1}}
{c}\,\left|
\begin{array}{cccccc}
\Psi_{1,1} & \Psi_{1,2} & \cdots & \Psi_{1,2N} &
\frac{p_1-1}{p_1+1}e^{\xi_1} & e^{\xi_1} \\
 \vdots & \vdots  & \ddots &\vdots &\vdots &\vdots  \\
\Psi_{2N,1} & \Psi_{2N,2}  & \cdots  & \Psi_{2N,2N} & \frac{p_{2N}-1}{p_{2N}+1}e^{\xi_{2N}} 
& e^{\xi_{2N}}  \\
p_1e^{\xi_1} & p_2e^{\xi_2} & \cdots 
& p_{2N}e^{\xi_{2N}} & 1 &0\\
e^{\xi_1} & e^{\xi_2} & \cdots & 
e^{\xi_{2N}} & 1 &1
\end{array}
\right|
\,,
\end{equation*}
where $\Psi_{i,j}=2\delta_{j,2N+1-i}\,\alpha_i\frac{(p_i-1)(p_j-1)}{p_i-p_j}
+\frac{p_i-p_j}{p_i+p_j}e^{\xi_i+\xi_j}$. 
Note that the $2N \times 2N$ matrix $(\Psi_{i,j})_{1\leq i,j \leq 2N}$ is skew-symmetric. 
Thus we can use the formula (\ref{formula4}): 
\begin{eqnarray*}
\tau_0
&=&\frac{\prod_{k=1}^{2N}\frac{p_k+1}{p_k-1}}
{c}\,\left(
\left.
\begin{array}{cccccccc}
| & \Psi_{1,2} & \Psi_{1,3} &\cdots & 
 \Psi_{1,2N} & e^{\xi_1} & e^{\xi_1}\\
&  & \Psi_{2,3} & \cdots  & \Psi_{2,2N} & e^{\xi_2} & e^{\xi_2}\\
&  &  & \ddots  &\vdots & \vdots &\vdots \\
 &  & &  &   \Psi_{2N-1, 2N} & e^{\xi_{2N-1}} & e^{\xi_{2N-1}} \\
 &  & &  &        & e^{\xi_{2N}} & e^{\xi_{2N}} \\
&  &  & & &    & 1
\end{array}
\right|\right.\\
 && \times \left.
\begin{array}{cccccccc}
| & \Psi_{1,2} & \Psi_{1,3} &\cdots & 
 \Psi_{1,2N} & \frac{p_1-1}{p_1+1}e^{\xi_1} & p_1e^{\xi_1}\\
&  & \Psi_{2,3} & \cdots  & \Psi_{2,2N} & \frac{p_2-1}{p_2+1}e^{\xi_2} & p_2e^{\xi_2}\\
&  &  & \ddots  &\vdots & \vdots &\vdots \\
 &  & &  &   \Psi_{2N-1, 2N} & \frac{p_{2N-1}-1}{p_{2N-1}+1}e^{\xi_{2N-1}} & p_{2N-1}e^{\xi_{2N-1}} \\
 &  & &  &        & \frac{p_{2N}-1}{p_{2N}+1}e^{\xi_{2N}} & p_{2N}e^{\xi_{2N}} \\
&  &  & & &    & 1
\end{array}
\right|\\
 && -
\left.
\begin{array}{cccccccc}
| & \Psi_{1,2} & \Psi_{1,3} &\cdots & 
 \Psi_{1,2N} & e^{\xi_1} & p_1e^{\xi_1}\\
&  & \Psi_{2,3} & \cdots  & \Psi_{2,2N} & e^{\xi_2} & p_2e^{\xi_2}\\
&  &  & \ddots  &\vdots & \vdots &\vdots \\
 &  & &  &   \Psi_{2N-1, 2N} & e^{\xi_{2N-1}} & p_{2N-1}e^{\xi_{2N-1}} \\
 &  & &  &        & e^{\xi_{2N}} & p_{2N}e^{\xi_{2N}} \\
&  &  & & &    & 0
\end{array}
\right|\\
 && \left. \times \left.
\begin{array}{cccccccc}
| & \Psi_{1,2} & \Psi_{1,3} &\cdots & 
 \Psi_{1,2N} & \frac{p_1-1}{p_1+1}e^{\xi_1} & e^{\xi_1}\\
&  & \Psi_{2,3} & \cdots  & \Psi_{2,2N} & \frac{p_2-1}{p_2+1}e^{\xi_2} & e^{\xi_2}\\
&  &  & \ddots  &\vdots & \vdots &\vdots \\
 &  & &  &   \Psi_{2N-1, 2N} & \frac{p_{2N-1}-1}{p_{2N-1}+1}e^{\xi_{2N-1}} & e^{\xi_{2N-1}} \\
 &  & &  &        & \frac{p_{2N}-1}{p_{2N}+1}e^{\xi_{2N}} & e^{\xi_{2N}} \\
&  &  & & &    & 1
\end{array}
\right|\right)
\end{eqnarray*}
Let
\begin{equation*}
\fl  g_1= {\rm pf}\left(\Psi_{i,j}\right)_{1\leq i,j\leq
2N}\\
  =\left.
\begin{array}{cccccccc}
| & \Psi_{1,2} & \Psi_{1,3} &\cdots & 
 \Psi_{1,2N} & e^{\xi_1} & e^{\xi_1}\\
&  & \Psi_{2,3} & \cdots  & \Psi_{2,2N} & e^{\xi_2} & e^{\xi_2}\\
&  &  & \ddots  &\vdots & \vdots &\vdots \\
 &  & &  &   \Psi_{2N-1, 2N} & e^{\xi_{2N-1}} & e^{\xi_{2N-1}} \\
 &  & &  &        & e^{\xi_{2N}} & e^{\xi_{2N}} \\
&  &  & & &    & 1
\end{array}
\right|\,,\\
\end{equation*}
\begin{eqnarray*}
\fl  g_2&=&{\rm pf}\left(2\alpha_i\frac{(p_i+1)(p_j+1)}{p_i-p_j}\delta_{j,2N+1-i}
+\frac{p_i-p_j}{p_i+p_j}e^{\xi_i+\xi_j}\right)_{1\leq
i,j\leq 2N}\\
\fl  &=&\prod_{k=1}^{2N}\frac{p_k+1}{p_k-1}\,{\rm pf}\left(\Psi_{ij}-\frac{p_i-1}{p_i+1}e^{\xi_i+\xi_j}+
\frac{p_j-1}{p_j+1}e^{\xi_i+\xi_j}\right)_{1\leq i,j\leq 2N}\\
\fl  &=&\prod_{k=1}^{2N}\frac{p_k+1}{p_k-1}\times
\left.
\begin{array}{cccccccc}
| & \Psi_{1,2} & \Psi_{1,3} &\cdots & 
 \Psi_{1,2N} & \frac{p_1-1}{p_1+1}e^{\xi_1} & e^{\xi_1}\\
&  & \Psi_{2,3} & \cdots  & \Psi_{2,2N} & \frac{p_2-1}{p_2+1}e^{\xi_2} & e^{\xi_2}\\
&  &  & \ddots  &\vdots & \vdots &\vdots \\
 &  & &  &   \Psi_{2N-1, 2N} & \frac{p_{2N-1}-1}{p_{2N-1}+1}e^{\xi_{2N-1}} & e^{\xi_{2N-1}} \\
 &  & &  &        & \frac{p_{2N}-1}{p_{2N}+1}e^{\xi_{2N}} & e^{\xi_{2N}} \\
&  &  & & &    & 1
\end{array}
\right|\,.
\end{eqnarray*}
Then
\begin{eqnarray*}
\fl  &&\partial_{x_1}g_1=\left.
\begin{array}{cccccccc}
| & \Psi_{1,2} & \Psi_{1,3} &\cdots & 
 \Psi_{1,2N} & e^{\xi_1} & p_1e^{\xi_1}\\
&  & \Psi_{2,3} & \cdots  & \Psi_{2,2N} & e^{\xi_2} & p_2e^{\xi_2}\\
&  &  & \ddots  &\vdots & \vdots &\vdots \\
 &  & &  &   \Psi_{2N-1, 2N} & e^{\xi_{2N-1}} & p_{2N-1}e^{\xi_{2N-1}} \\
 &  & &  &        & e^{\xi_{2N}} & p_{2N}e^{\xi_{2N}} \\
&  &  & & &    & 0
\end{array}
\right|\,,\\
\fl  &&(\partial_{x_1}+1)g_2=
(\partial_{x_1}+1){\rm pf}\left(2\alpha_i\frac{(p_i+1)(p_j+1)}{p_i-p_j}\delta_{j,2N+1-i}
+\frac{p_i-p_j}{p_i+p_j}e^{\xi_i+\xi_j}\right)_{1\leq
i,j\leq 2N}
\\
\fl  &&\qquad  = 
{\rm pf}\left(2\alpha_i\frac{(p_i+1)(p_j+1)}{p_i-p_j}\delta_{j,2N+1-i}
+\frac{p_i-p_j}{p_i+p_j}e^{\xi_i+\xi_j}+(p_i-p_j)e^{\xi_i+\xi_j}\right)_{1\leq
i,j\leq 2N}
\\
\fl  &&\qquad  = \prod_{k=1}^{2N}\frac{p_k+1}{p_k-1}\,
{\rm pf}\left(\Psi_{ij}-\frac{p_i-1}{p_i+1}p_je^{\xi_i+\xi_j}+
p_i\frac{p_j-1}{p_j+1}e^{\xi_i+\xi_j}\right)_{1\leq i,j\leq 2N}
\\
\fl  &&\qquad =\prod_{k=1}^{2N}\frac{p_k+1}{p_k-1}\times 
\left.
\begin{array}{cccccccc}
| & \Psi_{1,2} & \Psi_{1,3} &\cdots & 
 \Psi_{1,2N} & \frac{p_1-1}{p_1+1}e^{\xi_1} & p_1e^{\xi_1}\\
&  & \Psi_{2,3} & \cdots  & \Psi_{2,2N} & \frac{p_2-1}{p_2+1}e^{\xi_2} & p_2e^{\xi_2}\\
&  &  & \ddots  &\vdots & \vdots &\vdots \\
 &  & &  &   \Psi_{2N-1, 2N} & \frac{p_{2N-1}-1}{p_{2N-1}+1}e^{\xi_{2N-1}} & p_{2N-1}e^{\xi_{2N-1}} \\
 &  & &  &        & \frac{p_{2N}-1}{p_{2N}+1}e^{\xi_{2N}} & p_{2N}e^{\xi_{2N}} \\
&  &  & & &    & 1
\end{array}
\right|\,.
\end{eqnarray*}
Thus $\tau_0$, $g_1$ and $g_2$ satisfy the relation (\ref{tau0-g1-g2}). 
\end{proof}

\begin{lemma}\label{lemma:pfaffian-C-3}
The $\tau$-function 
$\tau_2$ of the bilinear equations (\ref{3CToda-bilinear1}) and 
(\ref{3CToda-bilinear2}) with the pseudo 3-reduction constraint 
satisfies the relation
\begin{equation}
\tau_2=\frac{1}{c}
\,(g_1g_2-D_{x_{-1}}g_1\cdot g_2)\,,\label{tau2-g1-g2}
\end{equation}
with
\begin{eqnarray*}
&&g_1= {\rm pf}\left(2\alpha_i\frac{(p_i-1)(p_j-1)}{p_i-p_j}\delta_{j,2N+1-i}
+\frac{p_i-p_j}{p_i+p_j}e^{\xi_i+\xi_j}\right)_{1\leq i,j\leq
2N}\,,\\
&&g_2={\rm pf}\left(2\alpha_i\frac{(p_i+1)(p_j+1)}{p_i-p_j}\delta_{j,2N+1-i}
+\frac{p_i-p_j}{p_i+p_j}e^{\xi_i+\xi_j}\right)_{1\leq
i,j\leq 2N}\,,
\end{eqnarray*}
 where $\xi_i=p_i^{-1}x_{-1}+p_ix_1+\xi_i^0$, 
$p_i^2-p_ip_{2N+1-i}+p_{2N+1-i}^2=1$\,, $\alpha_i=\alpha_{2N+1-i}$,
 $c=2^{2N}\prod_{k=1}^{2N}p_k$.   
\end{lemma}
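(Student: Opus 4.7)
The proof parallels that of Lemma~2.5 (for $\tau_0$), with a single key change in the algebraic decomposition that converts $D_{x_1}$ into $D_{x_{-1}}$. Starting from the Gram determinant $\tau_2=\det(\psi_{i,j}^{(2)})_{1\le i,j\le 2N}$ with $\psi_{i,j}^{(2)}=\delta_{j,2N+1-i}\alpha_i+\frac{p_i^2}{p_j^2(p_i+p_j)}e^{\xi_i+\xi_j}$, I would first use the pseudo $3$-reduction identity $p_i^2-1=p_{2N+1-i}(p_i-p_{2N+1-i})$ (valid when $\delta_{j,2N+1-i}=1$) to rewrite the Kronecker term as $\delta_{j,2N+1-i}\alpha_i(p_i^2-1)/[p_j(p_i-p_j)]$, and then factor $1/(2p_j)$ from each column to extract the prefactor $1/c=1/(2^{2N}\prod p_k)$, yielding
\[
\tau_2=\frac{1}{c}\det\Bigl(2\delta_{j,2N+1-i}\alpha_i\frac{p_i^2-1}{p_i-p_j}+\frac{2p_i^2}{p_j(p_i+p_j)}e^{\xi_i+\xi_j}\Bigr)_{1\le i,j\le 2N}.
\]

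The critical algebraic step is the partial-fraction decomposition
\[
\frac{2p_i^2}{p_j(p_i+p_j)}=\frac{(p_i-p_j)(p_i+1)}{(p_i+p_j)(p_j-1)}-\frac{p_i-1}{p_j-1}+\frac{2p_i}{p_j},
\]
the $\tau_2$-analog of the identity used in Lemma~2.5. The decisive difference is the last summand: the rank-one modification $\frac{2p_i}{p_j}e^{\xi_i+\xi_j}$ factors as $(2p_ie^{\xi_i})\cdot(\frac{1}{p_j}e^{\xi_j})$, whose column-factor $\frac{1}{p_j}e^{\xi_j}=\partial_{x_{-1}}e^{\xi_j}$ is the algebraic signature of $D_{x_{-1}}$, whereas $\tau_0$'s corresponding $\frac{2p_j}{p_j-1}e^{\xi_i+\xi_j}$ produced a column-factor in the $\partial_{x_1}$-family. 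I would next apply formula~(\ref{formula1}) to realize the two rank-one modifications as a $(2N+2)\times(2N+2)$ bordered determinant, then extract $\prod_k(p_k+1)/(p_k-1)$ by scaling rows and columns exactly as in Lemma~2.5, so that the main $2N\times 2N$ block becomes $(\Psi_{i,j})$. A pair of tidying column/row operations on the border brings it into a form whose two clean entries are $e^{\xi_i}$ and $\frac{p_i-1}{p_i+1}e^{\xi_i}$ (as in Lemma~2.5, generating $g_1$ and $g_2$), while the remaining pair now involves $\frac{1}{p_j}e^{\xi_j}$ in place of the $p_je^{\xi_j}$ appearing in Lemma~2.5.

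Finally, since the main block $(\Psi_{i,j})$ is skew-symmetric, I apply the Pl\"ucker-type pfaffian identity~(\ref{formula4}) to split the bordered determinant as a signed combination $T_1T_2-T_3T_4$ of four bordered pfaffians of $(\Psi_{i,j})$. Two of these give the bordered-pfaffian representations of $g_1$ and $g_2$ from Lemma~2.5; the other two are to be identified with $\partial_{x_{-1}}g_1$ and $\partial_{x_{-1}}g_2$ through the relation $\partial_{x_{-1}}\Psi_{i,j}=\frac{p_i-p_j}{p_ip_j}e^{\xi_i+\xi_j}$, a skew rank-two perturbation generated precisely by the vectors $(e^{\xi_i})$ and $(\frac{1}{p_i}e^{\xi_i})$ that now appear in the border. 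Collecting the four products yields $\tau_2=\frac{1}{c}\bigl(g_1g_2-(\partial_{x_{-1}}g_1)g_2+g_1(\partial_{x_{-1}}g_2)\bigr)=\frac{1}{c}(g_1g_2-D_{x_{-1}}g_1\cdot g_2)$. The main obstacle I anticipate is the careful bookkeeping of signs and column orderings in the bordered determinant, to ensure that the Pl\"ucker identity produces $\partial_{x_{-1}}$-derivatives of $g_1,g_2$ with the correct signs rather than a mixture of $\partial_{x_1}$ and $\partial_{x_{-1}}$ contributions; the key observation is that $\frac{1}{p_j}e^{\xi_j}=\partial_{x_{-1}}e^{\xi_j}$, and one must track how this factor propagates through the pfaffian expansion.
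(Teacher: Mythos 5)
Your proposal is correct, and it reaches the same bordered-determinant-plus-Jacobi-identity endgame as the paper, but by a genuinely different intermediate route. The paper does \emph{not} introduce a new partial-fraction identity for $\tau_2$: it rescales row $i$ by $p_i^{-1}$ and column $j$ by $p_j$ so that every entry is rewritten in the inverted parameters, $2\delta_{j,2N+1-i}\alpha_i\frac{p_i^{-2}-1}{p_i^{-1}-p_j^{-1}}+\frac{2p_j^{-1}}{p_i^{-1}+p_j^{-1}}e^{\xi_i+\xi_j}$, and then reuses \emph{verbatim} the $\tau_0$ decomposition with $p\to p^{-1}$; since $\xi_i=p_ix_1+p_i^{-1}x_{-1}+\xi_i^0$ is symmetric under $(p,x_1)\leftrightarrow(p^{-1},x_{-1})$, the $\tau_0$ argument transcribes line by line with $\partial_{x_1}$ replaced by $\partial_{x_{-1}}$, at the price of working with $\tilde\Psi_{i,j}=-\Psi_{i,j}$ and the prefactor $\prod_k\frac{p_k^{-1}+1}{p_k^{-1}-1}=\prod_k\frac{p_k+1}{p_k-1}$. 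Your route stays in the $p$-variables and swaps out the single identity: your decomposition $\frac{2p_i^2}{p_j(p_i+p_j)}=\frac{(p_i-p_j)(p_i+1)}{(p_i+p_j)(p_j-1)}-\frac{p_i-1}{p_j-1}+\frac{2p_i}{p_j}$ checks out (the first two terms combine to $-\frac{2p_i}{p_i+p_j}$, and adding $\frac{2p_i}{p_j}$ recovers the left side), and it keeps the core block equal to the same $\Phi_{i,j}$, hence the same $\Psi_{i,j}$ after scaling, as in the $\tau_0$ case, so no $\tilde\Psi=-\Psi$ sign issue arises. The border tidying you defer does work: after scaling row $i$ by $(p_i+1)^{-1}$ and column $j$ by $(p_j-1)$, the border rows are $e^{\xi_j}$ and $\frac{p_j-1}{p_j}e^{\xi_j}=e^{\xi_j}-p_j^{-1}e^{\xi_j}$, so one row subtraction produces the $p_j^{-1}e^{\xi_j}=\partial_{x_{-1}}e^{\xi_j}$ border you need, while the border columns $\frac{p_i-1}{p_i+1}e^{\xi_i}$ and $\frac{-2p_i}{p_i+1}e^{\xi_i}$ sum to $-e^{\xi_i}$; formula \eqref{formula4} then yields $g_1\cdot(\partial_{x_{-1}}+1)g_2-(\partial_{x_{-1}}g_1)\cdot g_2=g_1g_2-D_{x_{-1}}g_1\cdot g_2$ up to the factor $\prod_k\frac{p_k+1}{p_k-1}$ that cancels against the one extracted earlier. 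Your version is, if anything, slightly cleaner on signs; what the paper's inversion trick buys is that the $\tau_2$ proof needs no new algebra at all beyond the substitution $p\to p^{-1}$.
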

\begin{proof}
Suppose that $\alpha_i=\alpha_{2N+1-i}$ and
$p_i^2-p_ip_{2N+1-i}+p_{2N+1-i}^2=1$ are satisfied. 
Then we can rewrite $\tau_2$ as follows:
\begin{eqnarray*}
\fl  \tau_2&=&
{\rm det}\left(\psi_{i,j}^{(2)}\right)_{1\leq i,j\leq 2N}\\
\fl  &=&{\rm
 det}\left(\delta_{j,2N+1-i}\,\alpha_i\frac{p_{2N+1-i}(p_i-p_{2N+1-i})}{p_j(p_i-p_j)}
+\frac{1}{p_i+p_j}
\frac{p_i^2}{p_j^2}e^{\xi_i+\xi_j}\right)_{1\leq i,j\leq 2N}\\
\fl  &=&{\rm
 det}\left(\delta_{j,2N+1-i}\,\alpha_i\frac{p_ip_{2N+1-i}-p_{2N+1-i}^2}{p_j(p_i-p_j)}
+\frac{1}{p_i+p_j}
\frac{p_i^2}{p_j^2}e^{\xi_i+\xi_j}\right)_{1\leq i,j\leq 2N}\\
\fl  &=&{\rm
 det}\left(\delta_{j,2N+1-i}\,\alpha_i\frac{p_i^2-1}{p_j(p_i-p_j)}
+\frac{1}{p_i+p_j}
\frac{p_i^2}{p_j^2}e^{\xi_i+\xi_j}\right)_{1\leq i,j\leq 2N}\\
\fl  &=&\frac{1}{c}\,{\rm
 det}\left(2\delta_{j,2N+1-i}\,\alpha_i\frac{p_i^2-1}{p_i-p_j}
+\frac{2p_i^2}{p_i+p_j}\frac{1}{p_j}e^{\xi_i+\xi_j}\right)_{1\leq
 i,j\leq 2N}
\\
\fl  &=&\frac{1}{c}\,{\rm
 det}\left(2\delta_{j,2N+1-i}\,\alpha_i\frac{p_j}{p_i}\frac{p_i^2-1}{p_i-p_j}
+\frac{2p_i}{p_i+p_j}e^{\xi_i+\xi_j}\right)_{1\leq
 i,j\leq 2N}
\\
\fl  &=&\frac{1}{c}\,{\rm
 det}\left(2\delta_{j,2N+1-i}\,\alpha_i\frac{p_i^{-2}-1}
{p_i^{-1}-p_j^{-1}}
+\frac{2p_i}{p_i+p_j}e^{\xi_i+\xi_j}\right)_{1\leq
 i,j\leq 2N}
\\
\fl  &=&\frac{1}{c}\,{\rm
 det}\left(2\delta_{j,2N+1-i}\,\alpha_i\frac{p_i^{-2}-1}
{p_i^{-1}-p_j^{-1}}\right.\\
\fl &&\left.\qquad +\left(\frac{(p_i^{-1}-p_j^{-1})
(p_i^{-1}+1)}{(p_i^{-1}+p_j^{-1})(p_j^{-1}-1)}
-\frac{p_i^{-1}+1}{p_j^{-1}-1}+\frac{2p_j^{-1}}{p_j^{-1}-1}\right)
e^{\xi_i+\xi_j}\right)_{1\leq
 i,j\leq 2N}
\,,
\end{eqnarray*}
where $c=2^{2N}\prod_{k=1}^{2N}p_k$. 
Using the formula (\ref{formula1}), we can rewrite $\tau_2$ as follows:
\begin{eqnarray*}
\fl  &&  \tau_2=\frac{1}
{c}\,
\left|
\begin{array}{cccccc}
\tilde{\Phi}_{1,1} & \tilde{\Phi}_{1,2} & \cdots & \tilde{\Phi}_{1,2N} &
-2e^{\xi_1} & (p_1^{-1}+1)e^{\xi_1} \\
\vdots & \vdots  & \ddots & \vdots &\vdots &\vdots  \\
\tilde{\Phi}_{2N,1} & \tilde{\Phi}_{2N,2}  & \cdots & \tilde{\Phi}_{2N,2N} & -2e^{\xi_{2N}} 
& (p_{2N}^{-1}+1)e^{\xi_{2N}}  \\
\frac{p_1^{-1}}{p_{1}^{-1}-1}e^{\xi_1} & \frac{p_2^{-1}}{p_{2}^{-1}-1}e^{\xi_2} & \cdots 
& \frac{p_{2N}^{-1}}{p_{2N}^{-1}-1}e^{\xi_{2N}} & 1 &0\\
\frac{1}{p_{1}^{-1}-1}e^{\xi_1} & \frac{1}{p_{2}^{-1}-1}e^{\xi_2} & \cdots  & 
\frac{1}{p_{2N}^{-1}-1}e^{\xi_{2N}} & 0 &1
\end{array}
\right|\\
\fl  && \quad =
\frac{1}
{c}\,
\left|
\begin{array}{cccccc}
\tilde{\Phi}_{1,1} & \tilde{\Phi}_{1,2} & \cdots & \tilde{\Phi}_{1,2N} &
(p_1^{-1}-1)e^{\xi_1} & (p_1^{-1}+1)e^{\xi_1} \\
\vdots & \vdots  & \ddots & \vdots  & \vdots  & \vdots \\
\tilde{\Phi}_{2N,1} & \tilde{\Phi}_{2N,2}  & \cdots  & \tilde{\Phi}_{2N,2N} & (p_{2N}^{-1}-1)e^{\xi_{2N}} 
& (p_{2N}^{-1}+1)e^{\xi_{2N}}  \\
\frac{p_1^{-1}}{p_{1}^{-1}-1}e^{\xi_1} & \frac{p_2^{-1}}{p_{2}^{-1}-1}e^{\xi_2} & \cdots
& \frac{p_{2N}^{-1}}{p_{2N}^{-1}-1}e^{\xi_{2N}} & 1 &0\\
\frac{1}{p_{1}^{-1}-1}e^{\xi_1} & \frac{1}{p_{2}^{-1}-1}e^{\xi_2} & \cdots  &
\frac{1}{p_{2N}^{-1}-1}e^{\xi_{2N}} & 1 &1
\end{array}
\right|\,,
\end{eqnarray*}
where $\tilde{\Phi}_{i,j}=2\delta_{j,2N+1-i}\,\alpha_i\frac{p_i^{-2}-1}{p_i^{-1}-p_j^{-1}}
+\frac{(p_i^{-1}-p_j^{-1})(p_i^{-1}+1)}{(p_i^{-1}+p_j^{-1})(p_j^{-1}-1)}e^{\xi_i+\xi_j}$. 
Then we can further simplify as follows:
\begin{equation*}
\tau_2 =
\frac{\prod_{k=1}^{2N}\frac{p_k^{-1}+1}{p_k^{-1}-1}}
{c}\,
\left|
\begin{array}{cccccc}
\tilde{\Psi}_{1,1} & \tilde{\Psi}_{1,2} & \cdots & \tilde{\Psi}_{1,2N} &
\frac{p_1^{-1}-1}{p_1^{-1}+1}e^{\xi_1} & e^{\xi_1} \\
\vdots & \vdots  & \ddots &\vdots &\vdots &\vdots  \\
\tilde{\Psi}_{2N,1} & \tilde{\Psi}_{2N,2}  & \cdots  & \tilde{\Psi}_{2N,2N} & 
\frac{p_{2N}^{-1}-1}{p_{2N}^{-1}+1}e^{\xi_{2N}} 
& e^{\xi_{2N}}  \\
p_1^{-1}e^{\xi_1} & p_2^{-1}e^{\xi_2} & \cdots 
& p_{2N}^{-1}e^{\xi_{2N}} & 1 &0\\
e^{\xi_1} & e^{\xi_2} & \cdots & 
e^{\xi_{2N}} & 1 &1
\end{array}
\right|
\,,
\end{equation*}
where
 $\tilde{\Psi}_{i,j}=2\delta_{j,2N+1-i}\,\alpha_i\frac{(p_i^{-1}-1)(p_j^{-1}-1)}
{p_i^{-1}-p_j^{-1}}
+\frac{p_i^{-1}-p_j^{-1}}{p_i^{-1}+p_j^{-1}}e^{\xi_i+\xi_j}$. 
Note that the $2N \times 2N$ matrix $(\tilde{\Psi}_{i,j})_{1\leq i,j\leq
 2N}$ is
 skew-symmetric and $\tilde{\Psi}_{i,j}=-\Psi_{i,j}$.  
Thus we can use the formula (\ref{formula4}): 
\begin{eqnarray*}
 &&\tau_2
=\frac{\prod_{k=1}^{2N}\frac{p_k^{-1}+1}{p_k^{-1}-1}}
{c}\,\left(
\left.
\begin{array}{cccccccc}
| & \tilde{\Psi}_{1,2} & \tilde{\Psi}_{1,3} &\cdots & 
 \tilde{\Psi}_{1,2N} & e^{\xi_1} & e^{\xi_1}\\
&  & \tilde{\Psi}_{2,3} & \cdots  & \tilde{\Psi}_{2,2N} & e^{\xi_2} & e^{\xi_2}\\
&  &  & \ddots  &\vdots & \vdots &\vdots \\
 &  & &  &   \tilde{\Psi}_{2N-1, 2N} & e^{\xi_{2N-1}} & e^{\xi_{2N-1}} \\
 &  & &  &        & e^{\xi_{2N}} & e^{\xi_{2N}} \\
&  &  & & &    & 1
\end{array}
\right|\right.\\
 && \times \left.
\begin{array}{cccccccc}
| & \tilde{\Psi}_{1,2} & \tilde{\Psi}_{1,3} &\cdots & 
 \tilde{\Psi}_{1,2N} & \frac{p_1^{-1}-1}{p_1^{-1}+1}e^{\xi_1} & p_1^{-1}e^{\xi_1}\\
&  & \tilde{\Psi}_{2,3} & \cdots  & \tilde{\Psi}_{2,2N} &
 \frac{p_2^{-1}-1}{p_2^{-1}+1}e^{\xi_2} 
& p_2^{-1}e^{\xi_2}\\
&  &  & \ddots  &\vdots & \vdots &\vdots \\
 &  & &  &   \tilde{\Psi}_{2N-1, 2N} &
  \frac{p_{2N-1}^{-1}-1}{p_{2N-1}^{-1}+1}e^{\xi_{2N-1}} 
& p_{2N-1}^{-1}e^{\xi_{2N-1}} \\
 &  & &  &        
& \frac{p_{2N}^{-1}-1}{p_{2N}^{-1}+1}e^{\xi_{2N}} & p_{2N}^{-1}e^{\xi_{2N}} \\
&  &  & & &    & 1
\end{array}
\right|\\
 && -
\left.
\begin{array}{cccccccc}
| & \tilde{\Psi}_{1,2} & \tilde{\Psi}_{1,3} &\cdots & 
 \tilde{\Psi}_{1,2N} & e^{\xi_1} & p_1^{-1}e^{\xi_1}\\
&  & \tilde{\Psi}_{2,3} & \cdots  & \tilde{\Psi}_{2,2N} & e^{\xi_2} & p_2^{-1}e^{\xi_2}\\
&  &  & \ddots  &\vdots & \vdots &\vdots \\
 &  & &  &   \tilde{\Psi}_{2N-1, 2N} & e^{\xi_{2N-1}} & p_{2N-1}^{-1}e^{\xi_{2N-1}} \\
 &  & &  &        & e^{\xi_{2N}} & p_{2N}^{-1}e^{\xi_{2N}} \\
&  &  & & &    & 0
\end{array}
\right|\\
 && \left. \times \left.
\begin{array}{cccccccc}
| & \tilde{\Psi}_{1,2} & \tilde{\Psi}_{1,3} &\cdots & 
 \tilde{\Psi}_{1,2N} & \frac{p_1^{-1}-1}{p_1^{-1}+1}e^{\xi_1} & e^{\xi_1}\\
&  & \tilde{\Psi}_{2,3} & \cdots  & \tilde{\Psi}_{2,2N} & 
\frac{p_2^{-1}-1}{p_2^{-1}+1}e^{\xi_2} & e^{\xi_2}\\
&  &  & \ddots  &\vdots & \vdots &\vdots \\
 &  & &  &   \tilde{\Psi}_{2N-1, 2N} 
& \frac{p_{2N-1}^{-1}-1}{p_{2N-1}^{-1}+1}e^{\xi_{2N-1}} & e^{\xi_{2N-1}} \\
 &  & &  &        & \frac{p_{2N}^{-1}-1}{p_{2N}^{-1}+1}e^{\xi_{2N}} & e^{\xi_{2N}} \\
&  &  & & &    & 1
\end{array}
\right|\right)\,.
\end{eqnarray*}
Let
\begin{eqnarray*}
\fl  g_1&=& {\rm pf}\left(\Psi_{i,j}\right)_{1\leq i,j\leq
2N}= {\rm pf}\left(\tilde{\Psi}_{i,j}\right)_{1\leq i,j\leq
2N}\\
\fl & =&\left.
\begin{array}{cccccccc}
| & \tilde{\Psi}_{1,2} & \tilde{\Psi}_{1,3} &\cdots & 
 \tilde{\Psi}_{1,2N} & e^{\xi_1} & e^{\xi_1}\\
&  & \tilde{\Psi}_{2,3} & \cdots  & \tilde{\Psi}_{2,2N} & e^{\xi_2} & e^{\xi_2}\\
&  &  & \ddots  &\vdots & \vdots &\vdots \\
 &  & &  &   \tilde{\Psi}_{2N-1, 2N} & e^{\xi_{2N-1}} & e^{\xi_{2N-1}} \\
 &  & &  &        & e^{\xi_{2N}} & e^{\xi_{2N}} \\
&  &  & & &    & 1
\end{array}
\right|\,,
\end{eqnarray*}
\begin{eqnarray*}
\fl  g_2&=&{\rm pf}\left(2\alpha_i\frac{(p_i+1)(p_j+1)}{p_i-p_j}\delta_{j,2N+1-i}
+\frac{p_i-p_j}{p_i+p_j}e^{\xi_i+\xi_j}\right)_{1\leq
i,j\leq 2N}\\
\fl  &=&{\rm pf}\left(2\alpha_i\frac{(p_i^{-1}+1)(p_j^{-1}+1)}{p_i^{-1}-p_j^{-1}}\delta_{j,2N+1-i}
+\frac{p_i^{-1}-p_j^{-1}}{p_i^{-1}+p_j^{-1}}e^{\xi_i+\xi_j}\right)_{1\leq
i,j\leq 2N}\nonumber\\
\fl  &=&\prod_{k=1}^{2N}\frac{p_k^{-1}+1}{p_k^{-1}-1}\,
{\rm pf}\left(\tilde{\Psi}_{ij}-\frac{p_i^{-1}-1}{p_i^{-1}+1}e^{\xi_i+\xi_j}+
\frac{p_j^{-1}-1}{p_j^{-1}+1}e^{\xi_i+\xi_j}\right)_{1\leq i,j\leq 2N}\\
\fl  &=&\prod_{k=1}^{2N}\frac{p_k^{-1}+1}{p_k^{-1}-1}\times
\left.
\begin{array}{cccccccc}
| & \tilde{\Psi}_{1,2} & \tilde{\Psi}_{1,3} &\cdots & 
 \tilde{\Psi}_{1,2N} & \frac{p_1^{-1}-1}{p_1^{-1}+1}e^{\xi_1} & e^{\xi_1}\\
&  & \tilde{\Psi}_{2,3} & \cdots  & \tilde{\Psi}_{2,2N} &
 \frac{p_2^{-1}-1}{p_2^{-1}+1}
e^{\xi_2} & e^{\xi_2}\\
&  &  & \ddots  &\vdots & \vdots &\vdots \\
 &  & &  &   \tilde{\Psi}_{2N-1, 2N} & 
\frac{p_{2N-1}^{-1}-1}{p_{2N-1}^{-1}+1}e^{\xi_{2N-1}} & e^{\xi_{2N-1}} \\
 &  & &  &        & \frac{p_{2N}^{-1}-1}{p_{2N}^{-1}+1}e^{\xi_{2N}} & e^{\xi_{2N}} \\
&  &  & & &    & 1
\end{array}
\right|\,.
\end{eqnarray*}
Then
\begin{equation*}
\fl \partial_{x_{-1}}g_1=\left.
\begin{array}{cccccccc}
| & \tilde{\Psi}_{1,2} & \tilde{\Psi}_{1,3} &\cdots & 
 \tilde{\Psi}_{1,2N} & e^{\xi_1} & p_1^{-1}e^{\xi_1}\\
&  & \tilde{\Psi}_{2,3} & \cdots  & \tilde{\Psi}_{2,2N} & e^{\xi_2} & p_2^{-1}e^{\xi_2}\\
&  &  & \ddots  &\vdots & \vdots &\vdots \\
 &  & &  &   \tilde{\Psi}_{2N-1, 2N} & e^{\xi_{2N-1}} & p_{2N-1}^{-1}e^{\xi_{2N-1}} \\
 &  & &  &        & e^{\xi_{2N}} & p_{2N}^{-1}e^{\xi_{2N}} \\
&  &  & & &    & 0
\end{array}
\right|\,,
\end{equation*}
\begin{eqnarray*}
\fl &&(\partial_{x_{-1}}+1)g_2\\
\fl &&=
(\partial_{x_{-1}}+1){\rm pf}\left(2\alpha_i
\frac{(p_i^{-1}+1)(p_j^{-1}+1)}{p_i^{-1}-p_j^{-1}}\delta_{j,2N+1-i}
+\frac{p_i^{-1}-p_j^{-1}}{p_i^{-1}+p_j^{-1}}e^{\xi_i+\xi_j}\right)_{1\leq
i,j\leq 2N}
\\
\fl  && = 
{\rm pf}\left(2\alpha_i\frac{(p_i^{-1}+1)(p_j^{-1}+1)}{p_i^{-1}-p_j^{-1}}\delta_{j,2N+1-i}
+\frac{p_i^{-1}-p_j^{-1}}{p_i^{-1}+p_j^{-1}}e^{\xi_i+\xi_j}+(p_i^{-1}-p_j^{-1}
)e^{\xi_i+\xi_j}\right)_{1\leq
i,j\leq 2N}
\\
\fl  && = \prod_{k=1}^{2N}\frac{p_k^{-1}+1}{p_k^{-1}-1}\,
{\rm pf}\left(\tilde{\Psi}_{ij}-\frac{p_i^{-1}-1}{p_i^{-1}+1}p_j^{-1}e^{\xi_i+\xi_j}+
p_i^{-1}\frac{p_j^{-1}-1}{p_j^{-1}+1}e^{\xi_i+\xi_j}\right)_{1\leq i,j\leq 2N}\nonumber
\\
\fl  &&=\prod_{k=1}^{2N}\frac{p_k^{-1}+1}{p_k^{-1}-1}\times 
\left.
\begin{array}{cccccccc}
| & \tilde{\Psi}_{1,2} & \tilde{\Psi}_{1,3} &\cdots & 
 \tilde{\Psi}_{1,2N} & \frac{p_1^{-1}-1}{p_1^{-1}+1}e^{\xi_1} & p_1^{-1}e^{\xi_1}\\
&  & \tilde{\Psi}_{2,3} & \cdots  & \tilde{\Psi}_{2,2N} & 
\frac{p_2^{-1}-1}{p_2^{-1}+1}e^{\xi_2} & p_2^{-1}e^{\xi_2}\\
&  &  & \ddots  &\vdots & \vdots &\vdots \\
 &  & &  &   \tilde{\Psi}_{2N-1, 2N} 
& \frac{p_{2N-1}^{-1}-1}{p_{2N-1}^{-1}+1}e^{\xi_{2N-1}} & p_{2N-1}^{-1}e^{\xi_{2N-1}} \\
 &  & &  &        & \frac{p_{2N}^{-1}-1}{p_{2N}^{-1}+1}e^{\xi_{2N}} & p_{2N}^{-1}e^{\xi_{2N}} \\
&  &  & & &    & 1
\end{array}
\right|\,.
\end{eqnarray*}
Thus $\tau_2$, $g_1$ and $g_2$ satisfy the relation (\ref{tau2-g1-g2}). 
\end{proof}

Letting $F=\tau_{0}$, $G=\tau_{1}$ and $H=\tau_{2}$, 
we obtain the following equations 
\begin{eqnarray}
&&-\left(\frac{1}{2}D_{x_1}D_{x_{-1}}-1\right)F\cdot F=G^2\,,\label{dp-bilinear-1}\\
&&-\left(\frac{1}{2}D_{x_1}D_{x_{-1}}-1\right)G\cdot G=FH\,,\label{dp-bilinear-2}\\
&&cG=g_1g_2\,,\label{dp-relation-1}\\
&&cF=g_1g_2-D_{x_{1}}g_1\cdot g_2\,,\label{dp-relation-2}\\
&&cH=g_1g_2-D_{x_{-1}}g_1\cdot g_2\label{dp-relation-3}\,,
\end{eqnarray}
from the bilinear equations 
(\ref{3CToda-bilinear1}) and 
(\ref{3CToda-bilinear2}), and the relations between determinants and
pfaffians 
(\ref{tau1-g1-g2}), (\ref{tau0-g1-g2}), (\ref{tau2-g1-g2}). 
   
\begin{theorem}\label{theorem:hodograph}
The $\tau$-functions $g_1$ (\ref{g1}) and $g_2$ (\ref{g2}) of 
equations (\ref{dp-bilinear-1}), (\ref{dp-bilinear-2}),
 (\ref{dp-relation-1}), (\ref{dp-relation-2}) and (\ref{dp-relation-3})  
give the $N$-soliton solution of the DP equation
\begin{equation}
u_t+3u_x- u_{txx}+4uu_x=3u_xu_{xx}+uu_{xxx}\,,\label{DP-eq-nokappa}
\end{equation} 
through the dependent variable transformation
\[
u=-\left(\ln \frac{g_1}{g_2}\right)_{x_{-1}}\,, 
\] 
and the hodograph (reciprocal) transformation
\begin{equation}
\left\{
\begin{array}{c}
x=x_{1}+\int_{-\infty}^{x_{-1}}u(x_1,x_{-1}')dx_{-1}'\\
=x_{1}-\ln \frac{g_1}{g_2}\,,\qquad \qquad \quad  \\
t=x_{-1} \,.\qquad \qquad \qquad \qquad  \quad 
\end{array}
\right.\label{hodograph}
\end{equation}
\end{theorem}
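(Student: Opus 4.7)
The plan is to verify the DP equation (\ref{DP-eq-nokappa}) by pulling it back to the 2D-Toda coordinates $(x_1,x_{-1})$ through the hodograph transformation (\ref{hodograph}), and then reducing the resulting identity to the two bilinear equations (\ref{dp-bilinear-1})--(\ref{dp-bilinear-2}) of the pseudo 3-reduced $C_\infty$ 2D-Toda system. First I would rewrite the relations (\ref{dp-relation-1})--(\ref{dp-relation-3}) in logarithmic form, obtaining
\begin{equation*}
\frac{F}{G} \;=\; 1 - \partial_{x_1}\ln\frac{g_1}{g_2}, \qquad
\frac{H}{G} \;=\; 1 - \partial_{x_{-1}}\ln\frac{g_1}{g_2} \;=\; 1 + u.
\end{equation*}
Combined with $x = x_1 - \ln(g_1/g_2)$ and $t = x_{-1}$ this gives $\partial x/\partial x_1|_{x_{-1}} = F/G$ and $\partial x/\partial x_{-1}|_{x_1} = u$, so the curves $\{x_1 = \mathrm{const}\}$ are exactly the DP characteristics $dx/dt = u$.

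Setting $\rho := F/G$, the chain rule gives $\partial_x|_t = \rho^{-1}\partial_{x_1}$ and $\partial_t|_x = \partial_{x_{-1}} - u\,\partial_x$. In particular $u_x = u_{x_1}/\rho$, and since $\rho_{x_{-1}} = u_{x_1}$ (hodograph consistency), one has $u_{x_1}/\rho = \partial_{x_{-1}}\ln\rho$. Differentiating once more in $x_1$ yields the bridge identity
\begin{equation*}
\rho\,u_{xx} \;=\; \partial_{x_1}(u_{x_1}/\rho) \;=\; \partial_{x_1}\partial_{x_{-1}}\ln\rho \;=\; (\ln F)_{x_1 x_{-1}} - (\ln G)_{x_1 x_{-1}}.
\end{equation*}

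Next I would rewrite (\ref{dp-bilinear-1})--(\ref{dp-bilinear-2}) in logarithmic form as
\begin{equation*}
(\ln F)_{x_1 x_{-1}} \;=\; 1 - (G/F)^2 \;=\; 1 - \rho^{-2}, \qquad
(\ln G)_{x_1 x_{-1}} \;=\; 1 - FH/G^2 \;=\; 1 - \rho(1+u),
\end{equation*}
and subtract to get $\rho\,u_{xx} = \rho(1+u) - \rho^{-2}$, that is
\begin{equation*}
m \;:=\; 1 + u - u_{xx} \;=\; \rho^{-3} \;=\; (G/F)^3.
\end{equation*}
This clean identity is the technical heart of the argument and the unique place where both bilinear equations are used in an essential way; I expect it to be the main obstacle, since one has to guess in advance the correct ratio of $\tau$-functions to which $1 + u - u_{xx}$ should reduce, and then verify that the two separate bilinear equations conspire through the hodograph chain rule to produce exactly $\rho^{-3}$.

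Finally, from $m = (G/F)^3$ one reads off $\partial_{x_{-1}}\ln m = -3\,\rho_{x_{-1}}/\rho = -3u_x$ (using $u_x = u_{x_1}/\rho = \rho_{x_{-1}}/\rho$), so $\partial_{x_{-1}} m = -3 u_x m$. Translating back through $\partial_{x_{-1}}|_{x_1} = \partial_t|_x + u\,\partial_x|_t$ yields the conservation form $m_t + u m_x + 3 u_x m = 0$, which upon expanding $m = 1 + u - u_{xx}$ is exactly the DP equation (\ref{DP-eq-nokappa}). The $N$-soliton structure is then automatic from the explicit pfaffian expressions for $g_1$ and $g_2$ in (\ref{g1})--(\ref{g2}) established in Lemmas \ref{lemma:pfaffian-C}--\ref{lemma:pfaffian-C-3}; once the identity $m=(G/F)^3$ is secured, the descent to DP is a routine expansion.
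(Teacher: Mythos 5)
Your proposal is correct and follows essentially the same route as the paper: both reduce the two bilinear equations, via the relations $F/G=1-\partial_{x_1}\ln(g_1/g_2)$ and $H/G=1+u$ and the hodograph chain rule, to the key identity $1+u-u_{xx}=(G/F)^3$ together with the transport equation $(\partial_t+u\partial_x)\ln(1+u-u_{xx})=-3u_x$, which expands to the DP equation. The only cosmetic differences are that your $\rho=F/G$ is the reciprocal of the paper's $\rho=G/F$ and that you apply the change of variables before, rather than after, invoking the bilinear equations.
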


\begin{proof}
From (\ref{dp-relation-1}), (\ref{dp-relation-2}) and
 (\ref{dp-relation-3}), 
we have the relations
\begin{eqnarray}
&&-\left(\ln
    \frac{g_1}{g_2}\right)_{x_{1}}=\frac{F}{G}-1\,,\label{relation-g-G-F}\\
&&-\left(\ln
    \frac{g_1}{g_2}\right)_{x_{-1}}=\frac{H}{G}-1\,.\label{relation-g-G-H}
\end{eqnarray}
Let 
\begin{equation}
\rho=\frac{G}{F}\,,\quad 
u=-\left(\ln \frac{g_1}{g_2}\right)_{x_{-1}}\,.
\end{equation} 
Differentiating (\ref{relation-g-G-F}) with respect to $x_{-1}$, we obtain
\begin{equation}
u_{x_{1}}=\left(\frac{1}{\rho}\right)_{x_{-1}}\,.\label{u-rho-relation}
\end{equation}
This is rewritten as 
\begin{equation}
(\ln \rho)_{x_{-1}}=-\rho u_{x_{1}}\,.\label{rho-u}
\end{equation}
Equation (\ref{relation-g-G-H}) leads to 
\begin{equation}
\frac{H}{G}=1+u\,.
\end{equation}

The bilinear equations (\ref{dp-bilinear-1}) and (\ref{dp-bilinear-2}) 
are written as 
\begin{eqnarray}
&&-(\ln F)_{x_1x_{-1}}+1=\rho^2\,,\label{ln-bi-1}\\
&&-(\ln G)_{x_1x_{-1}}+1=\frac{1}{\rho}(1+u)\,.\label{ln-bi-2}
\end{eqnarray}
Subtracting (\ref{ln-bi-1}) from (\ref{ln-bi-2}), we obtain 
\begin{equation}
-(\ln \rho)_{x_1x_{-1}}=\frac{1}{\rho}(1+u)-\rho^2\,,\label{tzitzeica}
\end{equation}
which leads to 
\begin{equation}
\rho^3=1+u+\rho (\ln \rho)_{x_{1}x_{-1}}\,.
\end{equation}
Using (\ref{rho-u}), it becomes
\begin{equation}
\rho^3=1+u-\rho (\rho u_{x_{1}})_{x_{1}}\,.\label{rho-relation}
\end{equation}

Let us consider the hodograph (reciprocal) transformation
\begin{equation}
\left\{
\begin{array}{l}
x=x_{1}+\int_{-\infty}^{x_{-1}}u(x_1,x_{-1}')dx_{-1}'\\
\quad =x_{1}-\ln \frac{g_1}{g_2}\,,\qquad \qquad \\
t=x_{-1} \,.\qquad \qquad \qquad \qquad  
\end{array}
\right.\label{hodograph1}
\end{equation}
This yields
\begin{equation}
\left\{
\begin{array}{c}
\frac{\partial x}{\partial x_{1}}=1-\left(\ln \frac{g_1}{g_2}\right)_{x_{1}}=\rho^{-1}\,,\\
\frac{\partial x}{\partial x_{-1}}=-\left(\ln \frac{g_1}{g_2}\right)_{x_{-1}}=u\,,\quad \\
\end{array}
\right.
\end{equation}
and
\begin{equation}
\left\{
\begin{array}{c}
\partial_{x_{1}}=\frac{1}{\rho}\partial_x\,,\quad\\
\partial_{x_{-1}}=\partial_t+u\partial_x\,.\\
\end{array}
\right.\label{hodograph-derivative}
\end{equation}
Applying the hodograph (reciprocal) 
transformation to (\ref{rho-u}) and 
(\ref{rho-relation}), 
we obtain 
\begin{equation}
\left\{
\begin{array}{c}
(\partial_t+u\partial_x)\ln\rho=-u_x\,,\\
\rho^3=1+u-u_{xx}\,.\qquad \quad 
\end{array}
\right.\label{vakhnenko-form2}
\end{equation}
This is equivalent to
\begin{equation}
(\partial_t+u\partial_x)\ln(1+u-u_{xx})=-3u_x\,,\label{short-DP-eq3}
\end{equation}
which can be written as 
\begin{equation}
(\partial_t+u\partial_x)(1+u-u_{xx})=-3u_x(1+u-u_{xx})\,.\label{short-DP-eq2}
\end{equation}
This is nothing but the DP equation (\ref{DP-eq-nokappa}). 
\end{proof}

\begin{remark}
Applying the scale transformation $u\to \frac{1}{\kappa^3}u$, $t\to
 \kappa^3t$ to (\ref{DP-eq-nokappa}), we
 obtain the DP equation (\ref{DP-eq}). 
\end{remark}

\begin{remark}
Setting $u=0$ in (\ref{tzitzeica}), we obtain the Tzitzeica equation~
\cite{Tzitzeica1,Tzitzeica2,Nimmo-Ruijsenaars,Willox} 
\begin{equation}
(\ln \rho)_{x_1x_{-1}}=\rho^2-\frac{1}{\rho}\,.
\end{equation}
Thus (\ref{tzitzeica}) and (\ref{u-rho-relation}) 
can be considered as an extension 
of the Tzitzeica equation. 
\end{remark}

Let $k_i=p_i+p_{2N+1-i}$. From 
$p_i^2-p_ip_{2N+1-i}+p_{2N+1-i}^2=1$, we obtain
$p_i=\frac{1}{6}(3k_i+\sqrt{3(4-k_i^2)})$, 
$p_{2N+1-i}=\frac{1}{6}(3k_i-\sqrt{3(4-k_i^2)})$, 
$p_ip_{2N+1-i}=\frac{k_i^2-1}{3}$ and
$\frac{1}{p{i}}+\frac{1}{p_{2N+1-i}}=\frac{3k_i}{k_i^2-1}$. 
Thus 
\[
\xi_i+\xi_{2N+1-i}
=k_ix_1+\frac{3k_i}{k_i^2-1}x_{-1}+\xi_{i0}+\xi_{{2N+1-i}0}\,.
\]
In the pfaffian solution, all phase functions can be expressed by the
summation of $\xi_i+\xi_{2N+1-i}$. So the phase functions can be
expressed by the parameters $\{k_i\}$ ($i=1,2,\cdots,N$). 
Each coefficient of exponential functions can be normalized to 1 after
absorption into phase constants or can be rewritten by the parameters
$\{k_i\}$. 
Thus it is possible to rewrite the above $\tau$-function by using the
parameters $\{k_i\}$ instead of $\{p_i\}$.\vspace{0.2in}\\

\begin{example} 
Soliton Solutions\\ 
For $N=1$, 
\begin{eqnarray*}
g_1&=&2\alpha_1\frac{(p_1-1)(p_2-1)}{p_1-p_2}+\frac{p_1-p_2}{p_1+p_2}e^{\xi_1+\xi_{2}}\\
&=&\frac{(p_1-1)(p_2-1)}{p_1-p_2}\left(2\alpha_1+
\frac{(p_1-p_2)^2}{(p_1+p_2)(p_1-1)(p_2-1)}e^{\xi_1+\xi_{2}}
\right)
\,.
\end{eqnarray*}
\begin{eqnarray*}
g_2&=&2\alpha_1\frac{(p_1+1)(p_2+1)}{p_1-p_2}+\frac{p_1-p_2}{p_1+p_2}e^{\xi_1+\xi_{2}}\\
&=&\frac{(p_1+1)(p_2+1)}{p_1-p_2}\left(2\alpha_1+
\frac{(p_1-p_2)^2}{(p_1+p_2)(p_1+1)(p_2+1)}e^{\xi_1+\xi_{2}}
\right)
\,.
\end{eqnarray*}
Letting $\alpha_1=\frac{1}{2}$ and 
$e^{\gamma_1}=\frac{(p_1-p_2)^2}{(p_1+p_2)}\frac{1}{\sqrt{(p_1-1)(p_2-1)(p_1+1)(p_2+1)}}$, 
the $\tau$-functions can be rewritten as 
\begin{equation*}
g_1=1+e^{\xi_1+\xi_{2}+\phi_1+\gamma_1}\,, \quad 
g_2=1+e^{\xi_1+\xi_{2}-\phi_1+\gamma_1}
\end{equation*}
where 
\[
\xi_1+\xi_{2}
=k_1x_1+\frac{3k_1}{k_1^2-1}x_{-1}+\xi_{10}+\xi_{{2}0}\,,
\]
and 
\[
e^{\phi_1}=\sqrt{\frac{(p_1+1)(p_2+1)}{(p_1-1)(p_2-1)}}=\sqrt{\frac{k_1^2+3k_1+2}{k_1^2-3k_1+2}}
=\sqrt{\frac{(k_1 +2)(k_1 +1)}{
(k_1 -2)(k_1 -1)}}\,.
\]
Here $\gamma_1$ can be absorbed into a phase constant. 
\vspace{0.2in} 

\noindent
For $N=2$, 
\begin{eqnarray*}
\fl &&g_1
 =\frac{p_1-p_2}{p_1+p_2}e^{\xi_1+\xi_{2}}
 \frac{p_3-p_4}{p_3+p_4}e^{\xi_3+\xi_{4}}
-\frac{p_1-p_3}{p_1+p_3}e^{\xi_1+\xi_{3}}
 \frac{p_2-p_4}{p_2+p_4}e^{\xi_2+\xi_{4}}\\
\fl  &&\qquad +\left(2\alpha_1\frac{(p_1-1)(p_4-1)}{p_1-p_4}
+\frac{p_1-p_4}{p_1+p_4}e^{\xi_1+\xi_{4}}\right)
\left(2\alpha_2\frac{(p_2-1)(p_3-1)}{p_2-p_3}+\frac{p_2-p_3}{p_2+p_3}e^{\xi_2+\xi_{3}}\right)\\
\fl  &&\quad =\frac{(p_1-1)(p_4-1)}{p_1-p_4}
\cdot \frac{(p_2-1)(p_3-1)}{p_2-p_3}\left(
2\alpha_1
\cdot 2\alpha_2
+2\alpha_2\frac{(p_1-p_4)^2}{(p_1+p_4)(p_1-1)(p_4-1)}e^{\xi_1+\xi_{4}}\right.\\
\fl 
&&\qquad +2\alpha_1\frac{(p_2-p_3)^2}{(p_2+p_3)(p_2-1)(p_3-1)}e^{\xi_2+\xi_{3}}
+\frac{p_1-p_4}{(p_1-1)(p_4-1)}\cdot \frac{p_2-p_3}{(p_2-1)(p_3-1)} \\
\fl &&\qquad  \times \left. \left(\frac{p_1-p_2}{p_1+p_2}
\frac{p_3-p_4}{p_3+p_4}
-\frac{p_1-p_3}{p_1+p_3}\frac{p_2-p_4}{p_2+p_4}
+\frac{p_1-p_4}{p_1+p_4}\frac{p_2-p_3}{p_2+p_3}
\right)e^{\xi_1+\xi_{2}+\xi_3+\xi_{4}}\right)\,,
\end{eqnarray*}
\begin{eqnarray*}
\fl &&g_2
 =\frac{p_1-p_2}{p_1+p_2}e^{\xi_1+\xi_{2}}
 \frac{p_3-p_4}{p_3+p_4}e^{\xi_3+\xi_{4}}
-\frac{p_1-p_3}{p_1+p_3}e^{\xi_1+\xi_{3}}
 \frac{p_2-p_4}{p_2+p_4}e^{\xi_2+\xi_{4}}\\
\fl &&\qquad +\left(2\alpha_1\frac{(p_1+1)(p_4+1)}{p_1-p_4}
+\frac{p_1-p_4}{p_1+p_4}e^{\xi_1+\xi_{4}}\right)
\left(2\alpha_2\frac{(p_2+1)(p_3+1)}{p_2-p_3}+\frac{p_2-p_3}{p_2+p_3}e^{\xi_2+\xi_{3}}\right)\\
\fl &&\quad =\frac{(p_1+1)(p_4+1)}{p_1-p_4}
\cdot \frac{(p_2+1)(p_3+1)}{p_2-p_3}\left(
2\alpha_1
\cdot 2\alpha_2
+2\alpha_2\frac{(p_1-p_4)^2}{(p_1+p_4)(p_1+1)(p_4+1)}e^{\xi_1+\xi_{4}}\right.\\
\fl &&\qquad +2\alpha_1\frac{(p_2-p_3)^2}{(p_2+p_3)(p_2+1)(p_3+1)}e^{\xi_2+\xi_{3}}
+\frac{p_1-p_4}{(p_1+1)(p_4+1)}\cdot \frac{p_2-p_3}{(p_2+1)(p_3+1)} \\
\fl &&\qquad  \times \left. \left(\frac{p_1-p_2}{p_1+p_2}
\frac{p_3-p_4}{p_3+p_4}
-\frac{p_1-p_3}{p_1+p_3}\frac{p_2-p_4}{p_2+p_4}
+\frac{p_1-p_4}{p_1+p_4}\frac{p_2-p_3}{p_2+p_3}
\right)e^{\xi_1+\xi_{2}+\xi_3+\xi_{4}}\right)\,.
\end{eqnarray*}
Letting $\alpha_1=\alpha_2=\frac{1}{2}$,
$e^{\gamma_1}=\frac{(p_1-p_4)^2}{(p_1+p_4)}\frac{1}{\sqrt{(p_1-1)(p_4-1)(p_1+1)(p_4+1)}}$,\\
$e^{\gamma_2}=\frac{(p_2-p_3)^2}{(p_2+p_3)}\frac{1}{\sqrt{(p_2-1)(p_3-1)(p_2+1)(p_3+1)}}$, 
the above $\tau$-functions 
become
\begin{eqnarray*}
g_1&=&1+e^{\xi_1+\xi_{4}+\phi_1}
+e^{\xi_2+\xi_{3}+\phi_2}
+b_{12}e^{\xi_1+\xi_{2}+\xi_3+\xi_{4}+\phi_1+\phi_2}\,,\\
g_2&=&1+e^{\xi_1+\xi_{4}-\phi_1}
+e^{\xi_2+\xi_{3}-\phi_2}
+b_{12}e^{\xi_1+\xi_{2}+\xi_3+\xi_{4}-\phi_1-\phi_2}\,,
\end{eqnarray*}
where 
\begin{eqnarray*}
\fl &&b_{12}= 
\frac{p_1-p_2}{p_1+p_2}
\frac{p_3-p_4}{p_3+p_4}
\frac{p_1+p_4}{p_1-p_4}\frac{p_2+p_3}{p_2-p_3}
-\frac{p_1-p_3}{p_1+p_3}\frac{p_2-p_4}{p_2+p_4}
\frac{p_1+p_4}{p_1-p_4}\frac{p_2+p_3}{p_2-p_3}
+1\\
\fl &&\quad\,  
=\frac{(k_1-k_2)^2(k_1^2-k_1k_2+k_2^2-3)}
{(k_1+k_2)^2(k_1^2+k_1k_2+k_2^2-3)}\,, \\
\fl &&\xi_1+\xi_{4}
=k_1x_1+\frac{3k_1}{k_1^2-1}x_{-1}+\xi_{10}+\xi_{{4}0}\,,\quad 
\xi_2+\xi_{3}
=k_2x_1+\frac{3k_2}{k_2^2-1}x_{-1}+\xi_{20}+\xi_{{3}0}\,,\\
\fl &&e^{\phi_1}=\sqrt{\frac{(p_1+1)(p_4+1)}{(p_1-1)(p_4-1)}}
=\sqrt{\frac{k_1^2+3k_1+2}{k_1^2-3k_1+2}}
=\sqrt{\frac{(k_1 +2)(k_1 +1)}{
(k_1 -2)(k_1 -1)}}
\,,\\
\fl && e^{\phi_2}=\sqrt{\frac{(p_2+1)(p_3+1)}{(p_2-1)(p_3-1)}}
=\sqrt{\frac{k_2^2+3k_2+2}{k_2^2-3k_2+2}}
=\sqrt{\frac{(k_2 +2)(k_2 +1)}{
(k_2 -2)(k_2 -1)}}
\,. 
\end{eqnarray*}
Here $\gamma_1$ and $\gamma_2$ were absorbed into phase constants. 

The $N$-soliton solution of (\ref{DP-eq-nokappa}) is written in the following form: 
\begin{eqnarray*}
&&g_1=\sum_{\mu=0,1}\exp\left[\sum_{i=1}^N\mu_i(\eta_i+\phi_i)
+\sum_{i<j}^{(N)}\mu_i\mu_j\ln b_{ij} \right]\,,\\
&&g_2=\sum_{\mu=0,1}\exp\left[\sum_{i=1}^N\mu_i(\eta_i-\phi_i)
+\sum_{i<j}^{(N)}\mu_i\mu_j\ln b_{ij} \right]\,,\nonumber\\
&&b_{ij}=
\frac{(k_i-k_j)^2(k_i^2-k_ik_j+k_j^2-3)}{(k_i+k_j)^2(k_i^2+k_ik_j+k_j^2-3)}\,
\quad {\rm for}\quad 
i<j\,,\nonumber\\
&&\eta_i=\xi_i+\xi_{2N+1-i}
=k_ix_1+\frac{3k_i}{k_i^2-1}x_{-1}+\eta_{i0}\,,\\
&&e^{\phi_i}=\sqrt{\frac{k_i^2+3k_i+2}{k_i^2-3k_i+2}}
=\sqrt{\frac{(k_i +2)(k_i +1)}{
(k_i -2)(k_i -1)}}
\,, \nonumber
\end{eqnarray*}
where $\displaystyle \sum_{\mu=0,1}$ means the summation over all possible
combinations of $\mu_i=0$ or $1$ for $i=1,2,\cdots,N$, 
and $\displaystyle \sum_{i<j}^{(N)}$ means the summation over all possible
combinations 
of $N$ elements under the condition $i<j$. 
\end{example}

Applying $u\to \frac{1}{\kappa^3}u$, $t\to \kappa^3t$, $x_{-1}\to
 \kappa^3x_{-1}$, $x_1\to \frac{x_1}{\kappa}$, $\kappa k_i=p_i+p_{2N+1-i}$, we
 obtain the $N$-soliton solution of the DP equation (\ref{DP-eq}).
\begin{theorem}\label{theorem:matsuno}
The $N$-soliton solution of the DP equation (\ref{DP-eq}) is given as follows:
\begin{eqnarray*}
&&u=-\left(\ln \frac{g_1}{g_2}\right)_{x_{-1}}\,, \\
&&g_1=\sum_{\mu=0,1}\exp\left[\sum_{i=1}^N\mu_i(\eta_i+\phi_i)
+\sum_{i<j}^{(N)}\mu_i\mu_j\ln b_{ij} \right]\,,\\
&&g_2=\sum_{\mu=0,1}\exp\left[\sum_{i=1}^N\mu_i(\eta_i-\phi_i)
+\sum_{i<j}^{(N)}\mu_i\mu_j\ln b_{ij} \right]\,,\nonumber\\
&&b_{ij}=
\frac{(k_i-k_j)^2((k_i^2-k_ik_j+k_j^2)\kappa^2-3)}
{(k_i+k_j)^2((k_i^2+k_ik_j+k_j^2)\kappa^2-3)}\,
\quad {\rm for}\quad 
i<j\,,\nonumber\\
&&\eta_i=\xi_i+\xi_{2N+1-i}
=k_ix_1+\frac{3k_i\kappa^4}{\kappa^2k_i^2-1}x_{-1}+\eta_{i0}\,,\\
&&e^{\phi_i}=\sqrt{\frac{\kappa^2k_i^2+3\kappa k_i+2}{\kappa^2k_i^2-3\kappa
 k_i+2}}=
\sqrt{\frac{(\kappa k_i +2)(\kappa k_i +1)}{
(\kappa k_i -2)(\kappa k_i -1)}}
\,, 
\nonumber
\end{eqnarray*}
and the hodograph (reciprocal) transformation
\begin{equation*}
\left\{
\begin{array}{c}
x=\frac{x_{1}}\kappa+\int_{-\infty}^{x_{-1}}u(x_1,x_{-1}')dx_{-1}'\\
=\frac{x_{1}}\kappa-\ln \frac{g_1}{g_2}\,,\qquad \qquad \quad  \\
t=x_{-1} \,.\qquad \qquad \qquad \qquad  \quad 
\end{array}
\right.
\end{equation*}
\end{theorem}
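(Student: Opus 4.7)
The plan is to deduce Theorem \ref{theorem:matsuno} as a direct scaling corollary of Theorem \ref{theorem:hodograph} together with the explicit $N$-soliton formula computed in the Example. First I would verify that the change of variables $u = \tilde u/\kappa^3$, $t = \kappa^3 \tilde t$ (leaving $x$ unchanged) converts the $\kappa$-free DP equation \eqref{DP-eq-nokappa} into the $\kappa$-dependent DP equation \eqref{DP-eq}: the time derivative picks up a factor $\kappa^{-6}$, the mixed term $u_x$ picks up only $\kappa^{-3}$ (since $x$ is inert), and clearing the common $\kappa^{-6}$ produces precisely the linear dispersion $3\kappa^3 u_x$ of \eqref{DP-eq}. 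This identifies the correct scaling law that must be propagated through every object appearing in Theorem \ref{theorem:hodograph}.

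Next I would apply the corresponding rescaling $x_1 \to x_1/\kappa$, $x_{-1} \to \kappa^3 x_{-1}$ to the internal (hierarchy) variables and simultaneously absorb a factor of $\kappa$ into the spectral parameters by setting $\kappa k_i = p_i + p_{2N+1-i}$. The reason this particular combination is forced is that the phase $\eta_i = k_i x_1 + \frac{3k_i}{k_i^2-1} x_{-1}$ from the Example must remain a function only of the new $x_1$ and $x_{-1}$; substituting $k_i^{\mathrm{old}} = \kappa k_i^{\mathrm{new}}$ together with the rescaling of $x_{\pm 1}$ produces exactly $\eta_i = k_i x_1 + \frac{3 k_i \kappa^4}{\kappa^2 k_i^2 - 1} x_{-1}$, matching the statement. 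I would then check termwise that the reduction constraint $p_i^2 - p_i p_{2N+1-i} + p_{2N+1-i}^2 = 1$, equivalently $(p_i + p_{2N+1-i})^2 - 3 p_i p_{2N+1-i} = 1$, remains consistent with the new $k_i$ (it becomes $3 p_i p_{2N+1-i} = \kappa^2 k_i^2 - 1$).

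With these substitutions in hand, the remaining pieces of the $N$-soliton formula transform routinely. Substituting $k_i \to \kappa k_i$ in $e^{\phi_i} = \sqrt{(k_i+2)(k_i+1)/((k_i-2)(k_i-1))}$ gives the claimed expression. For $b_{ij}$, each factor $(k_i \pm k_j)$ scales by $\kappa$ and each trinomial $k_i^2 \pm k_i k_j + k_j^2 - 3$ becomes $(k_i^2 \pm k_i k_j + k_j^2)\kappa^2 - 3$; the overall $\kappa^2$ from numerator and denominator cancel, producing the displayed formula. The hodograph transformation is handled by noting $x = x_1^{\mathrm{old}} - \ln(g_1/g_2) = x_1^{\mathrm{new}}/\kappa - \ln(g_1/g_2)$ and $t = x_{-1}^{\mathrm{old}} = \kappa^3 x_{-1}^{\mathrm{new}}$, which together with the scaling $t \to \kappa^3 t$ yields $t = x_{-1}$ in the new variables.

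There is no substantial conceptual obstacle: the entire argument is a bookkeeping exercise in scaling weights. The only point requiring care is keeping track of which variables are ``old'' (those in which Theorem \ref{theorem:hodograph} and the Example are stated) and which are ``new'' (those in the statement of Theorem \ref{theorem:matsuno}), and ensuring that the scaling of the spectral parameters $p_i$ is compatible both with the phase functions $\xi_i$ and with the reduction constraint. Once these are aligned, the five displayed formulas in the theorem follow immediately from the formulas in the Example.
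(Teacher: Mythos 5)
Your proposal is correct and matches the paper's own route: the paper obtains Theorem \ref{theorem:matsuno} exactly by applying the scaling $u\to u/\kappa^3$, $t\to\kappa^3 t$, $x_{-1}\to\kappa^3 x_{-1}$, $x_1\to x_1/\kappa$ with $\kappa k_i=p_i+p_{2N+1-i}$ to Theorem \ref{theorem:hodograph} and the $N$-soliton formulas of the Example, which is precisely the bookkeeping you carry out. Your verification of how the phases, the constraint, $e^{\phi_i}$, $b_{ij}$ and the hodograph transformation rescale is accurate and, if anything, more explicit than the paper's one-line justification.
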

This is consistent with the result in \cite{Matsuno-DP1,Matsuno-DP2}. 

\begin{remark}
There are 3 regions in which the above soliton solution becomes regular: 
(i) $\frac{2}{\kappa}<k_i$, (ii) 
$-\frac{1}{\kappa}<k_i< \frac{1}{\kappa}$, 
(iii) $k_i<-\frac{2}{\kappa}$. 
(Note that this is obtained by the reality condition of $e^{\phi_i}$.)  
In the region (ii), the graph of the 
soliton solution shows smooth solitons. 
In the region (i) and (iii), the graph of the soliton solution 
shows loop solitons.  
\end{remark}

In Figure 1-4, we show examples of 2-soliton interactions. 

\begin{figure}[!htbp]
\begin{center}
\includegraphics[width=5cm]{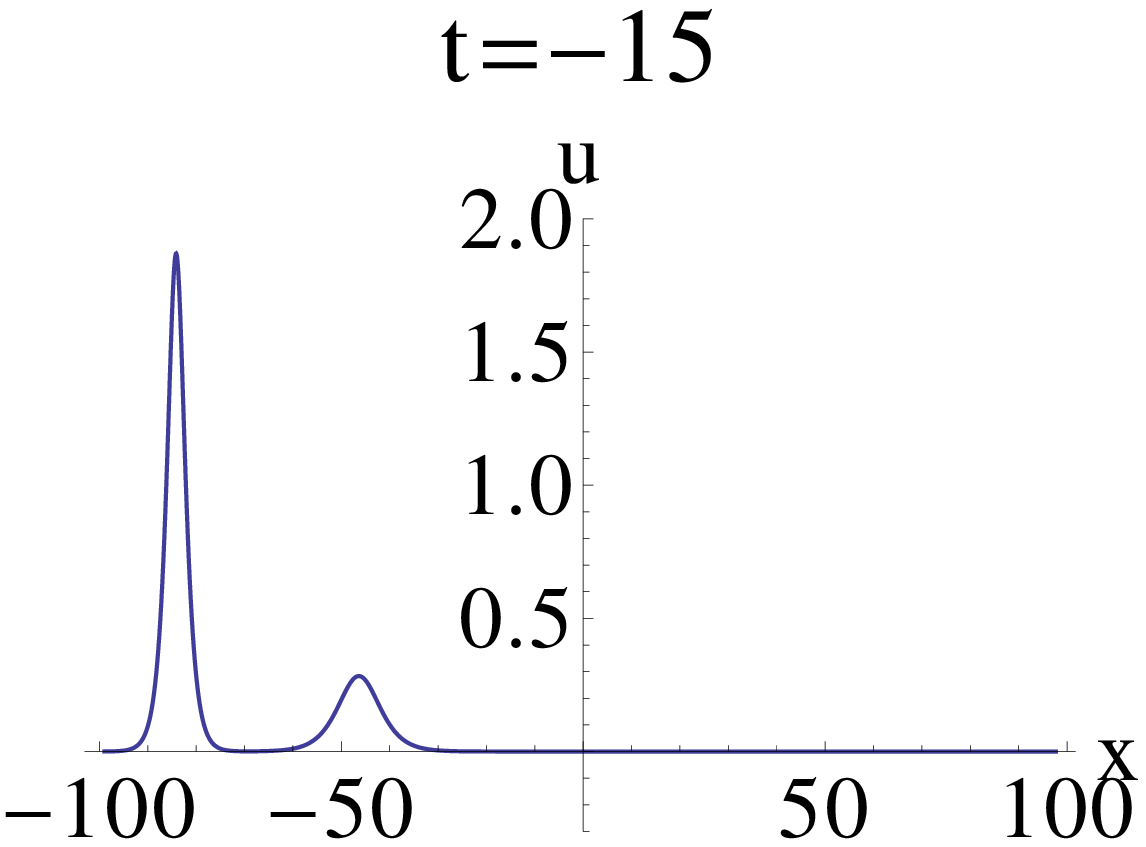}
\includegraphics[width=5cm]{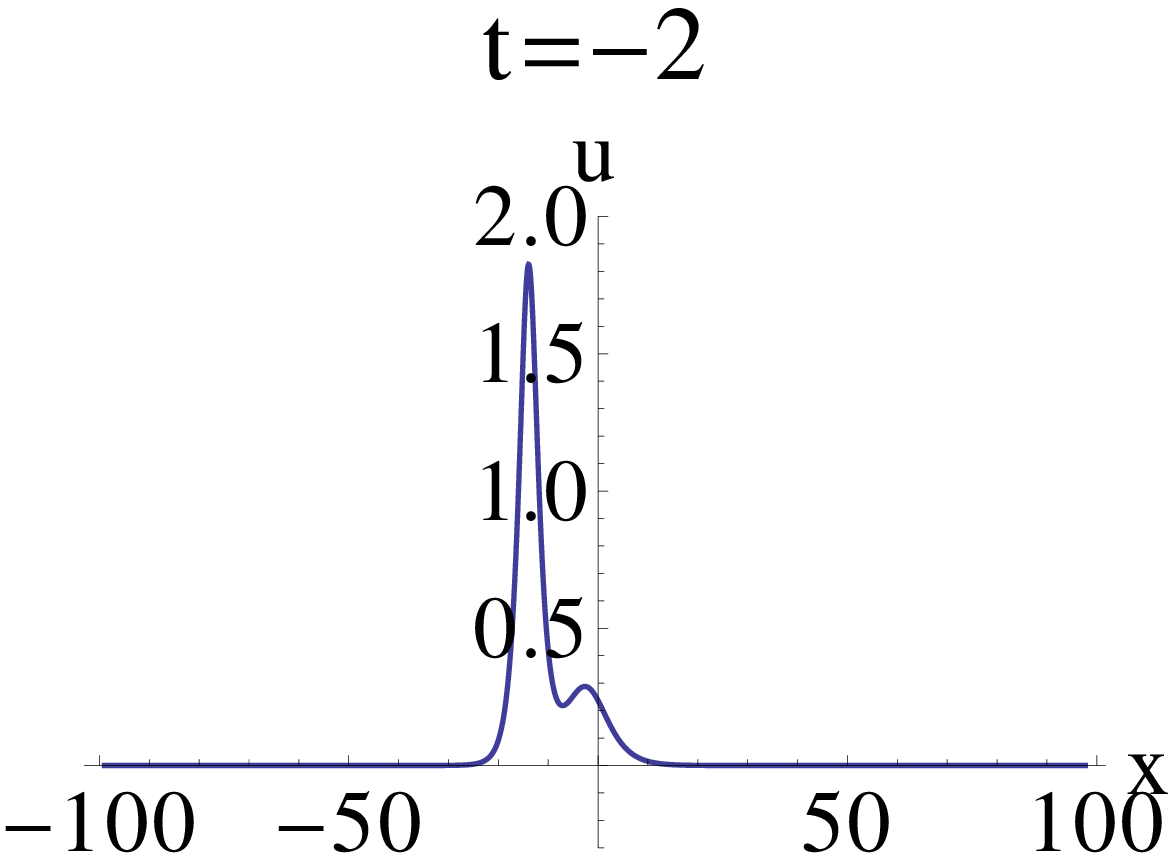}
\includegraphics[width=5cm]{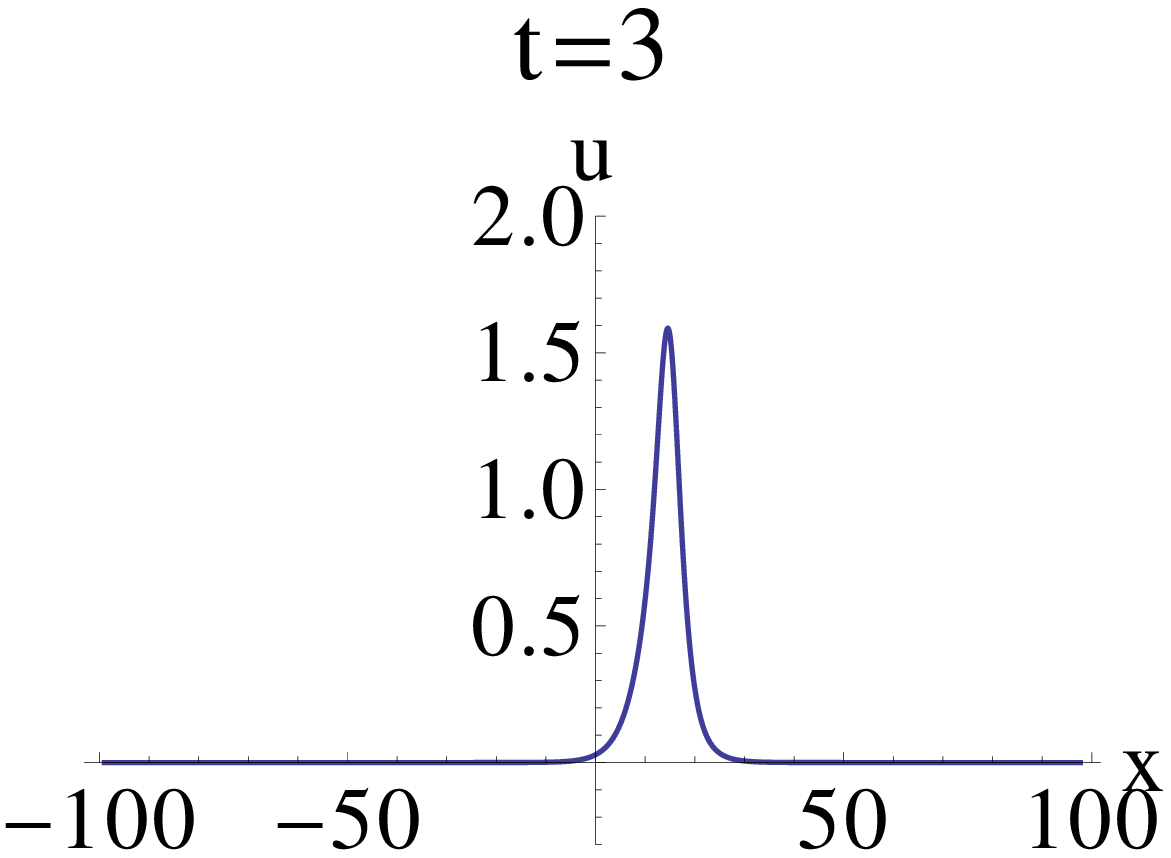}\\
\includegraphics[width=5cm]{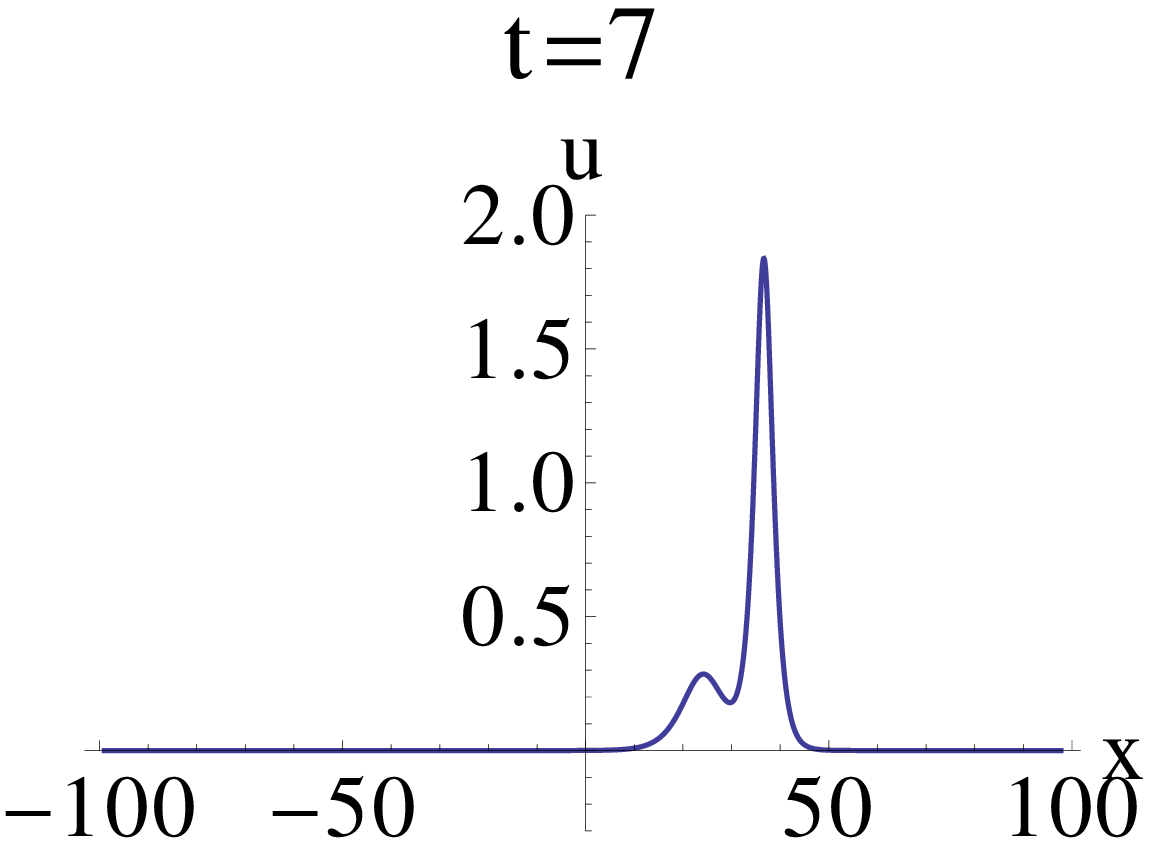}
\includegraphics[width=5cm]{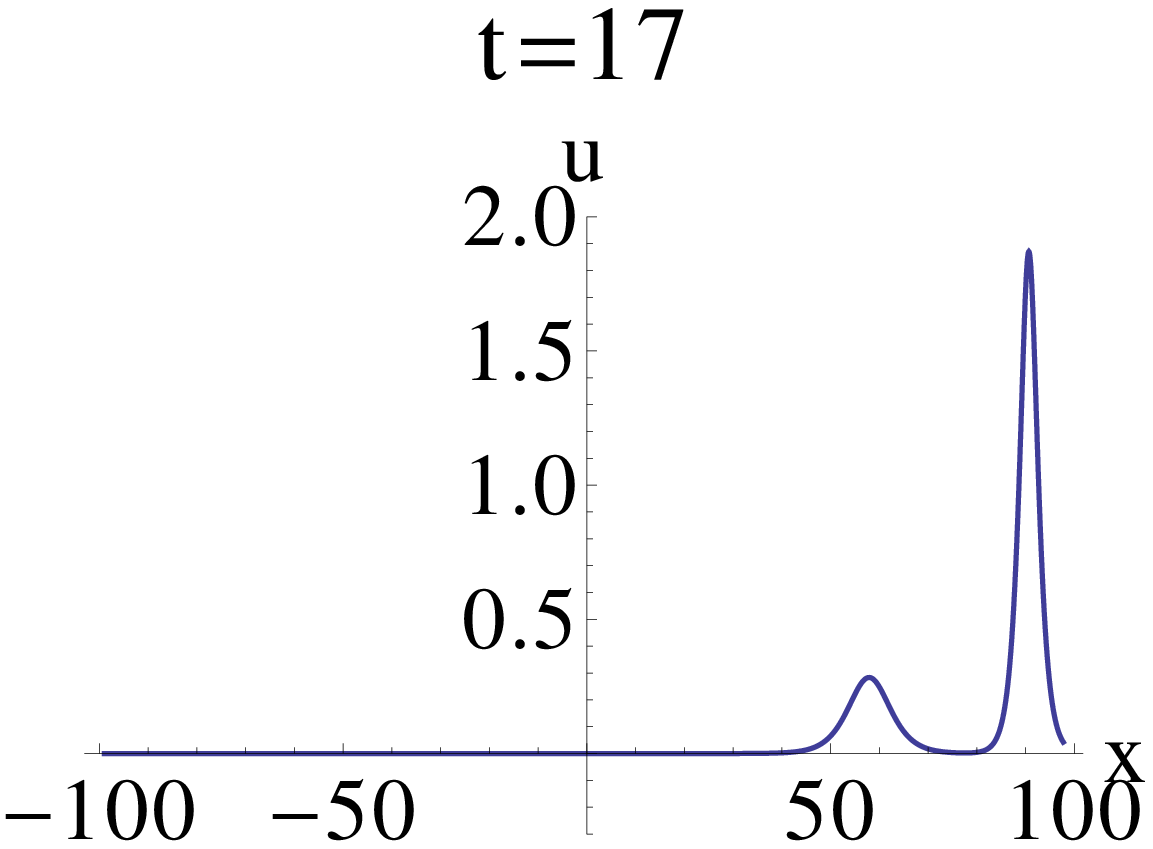}
\end{center}
\caption{2-soliton interaction. 
$\kappa=1$, $k_1=-\frac{1}{3}$, $k_2=\frac{2}{3}$.}
\end{figure}

\begin{figure}[!htbp]
\begin{center}
\includegraphics[width=5cm]{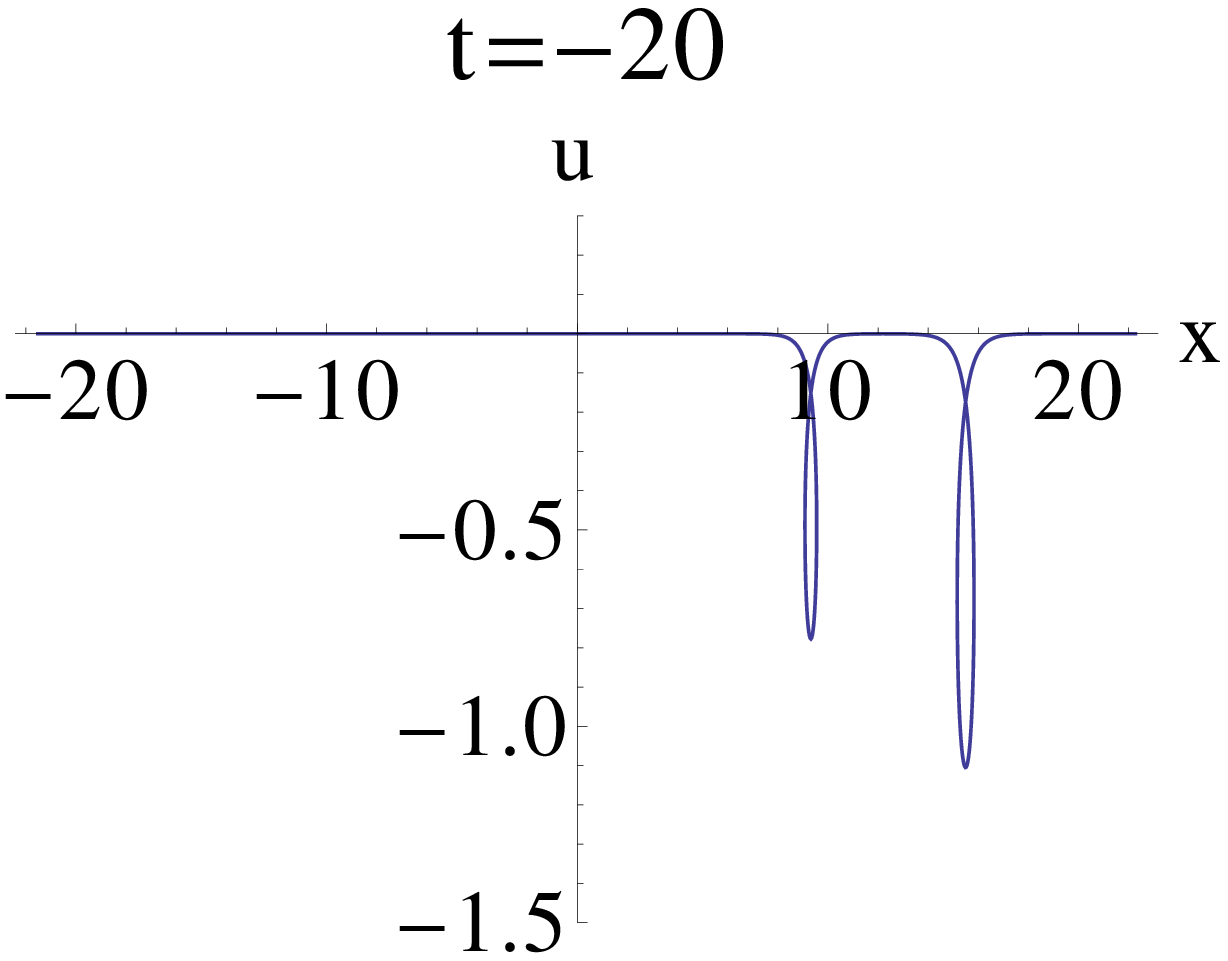}
\includegraphics[width=5cm]{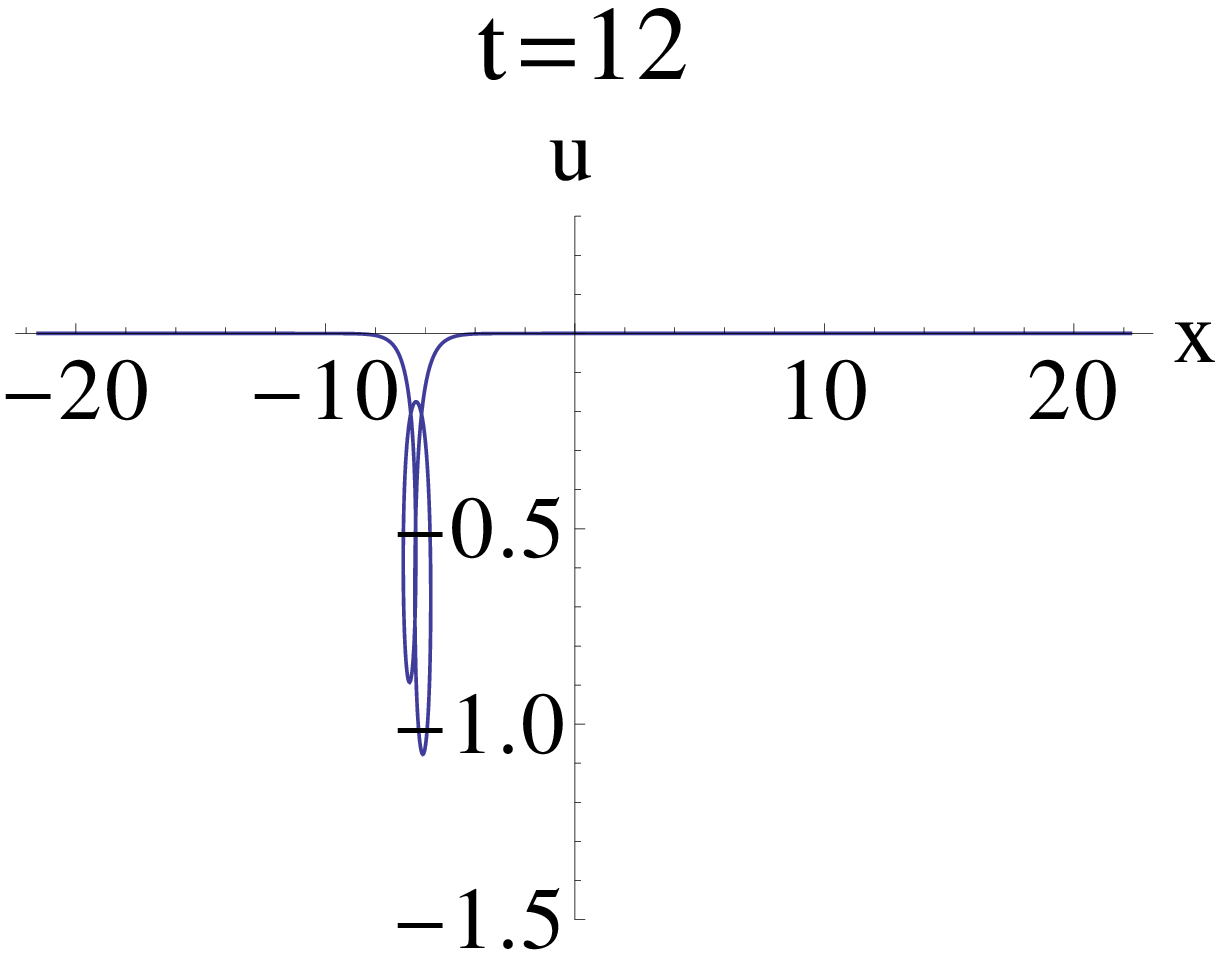}
\includegraphics[width=5cm]{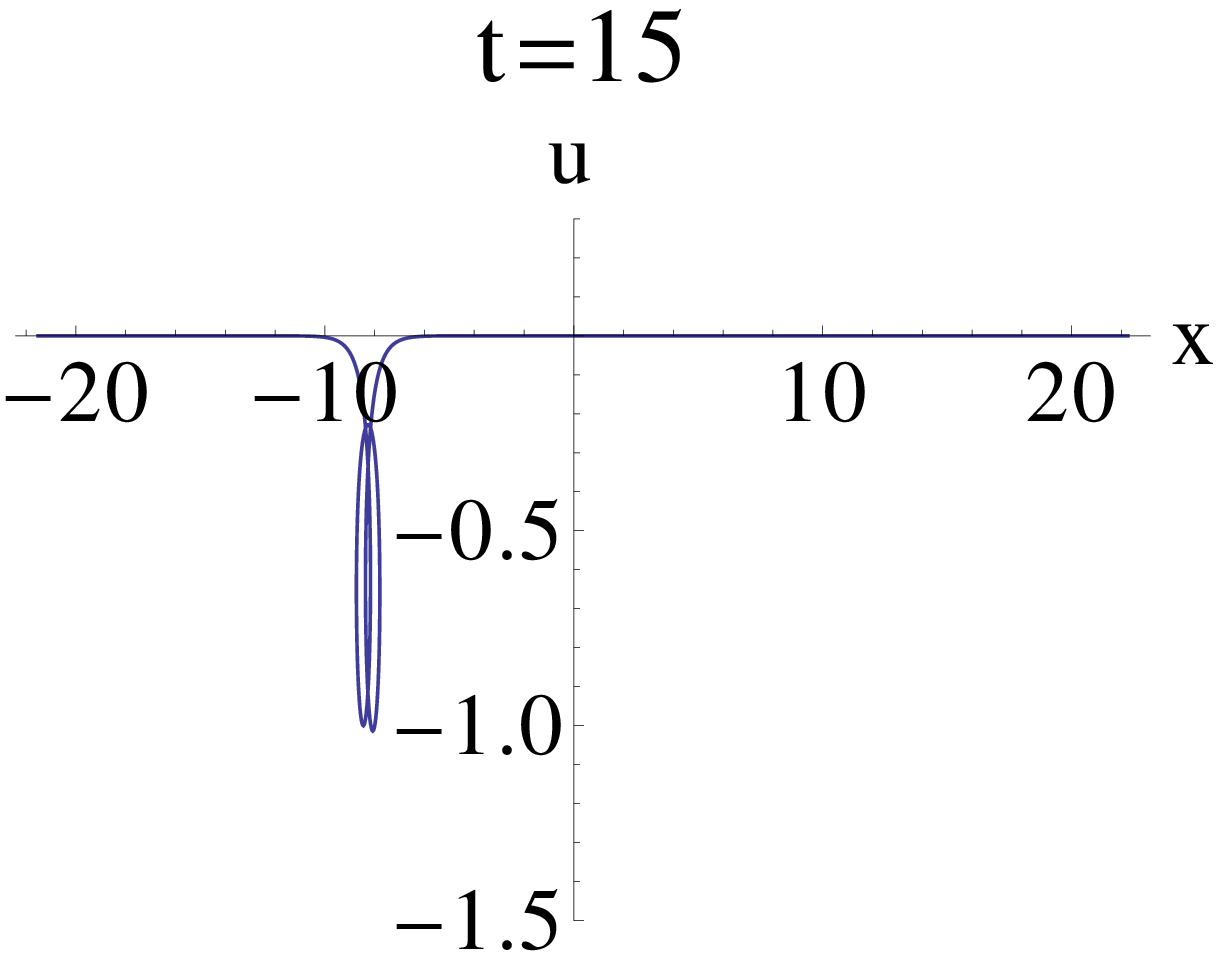}\\
\includegraphics[width=5cm]{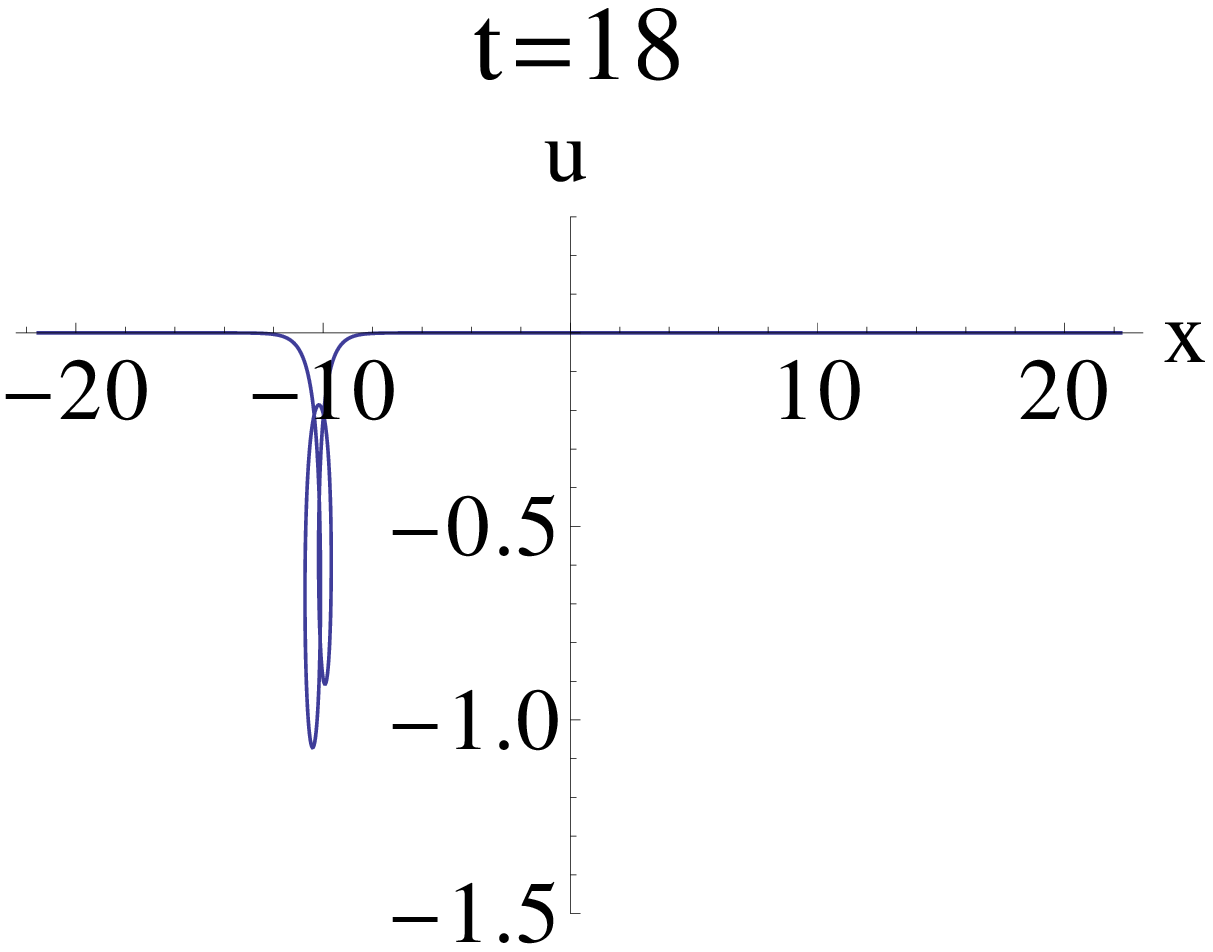}
\includegraphics[width=5cm]{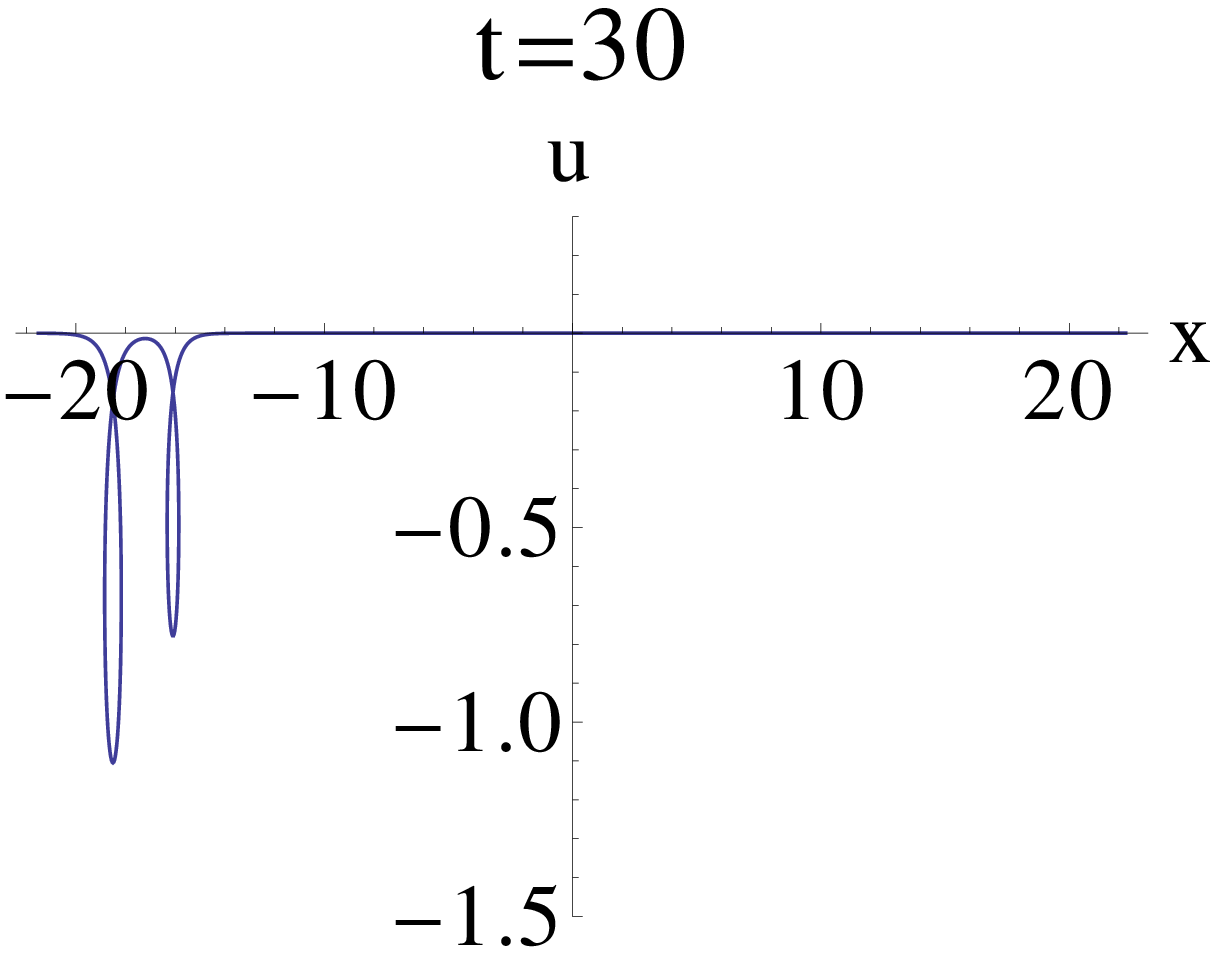}
\end{center}
\caption{2-loop soliton interaction. 
$\kappa=1$, $k_1=-\frac{7}{3}$, $k_2=\frac{8}{3}$.}
\end{figure}

\begin{figure}[!htbp]
\begin{center}
\includegraphics[width=6cm]{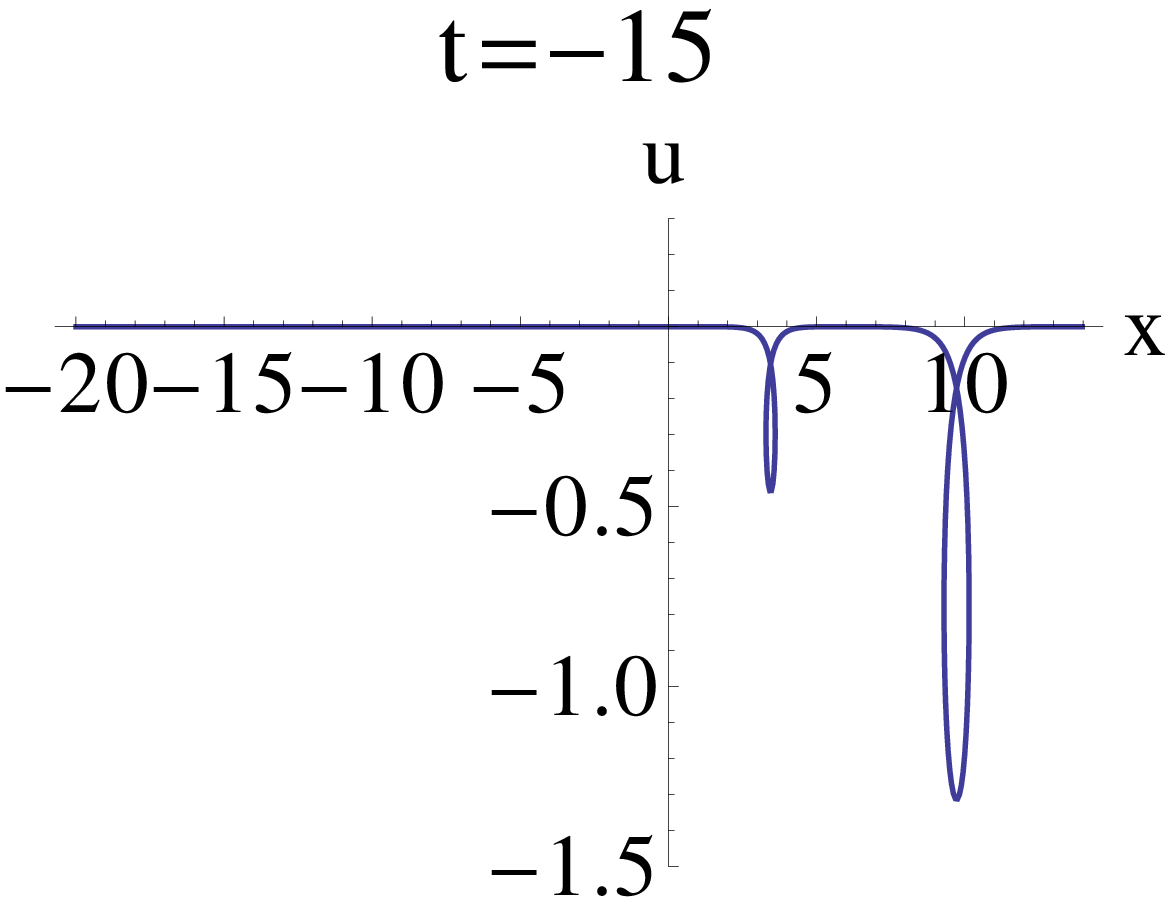}
\includegraphics[width=6cm]{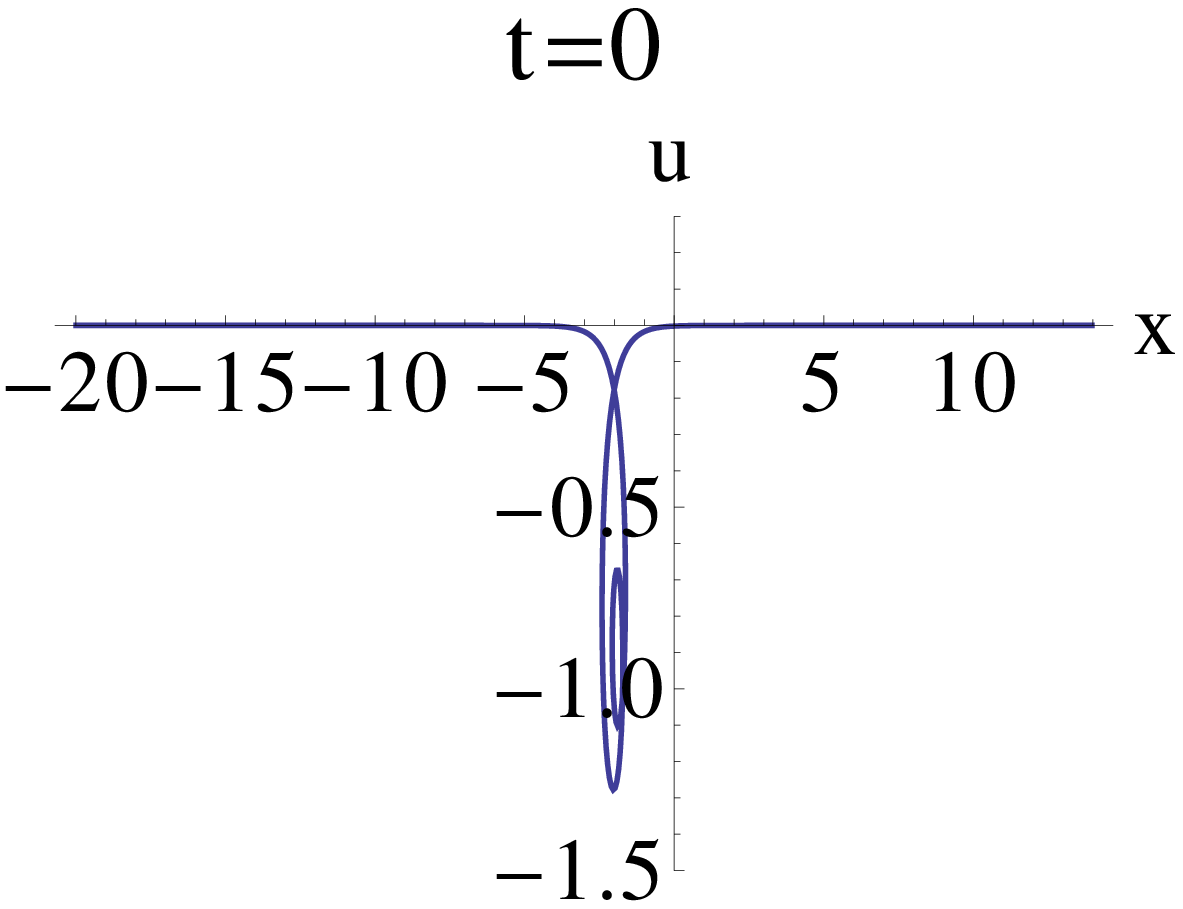}\\
\includegraphics[width=6cm]{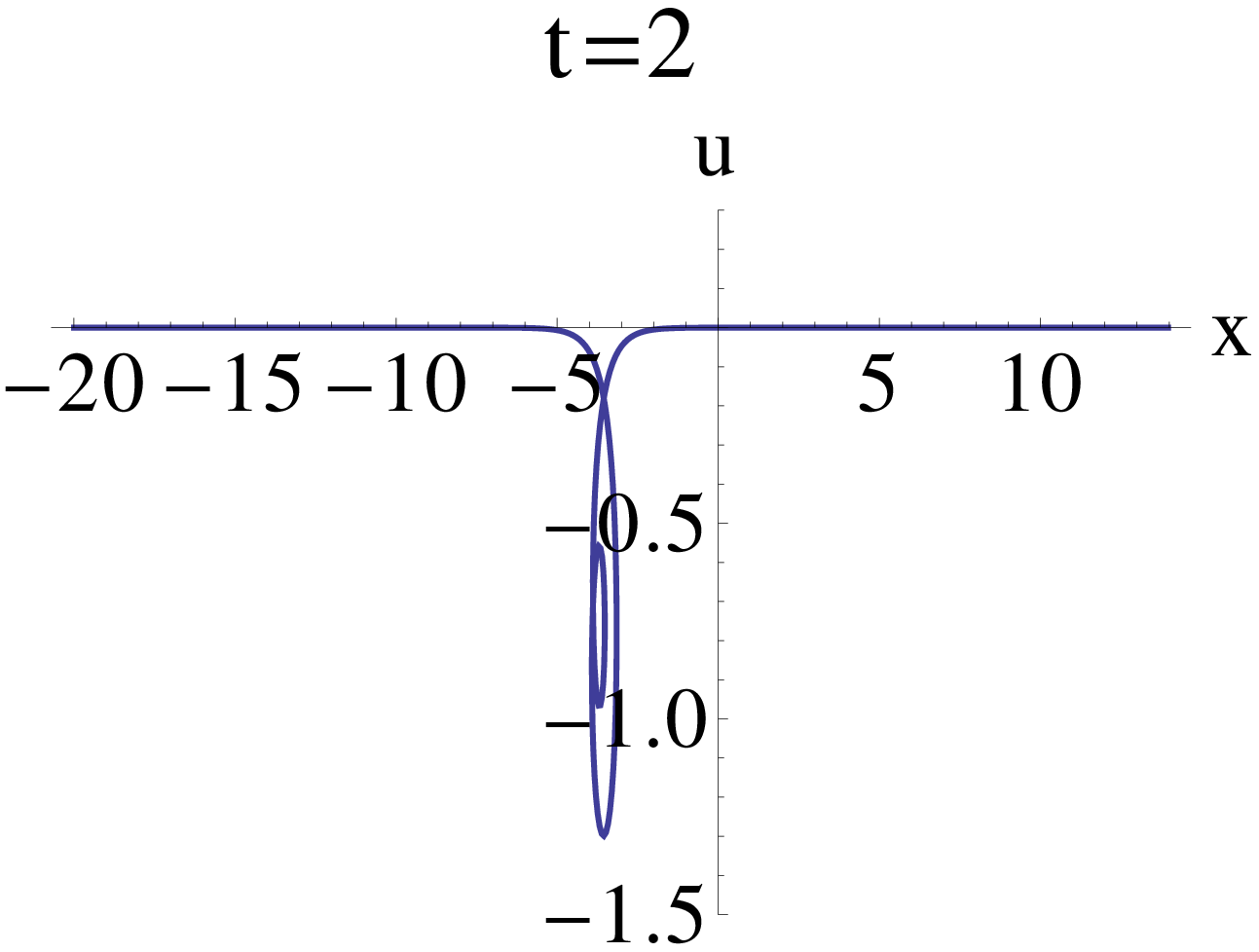}
\includegraphics[width=6cm]{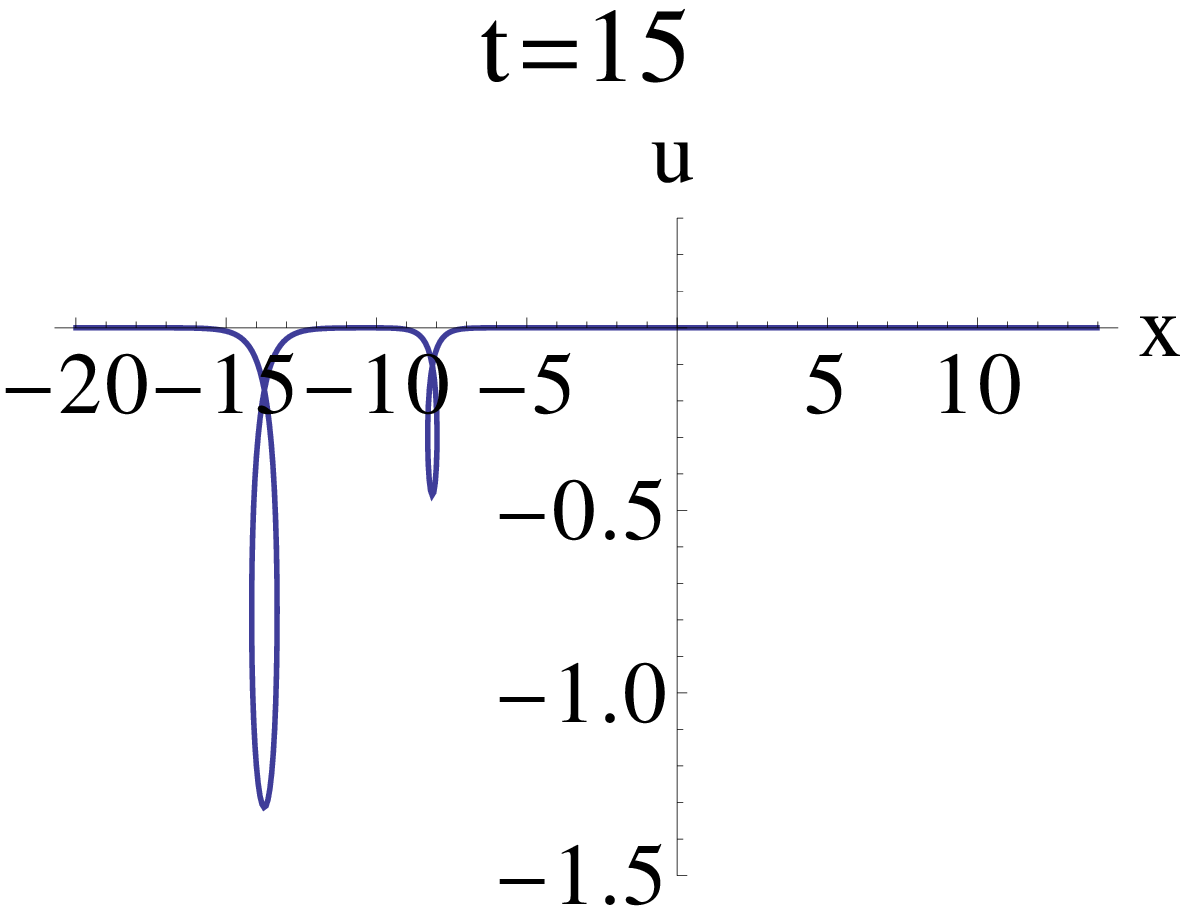}
\end{center}
\caption{2-loop soliton interaction. 
$\kappa=1$, $k_1=\frac{11}{5}$, $k_2=\frac{10}{3}$.}
\end{figure}

\begin{figure}[!htbp]
\begin{center}
\includegraphics[width=6cm]{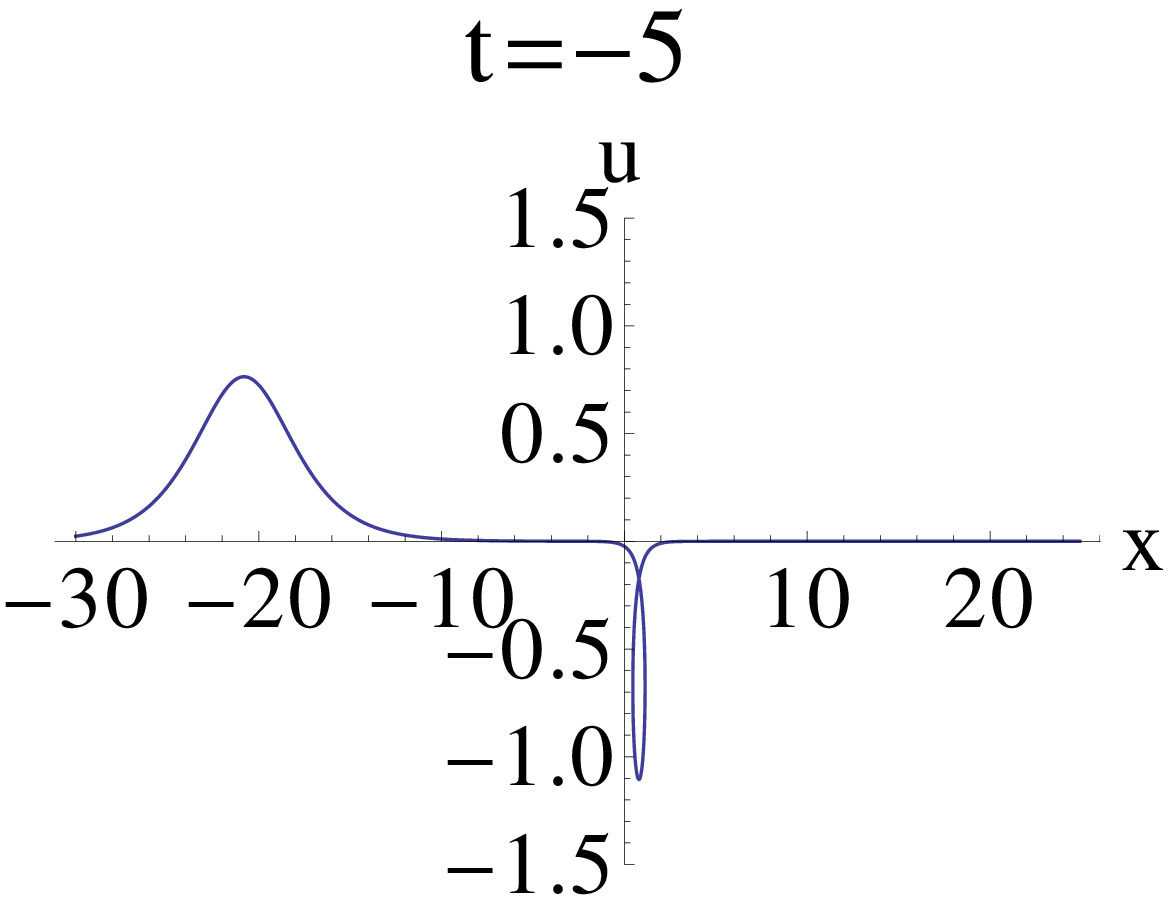}
\includegraphics[width=6cm]{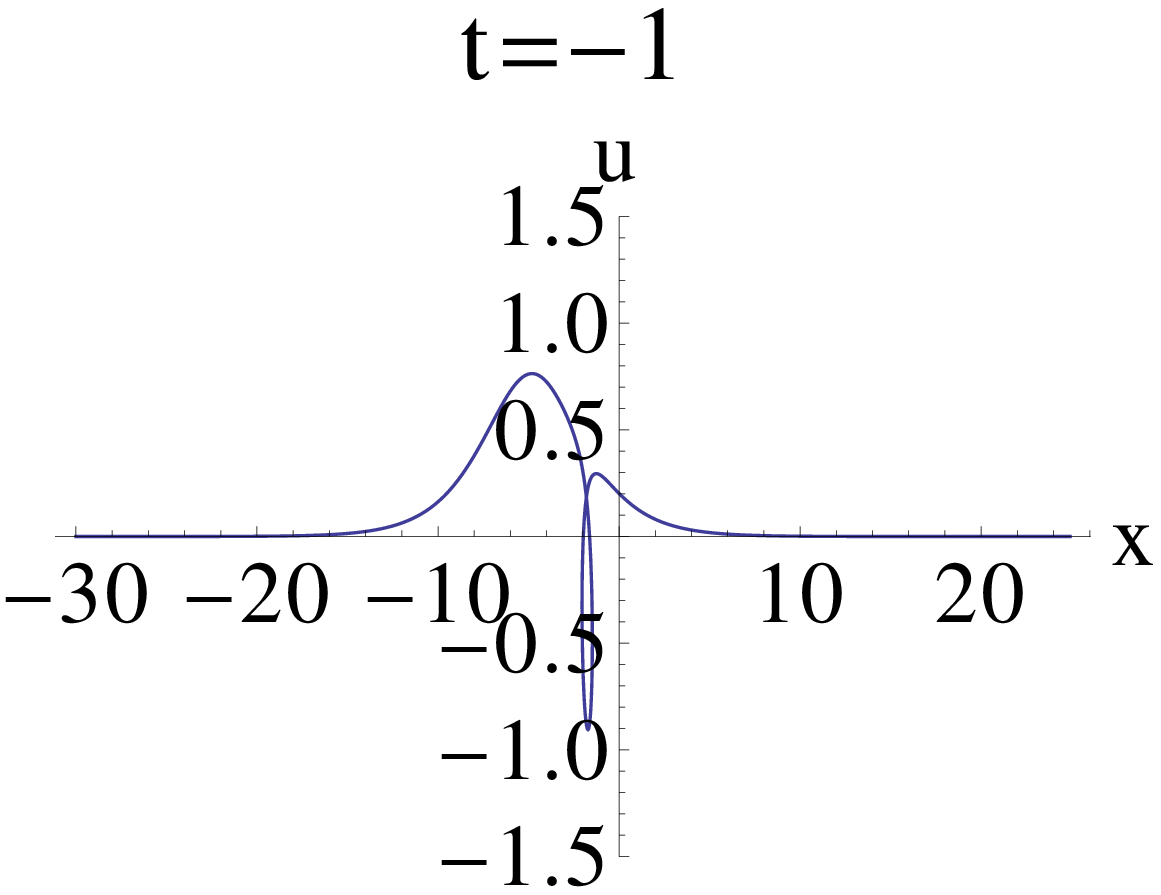}\\
\includegraphics[width=6cm]{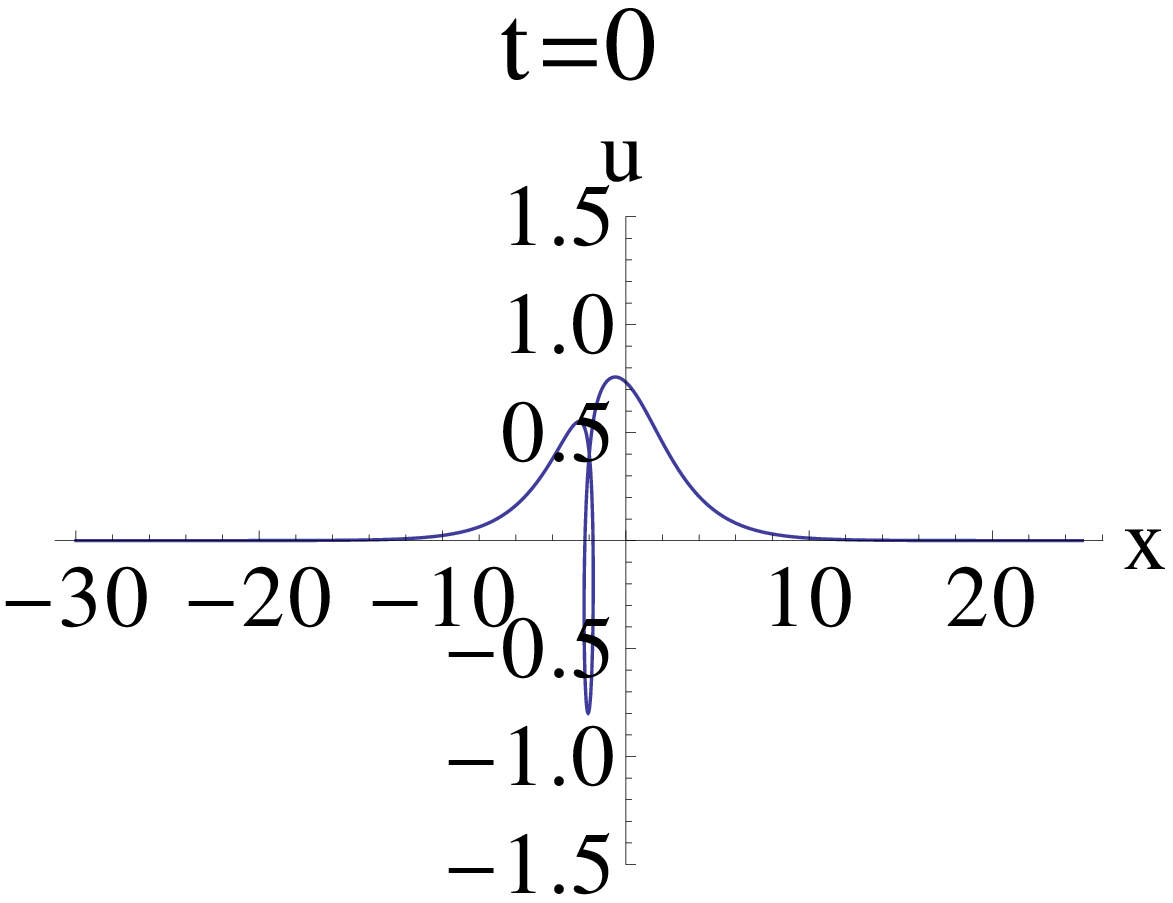}
\includegraphics[width=6cm]{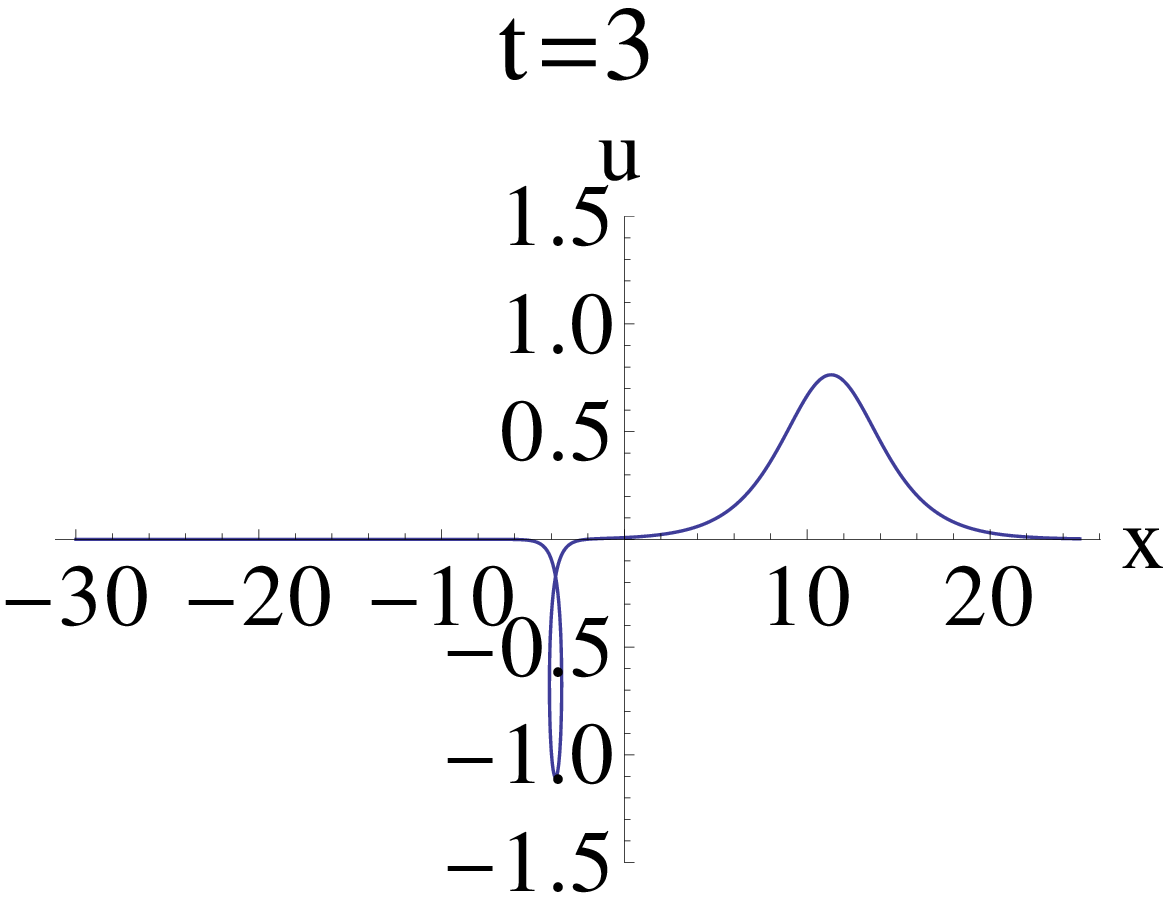}
\end{center}
\caption{Soliton and loop-soliton 
interaction. $\kappa=1$, $k_1=\frac{7}{3}$, $k_2=\frac{1}{2}$.}
\end{figure}

\section{Conclusions}

The DP equation is investigated from the point of view of 
determinant-pfaffian identities. 
We have established the reciprocal link between 
the DP equation and the pseudo 3-reduction of the  
$C_{\infty}$ two-dimensional Toda system and investigated 
the determinant-pfaffian identities 
(i.e., the relations of $\tau$-functions). 
We have shown that the $\tau$-functions of the DP equation satisfy 
the identities of determinants and pfaffians. 
The result in this article is consistent with the one obtained by
Matsuno~\cite{Matsuno-DP1,Matsuno-DP2}. 
Although we have obtained the same $\tau$-functions to Matsuno, 
we have proved several relations of the DP's $\tau$-functions 
from the point of view of determinant-pfaffian identities 
and established a formulation which 
can be applied to the problem of 
integrable discretization of the DP equation.

The result in this paper is useful for constructing 
an integrable discrete analogue of the DP equation. 
For the CH equation, we constructed an integrable discrete analogue 
of the CH equation by discretizing the determinant solutions 
and bilinear equations of the CH equation. 
For the DP equation, we can construct an integrable discrete analogue 
of the DP equation by discretizing pfaffian solutions 
and identities of pfaffians which have been given in this paper. 
We will report the detail in our forthcoming paper. 

\section*{Appendix A}
\renewcommand{\theequation}{A.\arabic{equation}}
\setcounter{equation}{0}
Let $A=(a_{i,j})_{1\leq i,j \leq 2N}$ be a $2N\times 2N$ skew-symmetric
matrix, i.e. $a_{i,j}=-a_{j,i}$. 
The pfaffian of $A$, ${\rm pf}(A)$, is defined as follows:
\begin{eqnarray}
{\rm pf}(A)&=&{\rm pf}(a_{i,j})_{1\leq i,j\leq 2N}
=\left.
\begin{array}{ccccc}
| & a_{1,2} &a_{1,3} & \cdots & a_{1,2N}\\
  &         &a_{2,3} & \cdots & a_{2,2N}\\
  &         &        & \ddots & \vdots\\
  &         &        &        & a_{2N-1,2N}\\
\end{array}
\right|\nonumber\\
&=&\sum_{\sigma}{\rm sgn}(\sigma)
\prod_{k=1}^Na_{i_{2k-1},i_{2k}}\,,
\end{eqnarray}
where the summation is taken over all permutations 
\[
\sigma=\left(
\begin{array}{cccc}
1 & 2 & \cdots & n\\
i_1 & i_2 & \cdots & i_n
\end{array}
\right)\,, 
\]
satisfying 
$i_1<i_2, i_3<i_4, \cdots , i_{2N-1}<i_{2N}$ 
and 
$i_1<i_3<\cdots <i_{2N-1}$,  
and ${\rm sgn}(\sigma)$ denotes the parity of the permutation $\sigma$. 

The pfaffian can be computed recursively by
\begin{equation}
{\rm pf}(A)=\sum_{i=2}^{2N}(-1)^ia_{1i}{\rm pf}(A_{\hat{1}\hat{i}})\,,
\end{equation}
where $A_{\hat{1}\hat{i}}$ denotes the matrix $A$ with both the first
and $i$-th rows and columns removed. 

The determinant of a skew-symmetric matrix $A$ is the square of the
pfaffian of $A$:
\begin{equation}
{\rm det}(A)=[{\rm pf}(A)]^2\,.
\end{equation}

\section*{Appendix B}
\renewcommand{\theequation}{B.\arabic{equation}}
\setcounter{equation}{0}
For a bordered determinant, we have the following
identity: 
\begin{eqnarray}
\fl &&\left|
\begin{array}{cccccc}
\alpha_{1,1} & \alpha_{1,2} & \cdots & \alpha_{1,2N-1} & \alpha_{1,2N} & a_1 \\
\alpha_{2,1} & \alpha_{2,2} & \cdots & \alpha_{2,2N-1} & \alpha_{2,2N} & a_2 \\
\vdots & \vdots  & \ddots &\vdots & \vdots &\vdots  \\
\alpha_{2N,1} & \alpha_{2N,2}  & \cdots &
 \alpha_{2N,2N-1} & \alpha_{2N,2N} & a_{2N} \\
b_1 & b_2 & \cdots & b_{2N-1} & b_{2N} & \delta 
\end{array}
\right|
= \delta^{1-2N}
{\rm det}\left|\delta \alpha_{i,j}-a_ib_j
\right|_{1\leq i,j\leq 2N}\,,\label{formula1}
\end{eqnarray}
where $\delta \neq 0$.
This  is obtained by adding the $(2N+1)$-th row multiplied by
$-a_i/\delta$ to the $i$th row. 

\section*{Appendix C}
\renewcommand{\theequation}{C.\arabic{equation}}
\setcounter{equation}{0}
Let $A=(a_{i,j})_{1\leq i,j \leq 2N+2}$ be a $(2N+2)\times (2N+2)$
skew-symmetric matrix. 
Assume $a_{2N+1,2N+2}\neq 0$.  
Adding the $(2N+2)$-th row multiplied by $a_{i,2N+1}/a_{2N+1,2N+2}$ and the $(2N+1)$-th row
multiplied by $-a_{i,2N+2}/a_{2N+1,2N+2}$ to the $i$th row, we obtain 
\begin{eqnarray*}
\fl  {\rm det}(A)&=&{\rm det}(a_{i,j})_{1\leq i,j\leq 2N+2}\\
\fl  &=& (a_{2N+1,2N+2})^{2-2N}{\rm
 det}(a_{2N+1,2N+2}\,a_{i,j}-
a_{2N+1,i}\,a_{2N+2,j}+a_{2N+2,i}\,a_{2N+1,j})\,,
\end{eqnarray*}
and 
\begin{eqnarray*}
\fl  &&{\rm pf}(A)={\rm pf}(a_{i,j})_{1\leq i,j\leq 2N+2}\\
\fl  &&\quad = (a_{2N+1,2N+2})^{1-N}{\rm
 pf}(a_{2N+1,2N+2}\,a_{i,j}-
a_{2N+1,i}\,a_{2N+2,j}+a_{2N+2,i}\,a_{2N+1,j})\,,
\end{eqnarray*}
(See e.g. \cite{Satake}.)
Thus we have the formulae
\begin{eqnarray*}
\fl &&\left|
\begin{array}{ccccccc}
0 & \alpha_{1,2} & \cdots & \alpha_{1,2N-1} & \alpha_{1,2N} & a_1 & b_1 \\
-\alpha_{1,2} & 0 & \cdots & \alpha_{2,2N-1} & \alpha_{2,2N} & a_2 & b_2\\
\vdots & \vdots  & \ddots &\vdots & \vdots &\vdots &\vdots \\
 -\alpha_{1,2N-1} & -\alpha_{2,2N-1}  & \cdots & 0 & \alpha_{2N-1,
  2N} & a_{2N-1} & b_{2N-1} \\
-\alpha_{1,2N} & -\alpha_{2,2N}  & \cdots &
 -\alpha_{2N-1,2N} & 0 & a_{2N} & b_{2N} \\
-a_1 & -a_2 & \cdots &-a_{2N-1} &-a_{2N} & 0 & \delta\\
-b_1 & -b_2 & \cdots & -b_{2N-1} & -b_{2N} & -\delta & 0
\end{array}
\right|\\
\fl &&\quad = \delta^{2-2N} 
\left|
\begin{array}{cccccc}
0 & \beta_{1,2} & \cdots & \beta_{1,2N-1} & \beta_{1,2N} \\
-\beta_{1,2} & 0 & \cdots & \beta_{2,2N-1} & \beta_{2,2N}\\
\vdots & \vdots  & \ddots &\vdots & \vdots \\
 -\beta_{1,2N-1} & -\beta_{2,2N-1}  & \cdots & 0 & \beta_{2N-1, 2N}\\
-\beta_{1,2N} & -\beta_{2,2N}  & \cdots &
 -\beta_{2N-1,2N} & 0
\end{array}
\right|\,,
\end{eqnarray*}
and 
\begin{eqnarray}
&& \left.
\begin{array}{ccccccc}
|  & \alpha_{1,2} & \alpha_{1,3} & \cdots & \alpha_{1,2N} & a_1 & b_1 \\
  &  & \alpha_{2,3} & \cdots & \alpha_{2,2N} & a_2 & b_2\\
  &  &  &\ddots & \vdots &\vdots &\vdots \\
   &  & &  & \alpha_{2N-1,
  2N} & a_{2N-1} & b_{2N-1} \\
  &   &  &
  &  & a_{2N} & b_{2N} \\
  &  &  &  &  &  & \delta\\
  &  &  &  & &  & 
\end{array}
\right|\nonumber\\
&& \qquad =
\delta^{1-N}
\left.
\begin{array}{cccccc}
| & \beta_{1,2} & \beta_{1,3} & \cdots & \beta_{1,2N} \\
 &  & \beta_{2,2} & \cdots & \beta_{2,2N}\\
&  &  &\ddots & \vdots \\
  &  & &  & \beta_{2N-1, 2N}\\
 &   &  &
 & 
\end{array}
\right|\,,\label{formula2}
\end{eqnarray}
where $\beta_{i,j}=\delta \alpha_{i,j}-a_ib_j+a_jb_i$ and $\delta\neq
0$.  

\section*{Appendix D}
\renewcommand{\theequation}{D.\arabic{equation}}
\setcounter{equation}{0}
For any determinant $A$ of order $n$, we have the Jacobi identity
\begin{equation}
A_{ij}A_{pq}-A_{iq}A_{pj}=AA_{ip,jq}\,,\label{Jacobi}
\end{equation}
where 
$A_{ij}$ is a first minor which can be obtained by deleting the $i$th row
and the $j$th column from $A$, and 
$A_{ip,jq}$ is a second minor which can be obtained by deleting the
$i$th and $p$th rows and the $j$th and $q$th columns.   
Setting $r=i=j$ and $s=p=q$, we have 
\begin{equation}
A_{rr}A_{ss}-A_{rs}A_{sr}=AA_{rs,rs}\,.
\end{equation}

Let $A$ be a skew-symmetric determinant. 
If $n$ is even, then $A_{rr}=A_{ss}=0$ and $A_{rs}=-A_{sr}$. 
Therefore we obtain
\begin{equation}
A_{rs}^2=AA_{rs,rs}\,,
\end{equation}
which leads to 
\begin{equation}
 A_{rs}={\rm pf}(A){\rm pf}(A_{rs,rs})\,. \label{det-paff-formula}
\end{equation}

If $n$ is odd, then $A=0$ and $A_{rs}=A_{sr}$. 
Therefore we obtain
\begin{equation}
A_{rs}^2=A_{rr}A_{ss}\,,
\end{equation}
which leads to 
\begin{equation}
A_{rs}={\rm pf}(A_{rr}){\rm pf}(A_{ss})\,.
\end{equation}
(See \cite{Muir,Vein-Dale}.)

Let
\begin{equation*}
\fl  A=\left|
\begin{array}{ccccccc}
0 & \alpha_{1,2}  &\cdots & \alpha_{1,2N-1} &
 \alpha_{1,2N} & a_1 & b_1 \\
-\alpha_{1,2} & 0  & \cdots & \alpha_{2,2N-1} &
 \alpha_{2,2N} & a_2 & b_2\\
\vdots & \vdots  & \ddots &\vdots & \vdots &\vdots \\
 -\alpha_{1,2N-1} & -\alpha_{2,2N-1} & \cdots & 0 &
  \alpha_{2N-1, 2N} & a_{2N-1} & b_{2N-1} \\
-\alpha_{1,2N} & -\alpha_{2,2N}  & \cdots &
 -\alpha_{2N-1,2N} & 0 & a_{2N} & b_{2N} \\
-a_1 & -a_2  & \cdots & -a_{2N-1} & -a_{2N} & 0 &\delta \\
-b_1 & -b_2  & \cdots & -b_{2N-1} & -b_{2N} & -\delta & 0 
\end{array}
\right|\,. 
\end{equation*}
Using (\ref{det-paff-formula}) and (\ref{formula2}), we obtain the formula
\begin{eqnarray}
\fl &&\left|
\begin{array}{ccccccc}
0 & \alpha_{1,2} & \alpha_{1,3} &\cdots & \alpha_{1,2N-1} & \alpha_{1,2N} & b_1 \\
-\alpha_{1,2} & 0 & \alpha_{2,3} & \cdots & \alpha_{2,2N-1} & \alpha_{2,2N} & b_2\\
-\alpha_{1,3} & -\alpha_{2,3} & 0 & \cdots & \alpha_{3,2N-1} & \alpha _{3,2N} & b_3\\
\vdots & \vdots & \ddots & \ddots &\vdots & \vdots &\vdots \\
 -\alpha_{1,2N-1} & -\alpha_{2,2N-1} & -\alpha_{3,2N-1} & \cdots & 0 & \alpha_{2N-1, 2N} & b_{2N-1} \\
-\alpha_{1,2N} & -\alpha_{2,2N} & -\alpha_{3,2N} & \cdots &
 -\alpha_{2N-1,2N} & 0 & b_{2N} \\
-a_1 & -a_2 & -a_3 & \cdots &-a_{2N-1} &-a_{2N} & \delta
\end{array}
\right|\nonumber\\
\fl &&\qquad =  {\rm pf}(\alpha_{i,j})_{1\leq i,j\leq 2N}\,\left.
\begin{array}{cccccccc}
| & \alpha_{1,2} & \alpha_{1,3} &\cdots & 
 \alpha_{1,2N} & a_1 & b_1 \\
&  & \alpha_{2,3} & \cdots  & \alpha_{2,2N} & a_2 & b_2\\
&  &  & \ddots  &\vdots & \vdots &\vdots \\
 &  & &  &   \alpha_{2N-1, 2N} & a_{2N-1} & b_{2N-1} \\
 &  & &  &        & a_{2N} & b_{2N} \\
&  &  & & &    & \delta
\end{array}
\right|\,.\label{formula3}
\end{eqnarray}

\section*{Appendix E}
\renewcommand{\theequation}{E.\arabic{equation}}
\setcounter{equation}{0}
Let
\begin{equation*}
\fl  A=
\left|
\begin{array}{ccccccc}
0 & \alpha_{1,2}  &\cdots & \alpha_{1,2N-1} &
 \alpha_{1,2N} & a_1 & b_1 \\
-\alpha_{1,2} & 0  & \cdots & \alpha_{2,2N-1} &
 \alpha_{2,2N} & a_2 & b_2\\
\vdots & \vdots  & \ddots &\vdots & \vdots &\vdots \\
 -\alpha_{1,2N-1} & -\alpha_{2,2N-1} & \cdots & 0 &
  \alpha_{2N-1, 2N} & a_{2N-1} & b_{2N-1} \\
-\alpha_{1,2N} & -\alpha_{2,2N}  & \cdots &
 -\alpha_{2N-1,2N} & 0 & a_{2N} & b_{2N} \\
c_1 & c_2  & \cdots &c_{2N-1} &c_{2N} & \alpha &\beta \\
d_1 & d_2  & \cdots &d_{2N-1} &d_{2N} & \gamma &\delta 
\end{array}
\right|\,.
\end{equation*}
Using the Jacobi identity (\ref{Jacobi}), we obtain 
\begin{eqnarray*}
&&\left|
\begin{array}{ccccccc}
0 & \alpha_{1,2}  &\cdots & \alpha_{1,2N-1} &
 \alpha_{1,2N} & a_1 & b_1 \\
-\alpha_{1,2} & 0  & \cdots & \alpha_{2,2N-1} &
 \alpha_{2,2N} & a_2 & b_2\\
\vdots & \vdots  & \ddots &\vdots & \vdots &\vdots \\
 -\alpha_{1,2N-1} & -\alpha_{2,2N-1} & \cdots & 0 &
  \alpha_{2N-1, 2N} & a_{2N-1} & b_{2N-1} \\
-\alpha_{1,2N} & -\alpha_{2,2N}  & \cdots &
 -\alpha_{2N-1,2N} & 0 & a_{2N} & b_{2N} \\
c_1 & c_2  & \cdots &c_{2N-1} &c_{2N} & \alpha &\beta \\
d_1 & d_2  & \cdots &d_{2N-1} &d_{2N} & \gamma &\delta 
\end{array}
\right|\\
&&\times 
\left|
\begin{array}{ccccc}
0 & \alpha_{1,2}  &\cdots & \alpha_{1,2N-1} &
 \alpha_{1,2N} \\
-\alpha_{1,2} & 0  & \cdots & \alpha_{2,2N-1} &
 \alpha_{2,2N} \\
\vdots & \vdots  & \ddots &\vdots  \\
 -\alpha_{1,2N-1} & -\alpha_{2,2N-1} & \cdots & 0 &
  \alpha_{2N-1, 2N}  \\
-\alpha_{1,2N} & -\alpha_{2,2N}  & \cdots &
 -\alpha_{2N-1,2N} & 0  
\end{array}
\right|\\
&&=\left|
\begin{array}{cccccc}
0 & \alpha_{1,2}  &\cdots & \alpha_{1,2N-1} &
 \alpha_{1,2N} & a_1 \\
-\alpha_{1,2} & 0  & \cdots & \alpha_{2,2N-1} &
 \alpha_{2,2N} & a_2 \\
\vdots & \vdots  & \ddots &\vdots & \vdots  \\
 -\alpha_{1,2N-1} & -\alpha_{2,2N-1} & \cdots & 0 &
  \alpha_{2N-1, 2N} & a_{2N-1}  \\
-\alpha_{1,2N} & -\alpha_{2,2N}  & \cdots &
 -\alpha_{2N-1,2N} & 0 & a_{2N}  \\
c_1 & c_2  & \cdots &c_{2N-1} &c_{2N} & \alpha 
\end{array}
\right|\\
&&\times \left|
\begin{array}{cccccc}
0 & \alpha_{1,2}  &\cdots & \alpha_{1,2N-1} &
 \alpha_{1,2N} & b_1 \\
-\alpha_{1,2} & 0  & \cdots & \alpha_{2,2N-1} &
 \alpha_{2,2N} & b_2 \\
\vdots & \vdots  & \ddots &\vdots & \vdots  \\
 -\alpha_{1,2N-1} & -\alpha_{2,2N-1} & \cdots & 0 &
  \alpha_{2N-1, 2N} & b_{2N-1}  \\
-\alpha_{1,2N} & -\alpha_{2,2N}  & \cdots &
 -\alpha_{2N-1,2N} & 0 & b_{2N}  \\
d_1 & d_2  & \cdots &d_{2N-1} &d_{2N} & \delta 
\end{array}
\right|\\
&&- \left|
\begin{array}{cccccc}
0 & \alpha_{1,2}  &\cdots & \alpha_{1,2N-1} &
 \alpha_{1,2N} & b_1 \\
-\alpha_{1,2} & 0  & \cdots & \alpha_{2,2N-1} &
 \alpha_{2,2N} & b_2 \\
\vdots & \vdots  & \ddots &\vdots & \vdots  \\
 -\alpha_{1,2N-1} & -\alpha_{2,2N-1} & \cdots & 0 &
  \alpha_{2N-1, 2N} & b_{2N-1}  \\
-\alpha_{1,2N} & -\alpha_{2,2N}  & \cdots &
 -\alpha_{2N-1,2N} & 0 & b_{2N}  \\
c_1 & c_2  & \cdots &c_{2N-1} &c_{2N} & \beta 
\end{array}
\right|\\
&&\times \left|
\begin{array}{cccccc}
0 & \alpha_{1,2}  &\cdots & \alpha_{1,2N-1} &
 \alpha_{1,2N} & a_1 \\
-\alpha_{1,2} & 0  & \cdots & \alpha_{2,2N-1} &
 \alpha_{2,2N} & a_2 \\
\vdots & \vdots  & \ddots &\vdots & \vdots  \\
 -\alpha_{1,2N-1} & -\alpha_{2,2N-1} & \cdots & 0 &
  \alpha_{2N-1, 2N} & a_{2N-1}  \\
-\alpha_{1,2N} & -\alpha_{2,2N}  & \cdots &
 -\alpha_{2N-1,2N} & 0 & a_{2N}  \\
d_1 & d_2  & \cdots &d_{2N-1} &d_{2N} & \gamma 
\end{array}
\right|\,.
\end{eqnarray*}
From (\ref{formula3}), we obtain
\begin{eqnarray}
 &&\left|
\begin{array}{ccccccc}
0 & \alpha_{1,2}  &\cdots & \alpha_{1,2N-1} &
 \alpha_{1,2N} & a_1 & b_1 \\
-\alpha_{1,2} & 0  & \cdots & \alpha_{2,2N-1} &
 \alpha_{2,2N} & a_2 & b_2\\
\vdots & \vdots  & \ddots &\vdots & \vdots &\vdots \\
 -\alpha_{1,2N-1} & -\alpha_{2,2N-1} & \cdots & 0 &
  \alpha_{2N-1, 2N} & a_{2N-1} & b_{2N-1} \\
-\alpha_{1,2N} & -\alpha_{2,2N}  & \cdots &
 -\alpha_{2N-1,2N} & 0 & a_{2N} & b_{2N} \\
c_1 & c_2  & \cdots &c_{2N-1} &c_{2N} & \alpha &\beta \\
d_1 & d_2  & \cdots &d_{2N-1} &d_{2N} & \gamma &\delta 
\end{array}
\right|\nonumber\\
 &&\quad =\left.
\begin{array}{cccccccc}
| & \alpha_{1,2} & \alpha_{1,3} &\cdots & 
 \alpha_{1,2N} & a_1 & c_1 \\
&  & \alpha_{2,3} & \cdots  & \alpha_{2,2N} & a_2 & c_2\\
&  &  & \ddots  &\vdots & \vdots &\vdots \\
 &  &  & &   \alpha_{2N-1, 2N} & a_{2N-1} & c_{2N-1} \\
 &  & &  &        & a_{2N} & c_{2N} \\
&  &  & & &    & \alpha
\end{array}
\right|\nonumber
\\
&&\quad \times 
\left.
\begin{array}{cccccccc}
| & \alpha_{1,2} & \alpha_{1,3} &\cdots & 
 \alpha_{1,2N} & b_1 & d_1 \\
&  & \alpha_{2,3} & \cdots  & \alpha_{2,2N} & b_2 & d_2\\
&  &  & \ddots  &\vdots & \vdots &\vdots \\
 &  &  & &   \alpha_{2N-1, 2N} & b_{2N-1} & d_{2N-1} \\
 &  & &  &        & b_{2N} & d_{2N} \\
&  &  & & &    & \delta
\end{array}
\right|\nonumber\\
 &&\quad -
\left.
\begin{array}{cccccccc}
| & \alpha_{1,2} & \alpha_{1,3} &\cdots & 
 \alpha_{1,2N} & b_1 & c_1 \\
&  & \alpha_{2,3} & \cdots  & \alpha_{2,2N} & b_2 & c_2\\
&  &  & \ddots  &\vdots & \vdots &\vdots \\
 &  &  & &   \alpha_{2N-1, 2N} & b_{2N-1} & c_{2N-1} \\
 &  & &  &        & b_{2N} & c_{2N} \\
&  &  & & &    & \beta
\end{array}
\right|\nonumber\\
&&\quad \times 
\left.
\begin{array}{cccccccc}
| & \alpha_{1,2} & \alpha_{1,3} &\cdots & 
 \alpha_{1,2N} & a_1 & d_1 \\
&  & \alpha_{2,3} & \cdots  & \alpha_{2,2N} & a_2 & d_2\\
&  &  & \ddots  &\vdots & \vdots &\vdots \\
 &  &  & &   \alpha_{2N-1, 2N} & a_{2N-1} & d_{2N-1} \\
 &  & &  &        & a_{2N} & d_{2N} \\
&  &  & & &    & \gamma
\end{array}
\right|\,.\label{formula4}
\end{eqnarray}

\section*{References}

\end{document}